\newcommand\marksymbol[2]{\tikz[#2,scale=1.2]\pgfuseplotmark{#1};}
\title{Uncovering Bias Mechanisms in Observational Studies}
\author{%
  Ilker Demirel\thanks{Equal contribution.} \\
  MIT CSAIL\\
  \And
  Zeshan Hussain$^{*}$ \\
  MIT CSAIL \\
  \And
  Piersilvio De Bartolomeis \\
  ETH Zurich \\
  \And
  David Sontag \\
  MIT CSAIL \\
}
\begin{document}

\maketitle

\begin{abstract}
    Observational studies are a key resource for causal inference but are often affected by systematic biases. Prior work has focused mainly on detecting these biases, via sensitivity analyses and comparisons with randomized controlled trials, or mitigating them through debiasing techniques. However, there remains a lack of methodology for uncovering the underlying mechanisms driving these biases, {\em e.g.}, whether due to hidden confounding or selection of participants. In this work, we show that the relationship between bias magnitude and the predictive performance of nuisance function estimators (in the observational study) can help distinguish among common sources of causal bias. We validate our methodology through extensive synthetic experiments and a real-world case study, demonstrating its effectiveness in revealing the mechanisms behind observed biases. Our framework offers a new lens for understanding and characterizing bias in observational studies, with practical implications for improving causal inference.
\end{abstract}

\section{Introduction}
Observational data from electronic health records and insurance claims offer compelling advantages for causal inference. They incur less cost to collect and provide broader population representation than randomized controlled trials (RCT), allowing analyses in subgroups not supported in RCTs \cite{govc2019, nice2022uk}. However, observational studies (OS) are plagued by biases that undermine their reliability, such as hidden confounding or selection bias from loss to follow-up \cite{hernan2008observational, imbens2015causal, pearl2018book, lodi2019effect}.

In order to reliably utilize OS results, it is crucial for practitioners to have a nuanced understanding of the reasons behind the bias within a study. Doing so could allow one to improve the study design or causal analysis to address the source of bias. A good example is found in \citet{dagan2021bnt162b2}, where the authors iteratively refine their covariate adjustment set via calibration with negative controls to estimate the efficacy of COVID-19 vaccinations (see \Cref{sec:dags}). To tackle this issue, we develop methodology to distinguish between a set of causal bias mechanisms commonly found in OSes.

Our approach is related to two developing lines of work\footnote{We include an extended related work section in \Cref{eq:ext_rel_work}.}: i)~benchmarking OSes against RCTs \citep{hartman2015sample, kallus2018removing, forbes2020benchmarking, hussain2022falsification, demirel2024benchmarking, de2024detecting}, and ii) recovering the causal graph using observational or interventional data \citep{shimizu2006linear, shanmugam2015learning, heinze2018causal, glymour2019review, vowels2022d, choo2022verification, squires2023causal}. While the benchmarking literature focuses on spotting biases in OSes, it does not produce insights on the reasons underlying the bias. On the other hand, the structure learning literature proposes to specify the causal graph from observed data, which would uncover the exact bias mechanism. However, this task is often challenging and requires strong assumptions, many of which are unrealistic in practice. 

Rather than attempting to recover an exact causal graph--a task that may be too ambitious within practically defensible assumptions--we focus on distinguishing between specific families of causal graphs corresponding to different bias mechanisms (see \Cref{sec:dags}), affording practitioners actionable insights to improve their study design. This task is inherently more tractable, as it does not require recovering a single graph but instead groups together several graphs that exhibit equivalent bias patterns. To that end, we begin by estimating the bias function in the OS using data from a RCT conducted on a population supported in both datasets. We then examine how the bias varies across patients as a function of the performance of predictive models\footnote{termed nuisance function estimators in the double machine learning literature, see \citet{tsiatis2006semiparametric}.} fitted on the OS data. This relationship gives rise to empirical statistics that can effectively differentiate between distinct bias mechanisms.

\paragraph{Contributions} First, we establish a comprehensive taxonomy of common causal biases, each corresponding to a distinct class of graphs (\Cref{sec:dags}). We describe a generative model for the OS in the context of these graphs motivated by clinical decision-making (\Cref{sec:GMSCM}). Next, we demonstrate a relationship between the predictive performance of the nuisance functions and the bias function in the OS (\Cref{sec:BDVTV}). To quantify this relationship, we use the covariance between the prediction error and magnitude of the bias, which enables provable discrimination of different bias mechanisms (\Cref{sec:IBMCCV}). For practical implementation, we propose consistent estimators of the covariance (\Cref{sec:AUEC}). Finally, we validate our methodology through both synthetic experiments and real-world analysis using data from Women's Health Initiative \citep{study1998design} (Sections~\ref{sec:synthetic} and \ref{sec:real-world}). 
\section{Notation and Background} 
\label{sec:back_and_mot}
Let $A$ denote a treatment action, $X$ the set of {\em measured} patient covariates at baseline, and $Y$ the observed outcome of interest. We denote by $Y^a$ the {\em potential} outcome under $A=a$. For each patient $i$, we observe only one of their {\em potential} outcomes, that is, $Y_i = Y^{A_i}$. 

We assume access to patient-level data from an RCT and an OS, and use $R = 1$ and $R = 0$ to represent the underlying RCT and  OS {\em populations}, respectively. We use $S$ to denote whether a patient was {\em selected} into the study cohort for analysis. For instance, a patient may be excluded from the analysis in an RCT if they did not adhere to their treatment assignment ({\em i.e.,} $R_i=1$ and $S_i=0$). From a large insurance claims dataset, a patient maybe selected into the OS cohort when emulating an RCT if they meet the eligibility criteria ({\em i.e.,} $R_i=0$ and $S_i=1$). We assume that $X$ is available for all patients, and that $A$ and $Y$ are available for those selected into the analysis ($S_i = 1$).

Finally, we let $U$ denote the set of {\em unmeasured} covariates that can influence the downstream variables $S$, $A$, and $Y^a$ in the OS. Such omitted variables are the reasons behind many common causal biases in observational studies, which are described in \Cref{sec:dags}. 

The causal estimand we focus on is the conditional average treatment effect (CATE), defined as
\begin{align} 
    &\text{CATE}_{\text{rct}} (x,a) \coloneqq \E[Y^a \mid X=x, R=1]. \label{eq:cate_def_rct} \\
    &\text{CATE}_{\text{os}} (x,a) \coloneqq \E[Y^a \mid X=x, R=0]. \label{eq:cate_def_os}
\end{align}
Estimating the CATE is challenging when unobserved covariates, $U$, influence treatment assignment ($A$), outcome ($Y$), and selection into the study cohort ($S$). We list below necessary conditions to identify the CATE in a population, which are often satisfied in RCTs, but can be violated in OSes.
\begin{assumption}[Internal validity of RCT] \label{asm:rct_iv}
    The following hold in the RCT ($R=1$) for all $a \in \{0,1\}$. \textit{Ignorability of selection} -- $Y^a \indep S \mid X, R = 1$. \textit{Ignorability of treatment} -- $Y^a \indep A \mid X, S, R = 1$. \textit{Positivity} -- $P(S=1, A=a \mid X, R = 1) > 0$.
\end{assumption}
Under \Cref{asm:rct_iv}, $\text{CATE}_{\text{rct}} (x,a)$ in \eqref{eq:cate_def_rct} can be identified as 
\begin{align}
    g_a (X) \coloneqq \E[Y \mid X, R=1, S=1, A=a] = \text{CATE}_{\text{rct}} (x,a) , \label{eq:gdef}
\end{align}
which can be estimated from the RCT cohort. The outcome model in the OS population is defined similarly below.
\begin{equation} \label{eq:fdef}
    f_a(X) \coloneqq \E[Y \mid X, R=0, S=1, A=a] \neq \text{CATE}_{\text{os}} (x,a),
\end{equation}
which in general is not a valid identification of $\text{CATE}_{\text{os}} (x,a)$. We define the bias function in the OS,
\begin{align}
    b_a(x) 
    &\coloneqq g_a(x) - f_a(x) = \underbrace{\text{CATE}_{\text{rct}} (x,a) - \text{CATE}_{\text{os}} (x,a)}_{\eqqcolon~\text{Transportability bias}} + \underbrace{\text{CATE}_{\text{os}} (x,a) - f_a(x) }_{\eqqcolon~\text{Internal bias in the OS}}. \label{eq:bias}
\end{align}
There could be various mechanisms underlying the bias, which cannot be inferred from $b_a (x)$ alone. To fill this gap, we develop methodology to uncover which mechanism could be driving the bias. To that end, we first give a taxonomy of common bias mechanisms in the next section.

\section{A Taxonomy of Common Causal Biases} \label{sec:dags}
\begin{figure}[t]
    \centering
    \hfill
    \begin{subfigure}{0.24\textwidth}
        \centering
        \begin{tikzpicture}[scale=0.7]
            
            \node[state] (x) {$X$};
            \node[state,red,dashed] (u) [right =of x, xshift=-0.6cm] {$U$};
            
            \node[state] (a) [below left =of x, yshift=0.4cm, xshift=0.8cm] {$A$};
            \node[state] (y) [below right=of x, yshift=0.4cm, xshift=-0.8cm] {$Y$};
            \node[state] (s) [right =of y, xshift=-0.6cm] {$S$};

            \node[anchor=south west] at (-1.6,0.6) {\small \textcolor{black}{$P(U|R\!=\!0)\!\neq\!P(U|R\!=\!1)$}};

            \path (x) edge  (a);
            \path (x) edge  (y);
            \path (x) edge  (s);
            \path (a) edge  (y);

            \path (u) edge[red]  (y);
            \path (a) edge[white,bend right]  (s);
        \end{tikzpicture}
        \caption{Transportability Bias.}
        \label{fig:transportability_bias}
    \end{subfigure}%
    \hfill
    \begin{subfigure}{0.24\textwidth}
        \centering
        \begin{tikzpicture}[scale=0.7]
            
            \node[state] (x) {$X$};
            \node[state,red,dashed] (u) [right =of x, xshift=-0.6cm] {$U$};
            
            \node[state] (a) [below left =of x, yshift=0.4cm, xshift=0.8cm] {$A$};
            \node[state] (y) [below right=of x, yshift=0.4cm, xshift=-0.8cm] {$Y$};
            \node[state] (s) [right =of y, xshift=-0.6cm] {$S$};

            \path (x) edge  (a);
            \path (x) edge  (y);
            \path (x) edge  (s);
            \path (a) edge  (y);

            \path (u) edge[red]  (y);
            \path (u) edge[red]  (a);
            \path (a) edge[white,bend right]  (s);
        \end{tikzpicture}
        \caption{Confounding Bias}
        \label{fig:confounding_bias}
    \end{subfigure}%
    \hfill
    \begin{subfigure}{0.24\textwidth}
        \centering
        \begin{tikzpicture}[scale=0.7]
            
            \node[state] (x) {$X$};
            \node[state,red,dashed] (u) [right =of x, xshift=-0.6cm] {$U$};
            
            \node[state] (a) [below left =of x, yshift=0.4cm, xshift=0.8cm] {$A$};
            \node[state] (y) [below right=of x, yshift=0.4cm, xshift=-0.8cm] {$Y$};
            \node[state] (s) [right =of y, xshift=-0.6cm] {$S$};

            \path (x) edge  (a);
            \path (x) edge  (y);
            \path (x) edge  (s);
            \path (a) edge  (y);

            \path (u) edge[red]  (y);
            \path (u) edge[red]  (s);
            \path (a) edge[white,bend right]  (s);
        \end{tikzpicture}
        \caption{Selection Bias 1}
        \label{fig:selection_bias_1}
    \end{subfigure}%
    \hfill
    \begin{subfigure}{0.24\textwidth}
        \centering
            \begin{tikzpicture}[scale=0.7]
            
            \node[state] (x) {$X$};
            
            \node[state] (a) [below left =of x, yshift=0.4cm, xshift=0.8cm] {$A$};
            \node[state] (y) [below right=of x, yshift=0.4cm, xshift=-0.8cm] {$Y$};
            \node[state] (s) [right =of y, xshift=-0.6cm] {$S$};

            \path (x) edge  (a);
            \path (x) edge  (y);
            \path (x) edge  (s);
            \path (a) edge  (y);

            \path (a) edge[white,bend right]  (s);

            \path (y) edge[red]  (s);
            \path (a) edge[red,bend right]  (s);
        \end{tikzpicture}
        \caption{Selection Bias 2}
        \label{fig:selection_bias_2}
    \end{subfigure}%
    \caption{Graphs of common biases in observational studies. $U$ denotes unmeasured covariates.}
    \label{fig:bias_typs}
    \vspace{-10pt}
\end{figure}
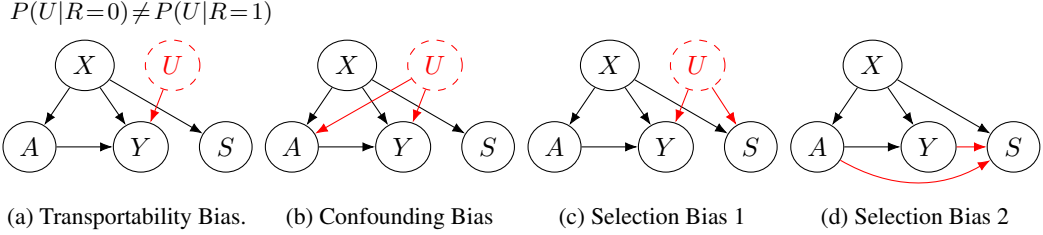
We focus on four types of common biases depicted in \Cref{fig:bias_typs}. Directed edges imply a causal relation, {\em e.g.}, intervening on $X$ directly induces a change in $Y$. We bucket the bias mechanisms into two categories. The first category involves an {\em unobserved} covariate, $U$, affecting a subset of downstream variables. The second category involves \enquote{collider} bias, where conditioning on a collider variable induces a spurious association between the treatment and outcome. Each of these biases represents a family of graphs ({\em e.g.}, see \Cref{sec:add-dags-t2sb}). We give an overview of these biases with real-world examples before developing methods to distinguish between them.
\subsection{Category 1: Effect of Unobserved Covariates}
\paragraph{Transportability Bias} Even when the OS has internal validity ($f_a(X) = \text{CATE}_{\text{os}} (x,a)$), bias may persist ($b_a(X) \neq 0$) due to poor transportability between the RCT and OS. Transportability bias arises when an unobserved covariate $U$ affects outcomes $Y^a$ and has different distributions across two populations (Figure \ref{fig:transportability_bias}). This effect can be observed as follows. Note that,
\begin{equation*}
    P(Y^a \mid X, R) =\sum\nolimits_{u} P(Y^a \mid X, U=u, R)  P(U=u \mid X, R).
\end{equation*}
When $P(U|X, R=0) \neq P(U | X, R=1)$, we have $\text{CATE}_{\text{os}} (x,a) \neq \text{CATE}_{\text{rct}} (x,a)$ and $b_a(X) \neq 0$ in general. 
For example, age could be an effect modifier, whereby a chemotherapy for breast cancer may have a larger effect on younger women (studied in the RCT) compared to older women (studied in the OS) \cite{ring2021bridging}. If age is not properly accounted for, there will be transportability bias.
\paragraph{Confounding Bias} Confounding bias is a core impediment to causal inference from observational data. When covariates $X$ miss factors that affect both the treatment assignment $A$ and outcomes $Y$ (Figure \ref{fig:confounding_bias}), the ``ignorability of treatment'' condition in \Cref{asm:rct_iv} is violated in the OS ($R=0$). This violation implies $f_a(X)\!\neq\!\text{CATE}_{\text{os}} (x,a)$, leading to non-zero bias $b_a(X)$.

To illustrate, consider COVID-19 vaccination studies: patients with higher socioeconomic status (SES) may be more likely to both receive vaccines and maintain preventive health behaviors. When observational data inadequately captures SES variations (unobserved confounder $U$), researchers risk introducing confounding bias that inflates vaccine effectiveness estimates \cite{dagan2021bnt162b2}.
\paragraph{Selection Bias Type 1} Selection bias emerges when the inclusion criteria lead to a sample that is unrepresentative of the underlying population. We examine two graphs capturing diverse selection mechanisms. The first type (\Cref{fig:selection_bias_1}) arises when an unmeasured covariate $U$ affects the selection $S$ and outcome $Y$, violating \enquote{ignorability of selection} in \Cref{asm:rct_iv} in the OS ($R=0$). 

Excluding patients who were lost-to-follow-up from the analysis is one way this type of bias can occur. As an example, \citet{hernan2004structural} give antiretroviral therapy for preventing AIDS in HIV-infected patients, where the true unmeasured immunosuppression level, $U$, affects both the AIDS risk ($Y$), and follow-up ($S$) through adverse side effects.

As another example, consider studies where participants opt-in to share their health data via a smart watch. Individuals with healthier lifestyles, such as a balanced diet, $U$, might be more likely to participate ($S=1$). Since these individuals typically experience lower heart failure (HF) rates, $Y$, the study may underestimate the interventions' benefits (e.g., regular jogging) for reducing HF rates in the general population.
\subsection{Category 2: Conditioning on Colliders}
\label{sec:sbt2-cc}
\paragraph{Selection Bias Type 2}
This bias (\Cref{fig:selection_bias_2}) arises when selection is affected by the treatment assignment and outcome, and it is particularly elusive \cite{holmberg2022collider,lipsky2022causal}. Initial analyses of the Women's Health Initiative OS showed protective effects of post-menopausal combined hormonal therapy against coronary heart disease and stroke \citep{barrett1998hormone}. However, subsequent analysis of the RCT component revealed increased risk \citep{manson2003estrogen}. The discrepancy stemmed from selective enrollment: patients who initiated treatment pre-enrollment and continued without early adverse events were preferentially selected ($A \rightarrow S$), while early events precluded selection into the treatment group ($Y \rightarrow S$), deflating event rates in the treatment cohort \citep{prentice2005combined}.
\section{Uncovering Bias Mechanisms via Alignment with Predictive Performance} \label{sec:ja_wnf}
In this section, we present our core methodology for distinguishing between the bias mechanisms outlined earlier and illustrated in \Cref{fig:bias_typs}, under clinically motivated data-generating assumptions. We begin with a clinical example that motivates the data-generating process and lay out the key assumptions in \Cref{sec:GMSCM}. Next, in \Cref{sec:BDVTV}, we develop statistics that can uniquely characterize the bias mechanisms under investigation and derive formal guarantees in Sections \ref{sec:IBMCCV} and \ref{sec:T2SB}. Finally, in \Cref{sec:AUEC}, we develop consistent estimators of those statistics.

Although the following sections introduce assumptions that facilitate sharp theoretical analysis, these conditions are not required for our methods to be applicable in practice. For analytical tractability, we assume a single binary unmeasured confounder $U$. However, core insights from our results extend naturally to more complex scenarios. Indeed, our synthetic experiments include settings where $U$ is continuous, and we find that the empirical results closely align with our theoretical predictions (see \Cref{app:moresyn}). Similarly, our real-world case study in \Cref{sec:real-world} likely involves multiple, continuous unmeasured confounders, yet our methodology provides clinically meaningful insights—highlighting its robustness and utility beyond the theoretical setting.

\subsection{Clinically Motivated Generative Model in Observational Data} \label{sec:GMSCM}

\begin{wrapfigure}{r}{0.5\textwidth}
\vspace{-1.15em} 
\begin{minipage}{0.48\textwidth}
  \hrule
  \hrule
  \vspace{.5em}
  \textbf{Algorithm 1:} Generative Model in the OS \\
  \vspace{-.6em}
  \hrule
  \begin{algorithmic}
        \STATE \textbf{Input:} $\orange{T} \in \{ S,A,Y^0, Y^1 \}$; Boolean \orange{$U$-bias}; \\ Bernoulli Parameter Distribution \orange{${\cal F} (p)$}.
        
        \hrulefill
        \STATE Let $\orange{p^{T}_{x, u}} \coloneqq P(T\!=\!1 | X\!=\!x, U\!=\!u, R\!=\!0)$
        \FOR{$x \in {\cal X}$}
            \STATE $p^{T}_{x, u=0} \sim {\cal F}(p)$
                \IF{$U$-bias}
                    \STATE $p^{T}_{x, u=1} \sim {\cal F}(p)$
                \ELSE
                    \STATE $p^{T}_{x, u=1} = p^{T}_{x, u=0}$
                \ENDIF
        \ENDFOR
  \end{algorithmic}
  \hrule
  \hrule
\end{minipage}
\vspace{-1.5em}
\end{wrapfigure}
We consider a binary treatment $A$, outcome $Y$, unmeasured covariate $U$, and a general covariate set $X \in {\cal X}$ which can include both categorical and continuous features. The generative model for the downstream variables in the OS is given in Algorithm 1. If $U$-bias is \texttt{False} for a $T \in \{S,A,Y^0,Y^1\}$, then $U$ has no residual effect on $T$ after conditioning on $X$.

When $U$-bias is \texttt{True}, $U$ has a residual effect on $T$, even after conditioning on $X$. However, the significance of this effect can vary for different values of $X$, as we discuss next.

\paragraph{Magnitude of $U$'s effect varies across $X$.} 
The influence of an unmeasured covariate $U$ on a downstream variable $T \in \{S,A,Y^0,Y^1\}$ varies across patient types $X=x$. Specifically, its effect is small when $p^T_{x,u=1} \approx p^T_{x,u=0}$, and large otherwise (see Algorithm~1). This general property, often overlooked in practice, forms the foundation of our methods as we discuss in detail in \Cref{sec:BDVTV}. We illustrate this property with an example from clinical practice.

\paragraph{Clinical diagnostic reasoning reveals \textit{contextual independencies}.} When a patient presents, the clinician follows a systematic approach, beginning with their history and symptoms, followed by a set of assessments and diagnostic tests. This process often involves evaluating a set of clinical findings and test results simultaneously. Based on these initial findings ({\em i.e.,} context $X$), distinct diagnostic pathways emerge, where the importance of subsequent clinical variables may shift dramatically.

For example, consider the evaluation of thyroid nodules. The initial assessment includes demographics, medical history, physical examination, and thyroid function tests. In a young male patient with normal thyroid function and a solitary thyroid nodule, a family history ($U$) of thyroid cancer is critical in guiding immediate fine-needle aspiration, while menopausal status is irrelevant. Conversely, in a post-menopausal female patient with the same nodule but a history of radiation exposure, the family history ($U$), becomes relatively less influential in the diagnosis, as they already are in the high-risk group requiring a biopsy. This demonstrates how patient demographics and history create distinct contexts that render certain clinical variables more or less relevant to the decision-making.

\begin{wrapfigure}{l}{0.53\linewidth}
    \vspace{-15pt}
    \centering
    \includegraphics[width=\linewidth]{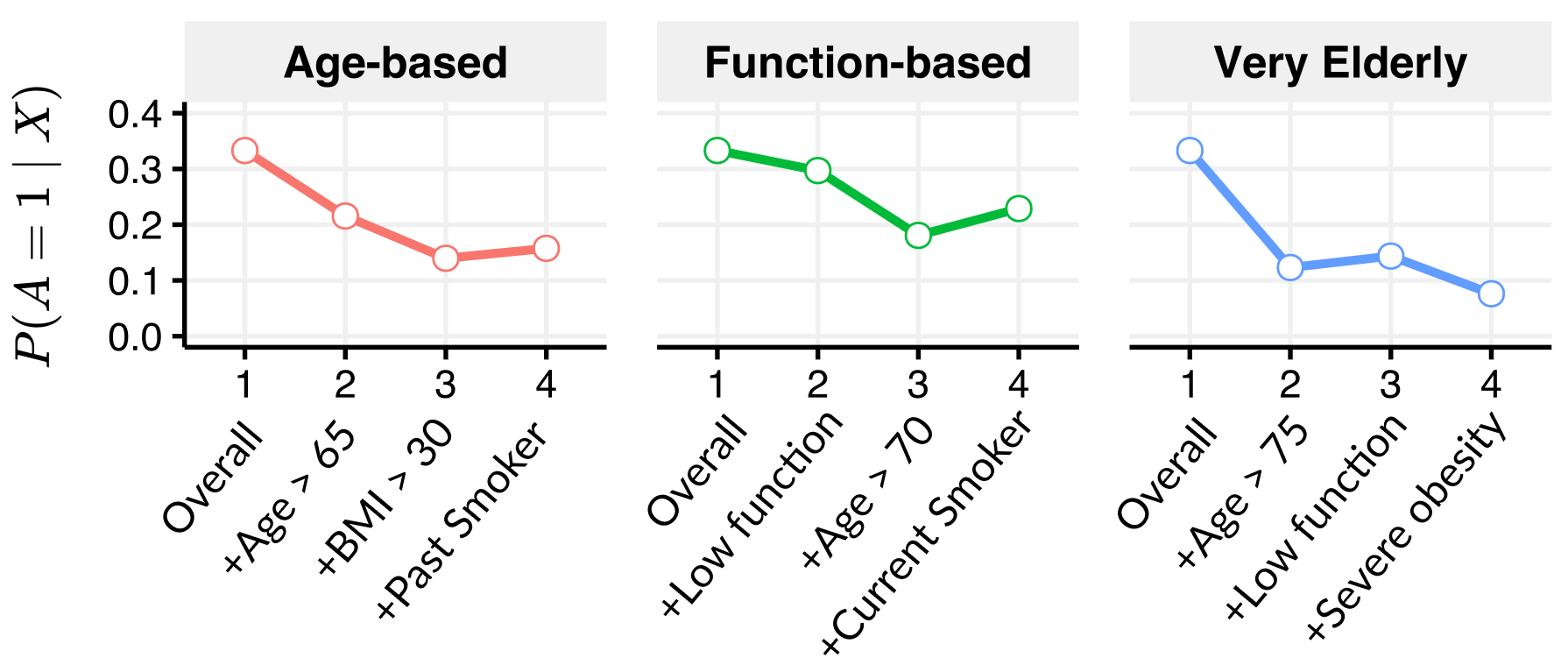}
    \caption{\small Evaluation of treatment assignment probability in three different subgroups in the WHI data \cite{study1998design}. As the conditioning set is expanded and the patient context is better specified, uncertainty in the decisions tend to decrease.}
    \label{fig:whi_support}
    \vspace{-10pt}
\end{wrapfigure}

\paragraph{Conditioning induces low uncertainty.}
The vignette above reveals another key property of clinical decision-making: as more information about the patient is gathered, the uncertainty in downstream variables decreases. In our example, for a young man with a thyroid nodule and suspicious features on ultrasound, the probability of getting a fine-needle aspiration (treatment $A$) is high. In other words, variance in management is low. \Cref{fig:whi_support} illustrates this pattern in real-world data, for which we conduct detailed analyses in \Cref{sec:real-world} and \Cref{sec:whi-appendix}.

Thus, to model the \enquote{informed} nature of clinical decision-making, we focus on the scenarios where $P(T=1 \mid X=x,U=u,R=0)$ in Algorithm~1 is {\em away} from $0.5$. To that end, we define,
\begin{equation} \label{eq:ludb}
    {\cal F} (p) \coloneqq \texttt{Uniform} \big([0.1, p] \cup [1-p, 0.9]\big),\quad p \in (0.1, 0.5].  
\end{equation}
which is a uniform distribution over the union of a {\em low} and a {\em high} range. This ensures that, given the context provided by $X$ and $U$, the uncertainty in downstream variables $T \in \{S,A,Y^0,Y^1\}$ is low upon conditioning on additional patient context. \footnote{Entropy of a Bernoulli random variable $B$, which can be thought of as a measure of its \enquote{uncertainty}, is maximized at $P(B = 1) = 0.5$, and shrinks toward $0$ or $1$.} Next, we define a {\em family} of distributions
\begin{equation} \label{eq:ddef}
    {\cal F} \coloneqq \{ {\cal F} (p) \mid p \in (0.1, 0.5] \}.
\end{equation}
Note that ${\cal F}$ is quite rich and also includes $\texttt{Uniform} ([0.1, 0.9])$ for $p =0.5$ ({\em i.e.}, the whole interval). We fix $0.1$ and $0.9$ as lower and upper bounds to ensure {\em positivity}, which is a central assumption in causal inference. Finally, we state some additional assumptions of our setting, which are also reflected in Figures~\ref{fig:transportability_bias}-\ref{fig:selection_bias_1} (\Cref{fig:selection_bias_2} is studied separately in \Cref{sec:T2SB}). 
\begin{assumption} \label{asm:weak_ex} The following hold for all $r,s,a,u \in \{0,1\}$.
        \textit{Exogeneity of unmeasured covariates} -- $X \indep U \mid R$.
        \textit{Weak transportability} -- $Y^a \indep R \mid X, U$.
        \textit{Weak ignorability} -- $Y^a \indep S \indep A \mid X, U, R$.
        \textit{Positivity} -- $P(R=r,S=s,A=a,U=u \mid X) > 0$.
\end{assumption}
The prefix \enquote{weak} in the transportability and ignorability conditions indicates that they hold only when conditioned on the unmeasured covariate $U$. 
\subsection{Variance in Downstream Variables Drives the Bias} \label{sec:BDVTV}
\paragraph{DAGs miss contextual independencies.}
While DAGs, such as those in \Cref{fig:bias_typs}, are useful in understanding how causal biases emerge, they paint an incomplete picture. Specifically, they do not capture {\em contextual} independencies. To make the point concrete, consider \Cref{fig:confounding_bias} where $U$ influences the treatment assignment $A$ {\em in general}. However, we might have
\begin{equation}
    A \indep U \mid X=x_m,R = 0 \quad \text{whereas} \quad A \not\!\perp\!\!\!\perp U \mid X=x_n,R = 0.\label{eq:coin2}
\end{equation}
That is, $U$ affects $A$ only for {\em some} patients, $X=x_m$, but not others, $X=x_n$ (as in our clinical vignette from the previous section). We discuss a direct result of this observation next.
\paragraph{The bias in the OS varies across patients.} 
Contextual independence of $U$ and $T \in \{S,A,Y^0,Y^1\}$ leads to a key observation: the magnitude of bias in the OS will vary across patient types. To illustrate, consider \Cref{fig:confounding_bias} (confounding bias) and assume that $U$ has an effect on $Y^1$ for all $X=x$. That is, $Y^1 \not\!\perp\!\!\!\perp U \mid X, R=0$. We have, together with \eqref{eq:coin2},
\begin{equation}
    Y^1 \indep A \mid X=x_m,R = 0 \quad \text{whereas} \quad Y^1 \not\!\perp\!\!\!\perp A \mid X=x_n,R = 0  \label{eq:coin4}
\end{equation}
since the \enquote{ignorability of treatment} condition in \Cref{asm:rct_iv} holds for $X = x_m$ but not for $X= x_n$ in the OS ($R = 0$). One then has, $f_1(x_m) = \text{CATE}_{\text{os}} (x_m,A\!=\!1)$ but $f_1(x_n) \neq \text{CATE}_{\text{os}} (x_n,A\!=\!1)$. Consequently, when transportability is not an issue, $b_1(x_m) = 0$ but $b_1(x_n) \neq 0$.

More generally, for patient profiles $X=x$ where the effect of $U$ on $T \in \{S,A,Y^0,Y^1\}$ is stronger, we will observe two key phenomena: first, the magnitude of the bias, $\abs{b(X=x)}$, shall become larger. Second, these high-bias regions also exhibit greater variability in 
$T$, since the influence of $U$ introduces additional unexplained heterogeneity.

For instance, when bias is due to an unmeasured confounder as in \Cref{fig:confounding_bias}, we expect {\em larger variances} in $A$ and $Y$ in {\em severely biased regions}. One can formalize this phenomenon by examining the {\em covariances} of the absolute bias and conditional variance functions: $\Co \big(\abs{b_1(X)}, \V (A | X, S\!=\!1) \big)$ and $\Co \big(\abs{b_1(X)}, \V (Y | X,S\!=\!1,A\!=\!1) \big)$. 

In the next section, we formalize how such covariances between the bias function and conditional variances of various downstream variables can provably distinguish different bias mechanisms.
\subsection{Covariance of the Bias and Conditional Variances} \label{sec:IBMCCV}
We analyze the covariance between the magnitude of bias, $\abs{b_a(X)}$, and the conditional variances of $S$, $A$, and $Y$, under different bias mechanisms in \Cref{fig:bias_typs}. Our main result is that covariance signals construct a \enquote{hash table}, allowing one to differentiate between the bias mechanisms (\Cref{table:bias_table}). 

For simplicity, we focus on the treatment group, $A=1$, and define 
\begin{align}
    p^{T}_{u} &\coloneqq P(T = 1 \mid X, U=u, R=0), \quad T \in \{S,A,Y^1\}. \label{eq:def3} \\ 
    p^{U}_{r} &\coloneqq P(U = 1 \mid R=r). \label{eq:def4}
\end{align}
Below we derive expressions for the bias, $b_1(X)$, under different bias mechanisms in \Cref{fig:bias_typs}.
\begin{restatable}[Quantifying the Bias]{lemma}{biasexpressions}
\label{lemma:biasexpressions}
Under Assumptions \ref{asm:rct_iv} and \ref{asm:weak_ex}, we have the following.
\begin{enumerate}[leftmargin=7mm, noitemsep]
    \item \textbf{Transportability Bias}-- Consider \Cref{fig:transportability_bias} where $S \indep A \indep U \mid X, R=0$. We have,
    \begin{align} \label{eq:trbiaslem}
        b_1(X) = ( p^{U}_{r=1} - p^{U}_{r=0} ) ( p^{Y}_{u=1} - p^{Y}_{u=0} ).
    \end{align}
    \item \textbf{Confounding Bias}-- Consider \Cref{fig:confounding_bias} where $S \indep A, U \mid X, R=0$ and assume that $P(U=1 \mid R=1) = P(U=1 \mid R=0) = 1/2$ in the RCT and OS. We have,
    \begin{align} \label{eq:conbiaslem}
        b_1(X) = ( p^{Y}_{u=1} - p^{Y}_{u=0} )( p^{A}_{u=1} - p^{A}_{u=0} ) / 2 ( p^{A}_{u=1} + p^{A}_{u=0} ).
    \end{align}
    \item \textbf{Selection Bias, Type 1}-- Consider \Cref{fig:selection_bias_1} where $A \indep S, U \mid X, R=0$ and assume that $P(U=1 \mid R=1) = P(U=1 \mid R=0) = 1/2$ in the RCT and OS. We have,
    \begin{align} \label{eq:selbiaslem}
        b_1(X) = ( p^{Y}_{u=1} - p^{Y}_{u=0} ) ( p^{S}_{u=1} - p^{S}_{u=0} ) / 2 ( p^{S}_{u=1} + p^{S}_{u=0} ).
    \end{align}
\end{enumerate}
\end{restatable}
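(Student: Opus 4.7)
The plan is to prove each of the three identities by marginalizing over the unmeasured covariate $U$ via the law of total expectation, then applying the conditional independence statements in \Cref{asm:weak_ex} together with the graph-specific assumptions. Under RCT internal validity (\Cref{asm:rct_iv}) we have $g_1(X) = \E[Y^1 \mid X, R=1]$, and expanding over $U$ while using weak transportability $Y^1 \indep R \mid X, U$ and exogeneity $U \indep X \mid R$ produces the clean form $g_1(X) = p^Y_{u=0}(1-p^U_{r=1}) + p^Y_{u=1}\, p^U_{r=1}$. For the OS side, $f_1(X) = \sum_u p^Y_u\, P(U=u \mid X, R=0, S=1, A=1)$, where weak ignorability $Y^1 \indep S, A \mid X, U, R$ has already collapsed the inner conditional expectation of $Y$ to $p^Y_u$. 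The entire argument then reduces to computing the posterior $P(U \mid X, R=0, S=1, A=1)$ under each graph.

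For the transportability bias case, the assumption $S \indep A \indep U \mid X, R=0$ renders conditioning on $(S=1, A=1)$ uninformative about $U$, so the posterior equals $P(U=u \mid R=0) = p^U_{r=0}$ by exogeneity. The cross terms in $p^Y_u$ cancel when forming $g_1 - f_1$, leaving \eqref{eq:trbiaslem} as the product of the $R$-contrast on $U$ and the $U$-contrast on $Y^1$.

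For the confounding and selection-bias-type-1 cases, the extra assumption $p^U_{r=1} = p^U_{r=0} = 1/2$ kills the transportability contribution and yields $g_1(X) = (p^Y_{u=0} + p^Y_{u=1})/2$. The substantive step is a single Bayes' rule computation for the posterior in the OS. For confounding, $S \indep (A, U) \mid X, R=0$ lets one drop $S=1$ from the conditioning; the uniform prior on $U$ then gives $P(U=u \mid X, R=0, A=1) = p^A_u / (p^A_{u=0} + p^A_{u=1})$. Substituting into $f_1(X)$ and simplifying $g_1 - f_1$ collapses the difference of two rationals into a single rational whose numerator factors as a product of differences, matching \eqref{eq:conbiaslem} up to sign. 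The selection-bias-type-1 derivation is verbatim with $A$ and $S$ interchanged, using $A \indep (S, U) \mid X, R=0$ to obtain $P(U=u \mid X, R=0, S=1) = p^S_u / (p^S_{u=0} + p^S_{u=1})$ and landing on \eqref{eq:selbiaslem}.

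The main bookkeeping obstacle is the algebraic step that collapses the difference of two rational expressions into a single product of contrasts; this relies crucially on the uniform-prior assumption $p^U_r = 1/2$ to keep the common denominator equal to a sum of two $p^A_u$ (respectively $p^S_u$) terms rather than generating further cross terms. A minor care point is tracking the overall sign of the factorization, which however has no effect on the downstream analysis in \Cref{sec:IBMCCV} since it only uses $\abs{b_1(X)}$. Aside from that one factorization, every step is either a direct invocation of a conditional independence from \Cref{asm:weak_ex}, an application of exogeneity to drop $X$ from $P(U \mid X, R)$, or a single use of Bayes' rule with the balanced prior.
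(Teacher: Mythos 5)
Your proposal is correct and follows essentially the same route as the paper: expand $g_1$ and $f_1$ over $U$ using weak ignorability, weak transportability, and exogeneity, then reduce each case to the posterior $P(U \mid X, S=1, A=1, R=0)$ under the graph-specific independencies and the balanced prior — the paper packages exactly this computation as \Cref{lemma:useful1} followed by the same cancellations and Bayes-type step. Your ``up to sign'' caveat in the confounding and selection cases actually reflects a sign discrepancy internal to the paper itself (the derived expressions \eqref{eq:cb_bex} and \eqref{eq:sbt1_bex} carry the factor $p^{Y}_{u=0} - p^{Y}_{u=1}$ whereas \eqref{eq:conbiaslem} and \eqref{eq:selbiaslem} state $p^{Y}_{u=1} - p^{Y}_{u=0}$), and, as you correctly note, this is immaterial downstream since only $\abs{b_1(X)}$ enters the covariance analysis.
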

The results are intuitive. In \eqref{eq:trbiaslem}, bias is larger if $U$'s distribution in the RCT and OS differ significantly and $p^{U}_{r=1} - p^{U}_{r=0}$ is big (weak transportability). The contrast between the outcomes for different values of $U$, {\em i.e.} $p^{Y}_{u=1} - p^{Y}_{u=0}$, drives the bias same way. For instance, if $U$ does not have any effect on $Y$ conditioned on $X\!=\!x$, we have $b_1(x)\!=\!0$. The confounding and selection biases in \eqref{eq:conbiaslem} and (\ref{eq:selbiaslem}) react similarly to dependencies between $U$ and $S$, $A$, $Y$.

Building on \Cref{lemma:biasexpressions}, we derive covariance relations between $b_1(X)$ and the conditional variances of downstream variables $A,S,Y$. Let us first introduce a shorthand notation for the covariance signals we are interested in.

\begin{restatable}[Covariance signals]{definition}{CovarianceSignals}
\label{def:covs}
    \begin{align}
        &\bar{\rho} (b_1, S) \coloneqq \Co \big(\abs{b_1(X)},\V(S \mid X, R=0)\big) \nonumber\\
        &\bar{\rho} (b_1, A) \coloneqq \Co \big(\abs{b_1(X)},\V(A \mid X, S=1, R=0)\big) \nonumber\\
        &\bar{\rho} (b_1, Y)  \coloneqq \Co \big(\abs{b_1(X)},\V(Y \mid X, S=1, A=1, R=0)\big)\nonumber
    \end{align}
\end{restatable}

\begin{restatable}[]{theorem}{covarianceresults}
\label{theorem:covarianceresults}
Suppose that Assumptions~\ref{asm:rct_iv} and \ref{asm:weak_ex} hold, and that downstream variables $T \in \{S,A,Y^0,Y^1\}$ are sampled according to Algorithm~1 for some ${\cal F}(p) \in {\cal F}$. For confounding and type-1 selection biases (Figures~\ref{fig:confounding_bias}, \ref{fig:selection_bias_1}), assume that $p_r^U = 1/2$ (see \Cref{eq:def4}) for simplicity. For transportability bias (\Cref{fig:transportability_bias}), let $p_{r=0}^U \neq p_{r=1}^U$ and $p_r^U \sim {\cal F}(p)$. Then, these three bias mechanisms are uniquely characterized by the covariance signals (see \Cref{def:covs}) as in \Cref{table:bias_table}.
\end{restatable}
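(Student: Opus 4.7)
The plan is to verify \Cref{table:bias_table} entry by entry by combining the closed-form expressions for $b_1(X)$ in \Cref{lemma:biasexpressions} with explicit forms for $\V(T \mid X, \dots, R{=}0)$ derived from Algorithm~1. For each downstream variable $T \in \{S,A,Y\}$, whenever the DAG of the mechanism in question has no arrow $U \to T$, Algorithm~1 forces $p^T_{u=0} = p^T_{u=1}$ and hence $\V(T \mid X, \dots) = p^T(1-p^T)$ with $p^T \sim \mathcal{F}(p)$ drawn independently of all other Bernoulli parameters. Whenever $U \to T$ is present, I would use the weak ignorability relations of \Cref{asm:weak_ex} to drop the conditioning on $S{=}1$ or $A{=}1$ (the choice $p^U_r = 1/2$ keeps the posterior on $U$ clean), so $\V(T \mid X, \dots) = q(1-q)$ for an explicit convex combination $q$ of $p^T_{u=0}$ and $p^T_{u=1}$.

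For the zero entries of the table I would inspect which Bernoulli parameters enter $\abs{b_1(X)}$ versus $\V(T \mid X, \dots)$ for each (mechanism, signal) pair. Whenever the two parameter sets are disjoint, the corresponding samples are independent by construction in Algorithm~1 and the covariance vanishes. This one-shot observation fills in all the zero entries: in transportability bias only $p^Y_u$ drives the bias, so $\bar{\rho}(b_1, S) = \bar{\rho}(b_1, A) = 0$; in confounding bias the $S$-parameters are fresh, so $\bar{\rho}(b_1, S) = 0$; in type-1 selection bias the $A$-parameters are fresh, so $\bar{\rho}(b_1, A) = 0$.

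For the non-zero entries I would first pull out multiplicative factors that are independent of the $T$-parameters (e.g., $\E\abs{p^Y_{u=1} - p^Y_{u=0}}$ factors out of the confounding $A$-signal), reducing each case to a two-variable covariance $\Co\bigl(h(p^T_{u=0}, p^T_{u=1}),\, k(p^T_{u=0}, p^T_{u=1})\bigr)$, where $h$ grows with $\abs{p^T_{u=1} - p^T_{u=0}}$ and $k$ is the total-variance polynomial in $\bar{p}^T = (p^T_{u=0} + p^T_{u=1})/2$. Switching to coordinates $u = (p^T_{u=0}+p^T_{u=1})/2$ and $v = (p^T_{u=1}-p^T_{u=0})/2$ and exploiting the $u \mapsto 1-u$ reflection symmetry of $\mathcal{F}(p)$, the cross terms linear in $u$ cancel, and the sign of the covariance is determined by a residual of the form $\Co(\abs{v}, (u-1/2)^2)$ or $\Co(\abs{v}/u, u(1-u))$.

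The main obstacle is showing this residual is strictly non-zero: the symmetry of $\mathcal{F}(p)$ about $1/2$ already kills several natural terms, and one must verify that no further accidental cancellation occurs. I would handle this by leveraging the two-interval structure $\mathcal{F}(p) = \texttt{Uniform}([0.1,p] \cup [1-p, 0.9])$: pairs drawn from the same sub-interval have small $\abs{v}$ together with $|u-1/2|$ bounded away from $0$, while pairs drawn across sub-intervals have large $\abs{v}$ together with $|u-1/2|$ small, producing a monotone coupling between $\abs{v}$ and $(u-1/2)^2$ that forces a definite sign. For the limiting case $p = 1/2$ (support collapses to $[0.1, 0.9]$) the residual is a closed-form moment integral over two i.i.d.\ uniforms, which can be evaluated directly to confirm strict non-vanishing, and a continuity/perturbation argument in $p$ extends the conclusion to all $p \in (0.1, 0.5]$.
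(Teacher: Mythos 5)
Your skeleton matches the paper's proof almost exactly: closed-form bias from \Cref{lemma:biasexpressions}, conditional means of $S$, $A$, $Y$ computed via the weak-ignorability/exogeneity identities (the paper packages these as auxiliary lemmas, giving e.g.\ $P(Y{=}1\mid X,S{=}1,A{=}1,R{=}0)=\tfrac{p^Y_{u=0}p^A_{u=0}+p^Y_{u=1}p^A_{u=1}}{p^A_{u=0}+p^A_{u=1}}$ under confounding), Bernoulli variances, and exactly your disjoint-parameter-set independence argument for every zero entry of \Cref{table:bias_table}. Where you diverge is the last step. The paper does not attempt an analytic positivity proof: it writes each non-zero signal as an explicit integral over ${\cal F}(p)^4$ of the relevant functional (e.g.\ $\tfrac{\abs{x-y}\abs{z-w}}{2(z+w)}$ against $\tfrac{z+w}{2}\tfrac{2-z-w}{2}$) and asserts nonnegativity for all ${\cal F}(p)\in{\cal F}$ by evaluating these integrals as a function of $p$ (its Figure of correlations), supplemented by the same "extremes of $\abs{v}$ force $\bar{p}\approx 1/2$" intuition you give. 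Your symmetry-plus-coupling route is therefore more ambitious than what the paper actually does, and if completed would strengthen the result.

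That said, as sketched it has concrete holes you should be aware of. First, the clean reduction to a two-variable covariance only covers $\bar{\rho}(b_1,A)$ under confounding and $\bar{\rho}(b_1,S)$ under type-1 selection; for the $Y$-signals the conditional variance is $q(1-q)$ with $q$ a ratio mixing the $Y$-parameters with the $A$- (or $S$-, or $p^U_r$-) parameters, so the $Y$-factor does not factor out of both terms and your change of variables must be carried out in four coordinates (the paper keeps the full quadruple integrals for exactly this reason). Second, the denominator $2(z+w)=4u$ in the confounding/selection bias expressions breaks the $u\mapsto 1-u$ reflection symmetry, so "cross terms linear in $u$ cancel" is not automatic and the residual $\Co(\abs{v}/u,\,u(1-u))$ needs a genuine argument, not just the diamond-shaped negative-dependence heuristic; a law-of-total-covariance decomposition over the same-interval/cross-interval pair types would also have to control the between-group term, whose sign is not obvious. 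Third, continuity from the exactly computable $p=1/2$ case only yields strict positivity in a neighborhood of $p=1/2$, not on all of $(0.1,0.5]$; you would need either a quantitative version of the coupling bound for $p$ bounded away from $1/2$ (where the two sub-intervals are separated and the argument is easiest) or an explicit evaluation/monotonicity statement in $p$, which is essentially what the paper's numerical curve supplies. None of this makes your plan wrong, but closing these three points is the real work, and the paper itself sidesteps it by direct evaluation.
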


\subsection{Analyzing Covariances Under Type 2 Selection Bias}
\label{sec:T2SB}
\setlength{\tabcolsep}{2pt}
\begin{wraptable}{r}{0.45\linewidth}
    \vspace{-11pt}
    \centering 
    \footnotesize
    \caption{Covariance between the magnitude of the bias and conditional variances of downstream variables under different bias mechanisms (see Theorems \ref{theorem:covarianceresults} and \ref{theorem:covarianceresults-seltype2}).}
    \begin{tabular}{@{}lccc@{}}
        \toprule
        \toprule
        \textbf{Bias Type} & $\bar{\rho} (b_1, S)$ & $\bar{\rho} (b_1, A)$ & $\bar{\rho} (b_1, Y)$ \\ \midrule 
        \textbf{\begin{tabular}[c]{@{}l@{}}No Bias \end{tabular}} & $= 0 $ & $= 0$ & $= 0$ \\ \midrule
        \textbf{\begin{tabular}[c]{@{}l@{}}Transportability\end{tabular}} &  $= 0$ & $= 0$ & $> 0$  \\ \midrule
        \textbf{\begin{tabular}[c]{@{}l@{}}Confounding\end{tabular}} & $= 0$ & $> 0$ & $> 0$ \\ \midrule 
        \textbf{\begin{tabular}[c]{@{}l@{}}Selection Type 1\end{tabular}} & $> 0$ & $= 0$ & $> 0$  \\ \midrule 
        \textbf{\begin{tabular}[c]{@{}l@{}}Selection Type 2 \end{tabular}} & $\ne 0$ & $\ne 0$ & $\ne 0$  \\ 
        \bottomrule
        \bottomrule
    \end{tabular}
    \label{table:bias_table}
\end{wraptable}
\setlength{\tabcolsep}{6pt}
The bias mechanisms examined previously (Figures \ref{fig:transportability_bias}-\ref{fig:selection_bias_1}) share a fundamental trait: they originate from an {\em unmeasured} variable $U$. This commonality enabled us to employ consistent analytical approaches in deriving both the bias expressions (\Cref{lemma:biasexpressions}) and covariance signals (\Cref{theorem:covarianceresults}). Here, we investigate type 2 selection bias (\Cref{fig:selection_bias_2}), where the root cause is not an unmeasured covariate $U$, but the fact that the selection variable $S$ acts as a collider.

This form of bias, while challenging to detect, is prevalent. Understanding how covariances behave under its various manifestations is therefore crucial. First, given the distinct graph structure (absence of $U$), we present a modified version of \Cref{asm:weak_ex}.
\begin{restatable}[]{assumption}{weakexvtwo}
\label{asmp:weak_ex_v2}
We have {\em transportability} $Y^a \! \indep \! R | X$; and {\em ignorability} $Y^a \! \indep \! A | X, R$ for $a \! \in \! \{0,\!1\}$.
\end{restatable}
Note that due to the $A \rightarrow S$ and $Y \rightarrow S$ edges, ignorability does not hold in the selected cohort: $Y^a \not\!\perp\!\!\!\perp A \mid X, R, S$ which leads to bias as $f_a(X) \ne \text{CATE}_{\text{os}} (x,a)$.

Covariance signals in this case depend on the specific selection mechanism which is characterized by $P(S=1|Y=y, A=a)$. We defer the bulk of the technical analysis to \Cref{sec:type2_appendix} and give the main result directly, which states that the covariance signals will be non-zero in general.

\begin{restatable}[]{theorem}{covarianceresultsseltypetwo}
\label{theorem:covarianceresults-seltype2}
Suppose that \Cref{asmp:weak_ex_v2} holds and downstream variables $T \in \{S,A,Y^0,Y^1\}$ are sampled according to Algorithm~1 for some ${\cal F}(p) \in {\cal F}$. Consider type 2 selection bias in \Cref{fig:selection_bias_2} where $Y^a \not\!\perp\!\!\!\perp A \mid X, S, R=0$. Then, covariance signals (see \Cref{def:covs}) are non-zero in general.
\end{restatable}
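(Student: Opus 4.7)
}

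The plan is to first derive a closed-form expression for the bias $b_1(X)$ in the type~2 selection-bias DAG (\Cref{fig:selection_bias_2}), and then expose a shared parametric dependence between $b_1(X)$ and each of the three conditional variances appearing in \Cref{def:covs}, from which the non-vanishing of the covariances follows generically. Under \Cref{asmp:weak_ex_v2}, transportability gives $\text{CATE}_{\text{rct}}(x,1)=\text{CATE}_{\text{os}}(x,1)=\E[Y^1\mid X]$, so $b_1(X)=\E[Y^1\mid X]-f_1(X)$ and the task reduces to analyzing $f_1(X)=\E[Y\mid X,R{=}0,S{=}1,A{=}1]$. I would apply Bayes' rule to move the conditioning on $S{=}1$ into the selection probabilities:
\begin{equation*}
f_1(X)=\frac{\sum_{y} y\,P(S{=}1\mid A{=}1,Y{=}y,X)\,P(Y^1{=}y\mid X)}{\sum_{y} P(S{=}1\mid A{=}1,Y{=}y,X)\,P(Y^1{=}y\mid X)},
\end{equation*}
where I used $Y^a\indep A\mid X$ from \Cref{asmp:weak_ex_v2}. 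Substituting binary outcomes (\Cref{sec:GMSCM}) and writing $q_{y,a}(X)\coloneqq P(S{=}1\mid A{=}a,Y{=}y,X)$ yields $b_1(X)$ as an explicit rational function of $\{q_{y,a}(X)\}_{y,a\in\{0,1\}}$, of the potential-outcome probability $p^{Y^1}_x\coloneqq P(Y^1{=}1\mid X)$, and of $p^{A}_x\coloneqq P(A{=}1\mid X)$.

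Next, I would compute the three conditional variances in terms of the same parameters. Since each downstream variable is binary, each variance has the form $q(1-q)$ where $q$ is a conditional probability, namely $P(S{=}1\mid X,R{=}0)=\sum_{a,y}q_{y,a}(X)P(Y^a{=}y\mid X)P(A{=}a\mid X)$ for the $S$-variance; $P(A{=}1\mid X,S{=}1,R{=}0)$, which by a second application of Bayes' rule is a rational function of $p^A_x$, $p^{Y^a}_x$, and $q_{y,a}(X)$, for the $A$-variance; and $P(Y{=}1\mid X,S{=}1,A{=}1,R{=}0)=f_1(X)$ itself for the $Y$-variance. The crucial structural point is that the exact same parameters that generate non-zero bias (the $y$-dependence of $q_{y,a}(X)$ that makes ignorability fail within the selected cohort) also drive the variance of $S$ (through $q_{y,a}$), the variance of $A$ conditional on $S{=}1$ (since selection probabilities re-weight $A$), and the variance of $Y$ conditional on $S{=}1,A{=}1$ (directly via $f_1$).

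With those expressions in hand, each covariance $\bar\rho(b_1,T)$ in \Cref{def:covs} becomes a rational function of the joint distribution over $X$ of the parameter tuple $(p^{Y^0}_x,p^{Y^1}_x,p^A_x,\{q_{y,a}(X)\})$ drawn according to Algorithm~1 from some $\mathcal{F}(p)\in\mathcal{F}$. To conclude ``non-zero in general,'' I would argue that none of these rational functions is identically zero on the parameter space. The cleanest way is constructive: exhibit, for each of the three signals, a concrete parameterization (with two or three support points for $X$, simple Bernoulli outcome probabilities drawn from $\mathcal{F}(p)$, and a selection rule of the form $q_{y,a}(X)=\alpha_{y,a}$) in which a direct computation shows the covariance is strictly non-zero; by continuity/analyticity of the rational expressions, the set of parameter values where any given signal vanishes has Lebesgue measure zero, so the generic claim follows. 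The sign is left unrestricted because flipping, say, $q_{1,1}$ versus $q_{0,1}$ can reverse the direction of the correlation, which is precisely why the table entry is $\ne 0$ rather than $>0$ or $<0$.

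The main obstacle is the combinatorial bookkeeping in the second step: unlike the $U$-driven biases of \Cref{lemma:biasexpressions}, where conditioning on $X$ decouples the analysis into a clean difference-of-probabilities times a weight, here the Bayes' rule over the collider $S$ couples $p^{Y^1}_x$, $p^A_x$, and all four selection probabilities in both numerator and denominator, so neither $b_1(X)$ nor the conditional variances factor nicely. I would handle this by linearizing around a regime where $q_{1,1}-q_{0,1}$ (the degree of outcome-dependence of selection in the treated arm) is small, so that the bias and the conditional variances expand to first order in that gap; in this regime all three covariance signals can be shown to have a non-vanishing leading coefficient, which is enough to establish the theorem, with the more general statement following by the rationality argument above.
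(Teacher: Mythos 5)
Your plan mirrors the paper's proof essentially step for step: the paper's Lemma~\ref{lemma:biasexpression-seltype2} derives exactly your rational expression for $b_1(X)$ by pushing the conditioning on $S{=}1$ into the selection probabilities under \Cref{asmp:weak_ex_v2}, then writes each conditional variance as $q(1-q)$ with $q$ given by the same parameters $\bigl(p^{Y}_{a}, p^{A}, p^{S}_{y,a}\bigr)$ (including $P(Y{=}1\mid X,S{=}1,A{=}1,R{=}0)=f_1(X)$), and finally establishes ``non-zero in general'' by computing the signals for a handful of concrete selection mechanisms whose signs differ. Your additions---the measure-zero/analyticity argument and the linearization in $q_{1,1}-q_{0,1}$---only strengthen the paper's final step (which rests on four explicit configurations) rather than change the route.
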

For example, consider $P(S = 1 | Y = 1) = 0.9$ and $P(S = 1 | Y=0) = 0.1$, where there is preferential selection of individuals who experience the outcome. We briefly discuss why this yields a positive covariance signal $\bar{\rho} (b_1, Y)$. This selection mechanism systematically overestimates event rates, and crucially, the bias magnitude varies by the risk at baseline. For populations with low event rates, {\em e.g.}, $P(Y^1\!=\!1 | X\!=\!x_1) = 0.1$, the bias is substantial and estimates gravitate toward $0.5$ where the variance is higher. Conversely, in high-risk populations, {\em e.g.,} $P(Y^1\!=\!1 | X\!=\!x_2) = 0.9$, the bias is bounded (at most $0.1$) and the variance in the outcome will be further reduced. 

\subsection{Consistent Estimators of Covariance} \label{sec:AUEC}
In practice, one needs to estimate the covariances, which is not straightforward, especially when $X$ contains continuous variables. We construct an estimator based on instance-wise squared-errors, which is consistent when the nuisance function estimators fitted on the OS are consistent.

\begin{restatable}[Nuisance function estimators in the OS]{definition}{ContinuousEstimator}
    \label{def:nuisance-function-estimators}
    \begin{align}
        \widehat{\eta}_{S} (X) &\coloneqq \widehat{P} (S=1 \mid X, R=0). \label{eq:etas} \\
        \widehat{\eta}_{A} (X) &\coloneqq \widehat{P} (A=1 \mid X, S=1, R=0). \label{eq:etaa} \\
        \widehat{\eta}_{Y} (X) &\coloneqq \widehat{P} (Y=1 \mid X, S=1, A=1, R=0). \label{eq:etay} \\
        \widehat{b}_1 (X) &\coloneqq \widehat{g}_1 (X) - \widehat{f}_1 (X), \label{eq:etab}
    \end{align}
\end{restatable}
where $g_1(X)$ and $f_1(X)$ are defined in \eqref{eq:gdef} and \eqref{eq:fdef}, respectively. $\widehat{g}_1(X)$ is estimated from the RCT data, while the remaining estimators are fitted from the OS data. We use the following estimator, which is for the covariance between the bias function and the {\em squared-error} (SE) of the nuisance function estimators for the target variables $T \in \{S,A,Y\}$.
\begin{equation*}
    \widehat{\bar{\rho}(b_1,T)} = \frac{n}{n-1} \bigg( \frac{1}{n} \sum_{i=1}^n \lvert\widehat{b}_1 (X_i)\rvert (T_i- \widehat{\eta}_{T}(X_i))^2 - \frac{1}{n^2} \sum_{i=1}^n \sum_{j=1}^n \lvert \widehat{b}_1 (X_i)\rvert (T_j - \widehat{\eta}_{T}(X_i))^2 \bigg).
\end{equation*}        

\begin{restatable}[]{theorem}{ThmContinuous}
    \label{thm:continuous}
    Assume that \eqref{eq:etas}-\eqref{eq:etab} are consistent estimators. Then, for all $T\in \{S,A,Y\}$, $\widehat{\bar{\rho}(b_1,T)}$ are consistent estimators of $\bar{\rho}(b_1,T)$. 
\end{restatable}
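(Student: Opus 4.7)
The crux is a tower-property identity: since $\eta_T(X)=E[T\mid X,\text{cond}]$ for binary $T$, the squared residual $(T-\eta_T(X))^2$ satisfies $E[(T-\eta_T(X))^2\mid X]=V(T\mid X)$. Hence the target covariance can be rewritten as
\[
\bar{\rho}(b_1,T)=\mathrm{Cov}\!\left(|b_1(X)|,\,(T-\eta_T(X))^2\right),
\]
since by the tower property $E[|b_1(X)|(T-\eta_T(X))^2]=E[|b_1(X)|V(T\mid X)]$ and $E[(T-\eta_T(X))^2]=E[V(T\mid X)]$. With this identification the proposed statistic is just the standard one-pass sample-covariance formula $\tfrac{n}{n-1}(\overline{UV}-\bar U\bar V)$ applied to $U_i=|\widehat b_1(X_i)|$ and $V_i=(T_i-\widehat\eta_T(X_i))^2$, with the double-sum rewritten as $\bar U\bar V$.

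My plan is the familiar two-step oracle-plus-plug-in decomposition. First I would define the oracle statistic $\widetilde\rho_n$ by substituting the true $b_1$ and $\eta_T$ for their estimates and apply the weak law of large numbers separately to $\tfrac{1}{n}\sum_i U_i V_i$, $\tfrac{1}{n}\sum_i U_i$, and $\tfrac{1}{n}\sum_i V_i$; since all the random variables are bounded (bias lies in $[-1,1]$ and $T,\eta_T\in[0,1]$), the expectations exist and Slutsky combined with the continuous mapping theorem yields $\widetilde\rho_n\xrightarrow{P}\mathrm{Cov}(|b_1(X)|,(T-\eta_T(X))^2)=\bar\rho(b_1,T)$. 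Second I would show $\widehat{\bar\rho(b_1,T)}-\widetilde\rho_n\xrightarrow{P}0$. For this I would use the elementary factorizations
\[
(T-\widehat\eta_T)^2-(T-\eta_T)^2=(\eta_T-\widehat\eta_T)\bigl(2T-\widehat\eta_T-\eta_T\bigr),\qquad \bigl||\widehat b_1|-|b_1|\bigr|\le |\widehat b_1-b_1|,
\]
together with the boundedness of $T$, $\widehat\eta_T$, and $b_1$. Each of the perturbation pieces then reduces to an empirical average of $|\widehat\eta_T(X_i)-\eta_T(X_i)|$ or $|\widehat b_1(X_i)-b_1(X_i)|$ (possibly multiplied by uniformly bounded factors), which converge to $0$ in probability by consistency of the nuisance and bias estimators together with Markov's inequality. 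A final application of Slutsky combines the two steps.

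The main obstacle is the precise interpretation of \emph{consistency} in Definition \ref{def:nuisance-function-estimators}: pointwise consistency alone is not enough to guarantee that empirical means of $|\widehat\eta_T(X_i)-\eta_T(X_i)|$ vanish. The cleanest route is to read \enquote{consistent} as $L^1(P_X)$-consistency (equivalently, convergence in probability under a uniformly integrable envelope), which makes Markov's inequality immediate. An alternative route is to assume the estimators are obtained via sample splitting with $L^2$-consistent base learners, so that the cross-fitted empirical averages of the nuisance errors vanish uniformly. Aside from clarifying this mode of convergence, the remaining computations are routine algebra and WLLN arguments; I do not expect further subtlety.
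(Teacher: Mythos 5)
Your proposal is correct, but it takes a genuinely different route from the paper's proof. The paper's argument hinges on the same tower-property identity you start from (its Lemma on squared errors shows $\E\big[\lvert b_1(X)\rvert\,(T-\eta_T(X))^2\big]=\E\big[\lvert b_1(X)\rvert\,\V(T\mid X,\cdot)\big]$), but it then proceeds by directly computing the expectation of the statistic: it splits the estimator into the single-sum and double-sum terms, evaluates each (counting the $n$ diagonal versus $n^2-n$ off-diagonal terms in the double sum), concludes asymptotic unbiasedness, and finishes by asserting that the variance vanishes because the statistic is a combination of sample means of bounded i.i.d.\ quantities, so Chebyshev plus the triangle inequality give consistency. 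You instead use the oracle-plus-plug-in decomposition: WLLN and Slutsky for the oracle statistic built from the true $b_1$ and $\eta_T$, then a perturbation bound via the factorization $(T-\widehat\eta_T)^2-(T-\eta_T)^2=(\eta_T-\widehat\eta_T)(2T-\widehat\eta_T-\eta_T)$ and boundedness, reducing the plug-in error to empirical $L^1$ errors of the nuisance estimates. Your route is more explicit about where the randomness of the fitted nuisance functions enters, and your observation that pointwise consistency is not enough — that one should read \enquote{consistent} as $L^1(P_X)$-consistency (or use sample splitting with $L^2$-consistent learners) — is a legitimate sharpening of a point the paper leaves implicit: its proof substitutes $\widehat\eta_T\to\eta_T$ and $\widehat b_1\to b_1$ inside expectations \enquote{in the limit} without specifying the mode of convergence, effectively treating the fitted nuisances as deterministic. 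The paper's approach buys a shorter, more self-contained computation of the bias of the U-statistic-like double sum; yours buys a cleaner separation of statistical error (WLLN on the oracle) from estimation error in the nuisances, at the cost of having to pin down the convergence mode. Both are valid proofs of the stated claim.
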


\section{Synthetic Experiments}
\label{sec:synthetic}
\begin{wrapfigure}{l}{0.5\linewidth}
    \centering
    \vspace{-15pt}
    \includegraphics[width=\linewidth]{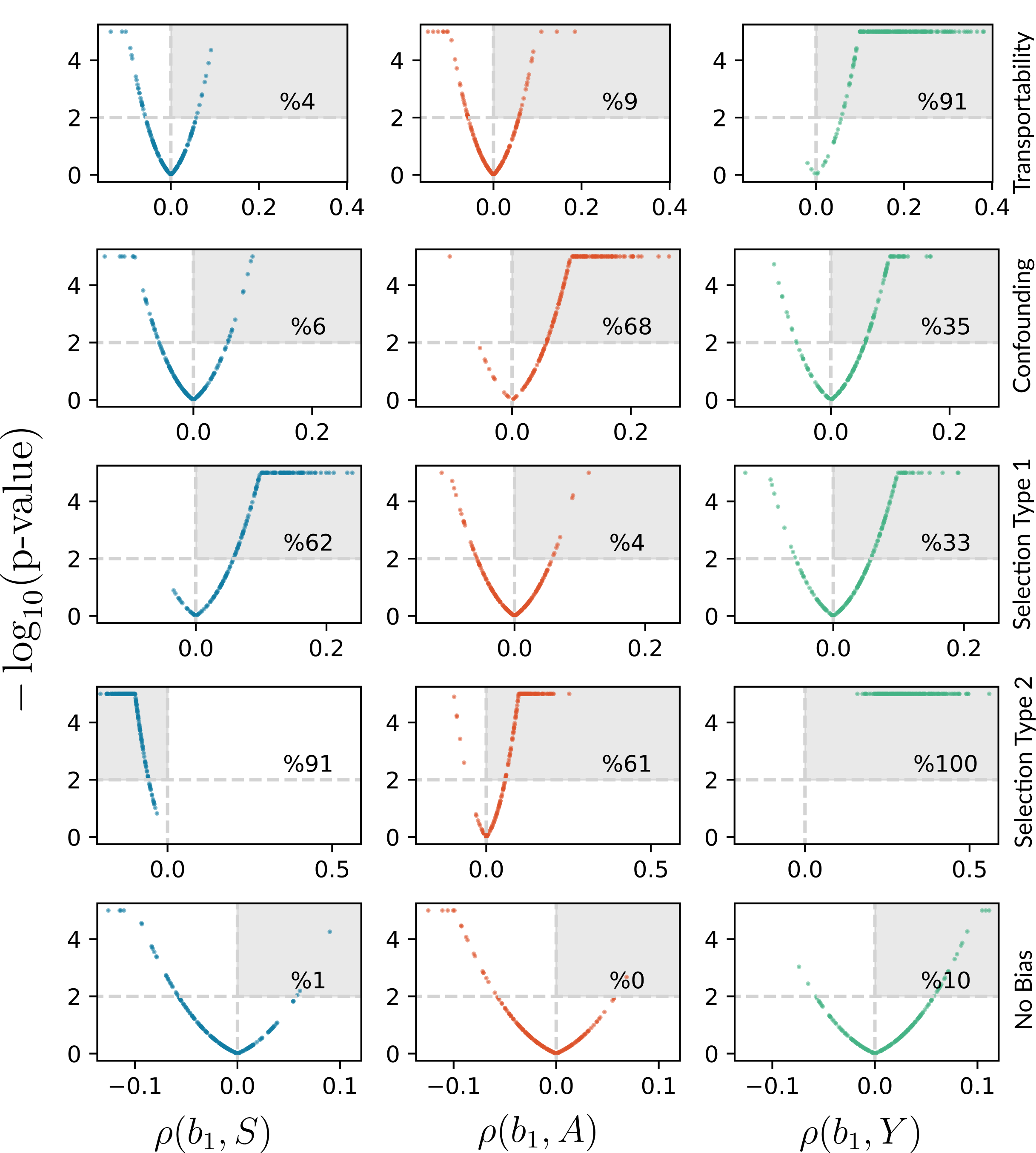}
    \caption{\small Pearson's R (normalized covariance) in synthetic experiments for the bias mechanisms in \Cref{fig:bias_typs}. Percentages denote the ratio of dots in shaded area. $p$-values clipped at $p\!=\!10^{-5}$.}
    \label{fig:synthetic}
    \vspace{-10pt}
\end{wrapfigure}
We run several synthetic experiments under the generative model in \Cref{sec:ja_wnf}. We consider binary covariates, $X \in \{0,1\}^d$, and denote by $X(j)$ the $j$-th covariate. We experiment with $d \in \{5,6,7\}$. We set $P(X(j) = 1|R=1) = 0.4$ in RCT and $P(X(j) = 1|R=0) = 0.6$ in OS to simulate covariate shift between two populations. In the RCT cohort, we use $P (A\!=\!1 | R\!=\!1) \!=\! 0.5$ and $P (S\!=\!1 | R\!=\!1) \!=\! 1$. We set the OS cohort size to $n_{\text{os}} = 50000$ for fitting $\hat{f}_1(X)$, and set aside $n_{\text{val}} = 2000$ OS participants to estimate the correlation signals in \Cref{def:covs} once the nuisance functions in Eq.\ref{eq:etas} - \ref{eq:etab} are fitted. We experiment with different RCT cohort sizes, $n_{\text{rct}} \in \{2000, 50000\}$, for fitting $\hat{g}_1(X)$. 

For each bias mechanism in \Cref{fig:bias_typs}, we run 200 experiments with different seeds. In each run, we sample a $p \sim \texttt{Uniform} [0.2, 0.5]$ and specify ${\cal F} (p)$ in Algorithm~1 accordingly, from which the downstream variables are sampled.

When simulating transportability bias in \Cref{fig:transportability_bias}, we sample $P(U = 1\mid X, R) \sim {\cal F} (p)$ to induce a different distribution of $U$ in the RCT and OS. We set $U$-bias to \texttt{False} for $S,A$ and \texttt{True} for $Y^1$ in Algorithm~1.

In all other experiments, we set $P(U \mid X,R) = 1/2$ to ensure transportability. When simulating confounding bias in \Cref{fig:confounding_bias}, we set $U$-bias to \texttt{False} for $S$ and \texttt{True} for $A,Y^1$ in Algorithm~1. For type 1 selection bias in \Cref{fig:selection_bias_1}, we set $U$-bias to \texttt{False} for $A$ and \texttt{True} for $S,Y^1$. Finally, for type 2 selection bias in \Cref{fig:selection_bias_2}, we set $U$-bias to \texttt{False} for $A$ and $Y^1$, as there is no $U$. We set $P(S=1|Y=1,A=1)=0.9$ and to $0.1$ for other combinations of $Y$ and $A$. In \Cref{sec:type2_appendix}, we report the correlation signals for different parametrizations of type 2 selection bias.

We set $n_{\text{rct}} = 50000$ and $d=6$. The results for $n_{\text{rct}}=2000$ and $d \in \{5,7\}$ are presented in \Cref{app:moresyn}. The (Pearson correlation coefficient, $p$-value) pairs computed from 200 experiment runs are presented in \Cref{fig:synthetic} (each dot is a run). We draw a nominal threshold for the significance of the estimated correlation coefficients at $p =0.01$. On the plots, we write the percentage of experiments where the correlation coefficients come out as suggested in \Cref{table:bias_table} with a significance level $p < 0.01$. The results corroborate our findings: statistically significant correlations between the magnitude of the bias, $\abs{b_1(X)}$, and the squared errors of the nuisance function estimators (see \Cref{def:nuisance-function-estimators}) can characterize the underlying bias mechanism. 

We present additional results for settings where we simulate \textit{combinations} of bias mechanisms in \Cref{app:moresyn}. Further, we report the results for similar experiments where $U$ is modeled to be a continuous covariate in \Cref{app:moresyn}, which yield similar observations, showing that the empirical insights from our methods generalize beyond the settings for which we derive theoretical results.
\section{Real-World Experiments}
\label{sec:real-world} 
Women's Health Initiative (WHI) OS and RCT components yielded conflicting results for the effect of combined hormonal therapy (HRT) on coronary heart disease (CHD) and stroke outcomes, where the OS suggested a protective effect while the RCT revealed increased risk. Follow-up studies found that the OS suffered from ``immortal time'' bias, which can broadly be thought of as a subcategory of type 2 selection bias \citep{prentice2005combined} (see \Cref{sec:sbt2-cc}).
\paragraph{Setup}
Following \citet{prentice2005combined}, we attempt to correct for the immortal time bias by \textit{only} selecting patients into the ``treatment group'' who were not past users of combined HRT but started shortly after enrollment (see \Cref{sec:whi-appendix} for our reproduction of results). We refer to the setting prior to correction as the ``Baseline'' analysis. We refer to the setting after the correction as ``Corrected,'' although it may not be fully unbiased as other types of biases may still be present. Finally, we run another set of experiments where we conceal age and menopausal status in the ``Corrected'' OS to observe how the correlation signals change and refer to this setting as ``Manual bias''.

To compute the correlation signals, we have the following definitions of $S$ and $Y$: $Y\!=\!1$ if the patient experiences the event within the followup period, and $Y\!=\!0$ otherwise. $S\!=\!0$ for patients who are censored or past HRT users who do not meet the inclusion criteria, and $S\!=\!1$ for those included in the analysis.
\paragraph{Results}
In \Cref{fig:main-whi-results} (left), we show covariance signals for all cases. Error bars are 95\% confidence intervals computed over twenty different train/validation splits of the RCT and OS data. 

In the baseline analysis, we see a type 2 selection bias pattern: $\rho (b_1, S) > 0$, $\rho (b_1, A) < 0$, $\rho (b_1, Y) > 0$. Introspecting on $\rho (b_1, A)$, we note that those in the treated group of the OS are largely patients who are more likely to benefit from the treatment, which explains the high bias and low variance in their management, and hence the negative signal.

\begin{figure}[tbp]
\centering
    \includegraphics[width=\linewidth]{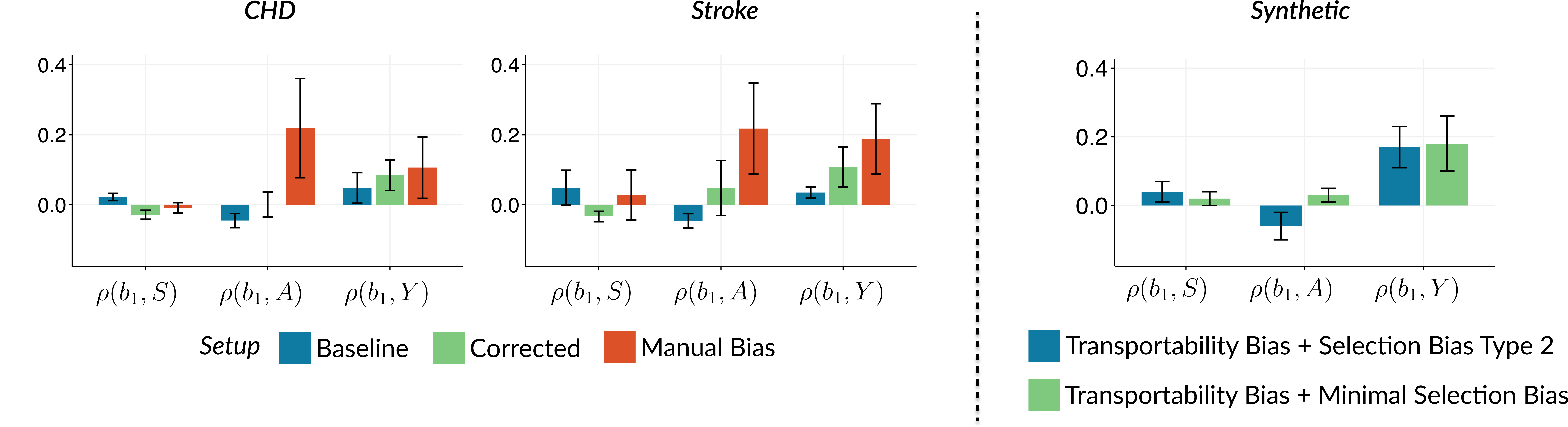}
    \label{fig:stroke+chd+cov}    
    \caption{\small \textbf{Left}: Pearson's R between bias functions and downstream variables for CHD and stroke in WHI experiments. \textbf{Right}: Average correlation signals under different synthetic settings: 1) type 2 selection bias (with selection probabilities matching WHI setting) and transportability bias, 2) minimal selection bias.}
    \label{fig:main-whi-results}
    \vspace{-15pt}
\end{figure}

As we implement the bias correction step, $\rho (b_1, S)$, and $\rho (b_1, A)$ signals tend to expectedly deflate. A curious observation is that $\rho (b_1, Y)$ does not decrease. We hypothesize that there might be two bias mechanisms at play simultaneously, type 2 selection bias and transportability bias. When the former is responsible for most of the bias, one might see an increase (or at least not a decrease) in $\rho (b_1, Y)$ after the correction step, since the residual bias is mostly due to the transportability issue. To test this hypothesis, we conduct a synthetic experiment that incorporates both type 2 selection bias and transportability bias to replicate the WHI setting ($p^S_{00} = 0.9, p^S_{01} = 0.9, p^S_{10} = 0.3, p^S_{11} = 0.1$), shown in~\Cref{fig:main-whi-results} (right). We find that the synthetic combined setting reflects the signals in the real-world ``Baseline'' WHI analysis. Importantly, after correcting only the selection bias, $\rho(b_1,A)$ and $\rho(b_1,S)$ go towards zero, and $\rho(b_1,Y)$ slightly increases, mimicking what we see in the real-world results. For more details on the setup of this experiment, see~\Cref{sec:whi-appendix}.

Finally, we conduct \textit{positive-control} experiments, intentionally concealing age and menopausal status, which are confounders for treatment assignment and outcome (labeled as \enquote{Manual Bias} in \Cref{fig:main-whi-results}). As expected, we note a significant increase in both $\rho (b_1, A) > 0$ and $\rho (b_1, Y) > 0$ signals.
\section{Conclusion}
We uncover a principled connection between the bias function and the predictive performance of nuisance function estimators in observational studies, enabling the identification of underlying bias mechanisms from among common candidates. More broadly, we open a new direction for using model performance diagnostics in observational settings to extract actionable insights about the data-generating process. One limitation is that our analysis relies on a specific generative model. Nonetheless, we support our theoretical assumptions with empirical evidence from real-world data and demonstrate that our analytical insights hold in settings that extend beyond our theoretical framework. Additionally, our current analysis does not address measurement bias—systematic errors in how variables are recorded or observed—which remains an important avenue for future work.

\bibliography{main.bib}
\bibliographystyle{abbrvnat}


\appendix
\newpage
\onecolumn
\section{Appendix}
\label{sec:appendix}

\subsection{Proofs}
Detailed proofs for the theoretical results in the main paper are provided here. All results in the main paper are re-stated in this section for clarity.
\subsubsection{Useful Lemmas}
We start by deriving some auxiliary results that hold true under the general weak \Cref{asm:weak_ex} we make throughout the paper.  
\begin{lemma}
\label{lemma:useful1}
Under \Cref{asm:weak_ex}, we have
\begin{align}
    &f_a(X) \nonumber \\
    &= \sum_{u=0,1} P (Y=1, U=u \mid X, R=0, S=1, A=a) \nonumber \\
    &= \sum_{u=0,1} P (Y^a=1 \mid X, U=u, R=0, S=1, A=a) P (U=u \mid X, R=0, S=1, A=a) \nonumber \\
    &= \sum_{u=0,1} P (Y^a=1 \mid X, U=u, R=0) P (U=u \mid X, R=0, S=1, A=a) \label{eq:l10} \\
    &= \sum_{u=0,1} P (Y^a=1 \mid X, U=u) \frac{P(U=u, S=1,A=a \mid X, R=0) }{P(S=1,A=a \mid X, R=0)}  \label{eq:l11} \\
    &= \sum_{u=0,1} P (Y^a=1 \mid X, U=u) \frac{P(S=1,A=a \mid X, U=u, R=0) P(U=u \mid R=0) }{P(S=1,A=a \mid X, R=0)} \label{eq:l12} \\
    &= \sum_{u=0,1}  P(U=u \mid R=0) P (Y^a=1 \mid X, U=u)\nonumber \\
    &\qquad\qquad\qquad\qquad \times \frac{P(S=1 \mid X, U=u, R=0) P(A=a \mid X, U=u, R=0)}{P(S=1, A=a \mid X, R=0)}. \label{eq:fax_expanded}
\end{align}
\eqref{eq:l10} follows from weak ignorability, \eqref{eq:l11} from weak transportability, \eqref{eq:l12} from exogeneity of $X$ and $U$, and \eqref{eq:fax_expanded} from weak ignorability again.
\end{lemma}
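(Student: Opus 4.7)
The plan is to derive the expansion for $f_a(X)$ by starting from its definition as a conditional expectation of a binary outcome, introducing $U$ via the law of total probability, and then peeling off the conditioning on $S=1$, $A=a$, and $R=0$ one piece at a time by invoking each clause of \Cref{asm:weak_ex} in turn.

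First, I would observe that since $Y$ is binary we can write $f_a(X)=P(Y=1\mid X,R=0,S=1,A=a)$ and expand this as $\sum_{u} P(Y=1,U=u\mid X,R=0,S=1,A=a)$, which factors as a product of $P(Y^a=1\mid X,U=u,R=0,S=1,A=a)$ and $P(U=u\mid X,R=0,S=1,A=a)$ after using $Y=Y^a$ in the event $A=a$. For the outcome factor, weak ignorability ($Y^a\indep S,A\mid X,U,R$) removes the conditioning on $S=1,A=a$, and then weak transportability ($Y^a\indep R\mid X,U$) removes the conditioning on $R=0$, leaving $P(Y^a=1\mid X,U=u)$.

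Next, I would rewrite the $U$-posterior $P(U=u\mid X,R=0,S=1,A=a)$ using Bayes' rule as $P(U=u,S=1,A=a\mid X,R=0)/P(S=1,A=a\mid X,R=0)$, which requires the positivity clause of \Cref{asm:weak_ex} to ensure the denominator is non-zero. Exogeneity of unmeasured covariates ($X\indep U\mid R$) then lets me replace $P(U=u\mid X,R=0)$ with $P(U=u\mid R=0)$, giving the form in \eqref{eq:l12}. A second application of weak ignorability factors $P(S=1,A=a\mid X,U=u,R=0)$ into $P(S=1\mid X,U=u,R=0)\,P(A=a\mid X,U=u,R=0)$, producing \eqref{eq:fax_expanded}.

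Since each step is a straightforward application of either the law of total probability, Bayes' rule, or one of the conditional independence statements in \Cref{asm:weak_ex}, I do not anticipate any serious obstacles. The only thing requiring care is bookkeeping: making sure at each step that I cite the exact independence used, and that the positivity condition is invoked precisely where a denominator appears. This lemma will serve as the workhorse identity in the subsequent derivations of the bias expressions in \Cref{lemma:biasexpressions}.
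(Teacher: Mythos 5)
Your proposal is correct and follows essentially the same route as the paper: expand over $U$, factor, strip the conditioning on $S,A$ via weak ignorability and on $R$ via weak transportability, apply Bayes' rule to the $U$-posterior (with positivity guaranteeing the denominator), use exogeneity to drop $X$ from $P(U\mid X,R=0)$, and finish with a second application of weak ignorability to factor $P(S=1,A=a\mid X,U,R=0)$. The only cosmetic difference is the order in which you invoke transportability versus Bayes' rule, which does not affect the argument.
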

\begin{lemma}
\label{lemma:useful2}
Under \Cref{asm:weak_ex}, we have
\begin{align}
    &P(S=1 \mid X, R=0) \nonumber \\
    &=P(S=1, U=0 \mid X, R=0) + P(S=1, U=1 \mid X, R=0) \nonumber \\
    &=P(S=1 \mid X, U=0, R=0) P(U=0 \mid R=0) \nonumber \\
    &\hspace{20pt}+ P(S=1 \mid X, U=1, R=0) P(U=1 \mid R=0) \label{eq:ule2_2} \\
    &=p^S_{u=0} (1 - p^U_{r=0}) + p^S_{u=1} p^U_{r=0} \nonumber \\
    &=p^S_{u=0} + p^U_{r=0} (p^S_{u=1} - p^S_{u=0}) \label{eq:ule2_res}
\end{align}
where \eqref{eq:ule2_2} follows from exogeneity of $X$ and $U$, and \eqref{eq:ule2_res} from~\eqref{eq:def3},\eqref{eq:def4}.
\end{lemma}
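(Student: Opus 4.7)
The plan is to marginalize $P(S=1 \mid X, R=0)$ over the unmeasured binary covariate $U$ and then apply the exogeneity part of \Cref{asm:weak_ex} to collapse the conditional distribution of $U$ given $X$ into a marginal one. This is a very short calculation; the substantive content is identifying which clause of \Cref{asm:weak_ex} licenses which factorization, rather than any algebraic difficulty.

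Concretely, I would first apply the law of total probability over $U \in \{0,1\}$ and the chain rule to obtain
\begin{equation*}
    P(S=1 \mid X, R=0) = \sum_{u \in \{0,1\}} P(S=1 \mid X, U=u, R=0)\, P(U=u \mid X, R=0).
\end{equation*}
Next, the exogeneity clause $X \indep U \mid R$ of \Cref{asm:weak_ex} justifies replacing $P(U=u \mid X, R=0)$ with the marginal $P(U=u \mid R=0) = p^U_{r=0}$ (using the shorthand in \eqref{eq:def4}). Using the shorthand $p^S_u = P(S=1 \mid X, U=u, R=0)$ from \eqref{eq:def3}, the sum collapses to $p^S_{u=0}(1 - p^U_{r=0}) + p^S_{u=1}\, p^U_{r=0}$, which regrouping immediately yields $p^S_{u=0} + p^U_{r=0}(p^S_{u=1} - p^S_{u=0})$.

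The only place where care is needed -- and arguably the ``main obstacle,'' though a very mild one -- is checking that exogeneity applies conditionally on $R=0$ rather than only marginally. Since \Cref{asm:weak_ex} states $X \indep U \mid R$ for all $r \in \{0,1\}$, this step is direct. Note that neither weak ignorability nor weak transportability is required for this particular lemma, unlike \Cref{lemma:useful1}, since we are computing a propensity of $S$ rather than a counterfactual outcome mean. The conclusion follows.
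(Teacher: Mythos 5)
Your proposal is correct and follows exactly the paper's own argument: marginalize over $U$, invoke the exogeneity clause $X \indep U \mid R$ of \Cref{asm:weak_ex} to replace $P(U=u \mid X, R=0)$ by $P(U=u \mid R=0)$, and substitute the shorthand from \eqref{eq:def3} and \eqref{eq:def4} before regrouping. Your added observation that neither weak ignorability nor weak transportability is needed here is accurate and consistent with the paper, which cites only exogeneity for this step.
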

\begin{lemma}
\label{lemma:useful3}
Under \Cref{asm:weak_ex}, we have
\begin{align}
    &P(A=1 \mid X, S=1, R=0) \nonumber \\
    &=P(A=1, U=0 \mid X, S=1, R=0) + P(A=1, U=1 \mid X, S=1, R=0) \nonumber \\
    &=P(A=1 \mid X, U=0, S=1, R=0) P(U=0 \mid X, S=1, R=0) \nonumber \\ 
    &\hspace{20pt} + P(A=1 \mid X, U=1, S=1, R=0) P(U=1 \mid X, S=1, R=0) \nonumber \\
    &=P(A=1 \mid X, U=0, R=0) P(U=0 \mid X, S=1, R=0) \nonumber \\ 
    &\hspace{20pt} + P(A=1 \mid X, U=1, R=0) P(U=1 \mid X, S=1, R=0) \label{eq:ule3_1} \\
    &=p^A_{u=0} P(U=0 \mid X, S=1, R=0) + p^A_{u=1} P(U=1 \mid X, S=1, R=0) \label{eq:ule3_res}
\end{align}
where \eqref{eq:ule3_1} follows from weak ignorability, and \eqref{eq:ule3_res} from \eqref{eq:def3}.
\end{lemma}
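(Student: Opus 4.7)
The plan is to mirror the structure of \Cref{lemma:useful2} but conditioning on $S=1$ in addition to $X,R=0$. First I would apply the law of total probability by summing over the two values of the unmeasured covariate $U$:
\begin{equation*}
P(A=1 \mid X, S=1, R=0) = \sum_{u\in\{0,1\}} P(A=1, U=u \mid X, S=1, R=0).
\end{equation*}
Then the chain rule factors each joint term into $P(A=1 \mid X, U=u, S=1, R=0)\, P(U=u \mid X, S=1, R=0)$, so the result follows once the first factor is reduced to $p^A_u$.

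The key step, and the only nontrivial one, is to remove $S=1$ from the conditioning in $P(A=1 \mid X, U=u, S=1, R=0)$. The weak ignorability clause of \Cref{asm:weak_ex} states $Y^a \indep S \indep A \mid X, U, R$, which in particular gives $A \indep S \mid X, U, R$. This conditional independence lets me write
\begin{equation*}
P(A=1 \mid X, U=u, S=1, R=0) = P(A=1 \mid X, U=u, R=0) = p^A_u,
\end{equation*}
where the last equality is just the definition \eqref{eq:def3}. Substituting back into the sum gives
\begin{equation*}
P(A=1 \mid X, S=1, R=0) = p^A_{u=0}\, P(U=0 \mid X, S=1, R=0) + p^A_{u=1}\, P(U=1 \mid X, S=1, R=0),
\end{equation*}
which is exactly the claim.

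I expect no real obstacle here; unlike \Cref{lemma:useful2}, I do not need the exogeneity assumption $X \indep U \mid R$ because the posterior $P(U \mid X, S=1, R=0)$ is left in its raw form rather than being simplified via Bayes' rule. The only point that warrants care is confirming that the relevant conditional independence, $A \indep S \mid X, U, R$, is indeed a consequence of the joint statement $Y^a \indep S \indep A \mid X, U, R$ in \Cref{asm:weak_ex}, so that the conditioning event $\{S=1\}$ can be dropped without invoking any additional structural assumption.
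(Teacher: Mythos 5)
Your proof is correct and follows exactly the paper's argument: total probability over $U$, the chain rule, dropping $S=1$ via the weak-ignorability consequence $A \indep S \mid X, U, R$, and then the definition \eqref{eq:def3}. Your observation that exogeneity of $X$ and $U$ is not needed here (unlike in \Cref{lemma:useful2}) is also consistent with the paper's proof.
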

\begin{lemma}
\label{lemma:useful4}
Under \Cref{asm:weak_ex}, we have
\begin{align}
    &P(Y=1 \mid X, S=1, A=1, R=0) \nonumber \\
    &=P(Y=1, U=0 \mid X, S=1, A=1, R=0) + P(Y=1, U=1 \mid X, S=1, A=1, R=0) \nonumber \\
    &=P(Y^1=1 \mid X, U=0, S=1, A=1, R=0) P(U=0 \mid X, S=1, A=1, R=0) \nonumber \\
    &\hspace{20pt} + P(Y^1=1 \mid X, U=1, S=1, A=1, R=0) P(U=1 \mid X, S=1, A=1, R=0) \nonumber \\
    &=p^Y_{u=0} P(U=0 \mid X, S=1, A=1, R=0) + p^Y_{u=1} P(U=1 \mid X, S=1, A=1, R=0) \label{eq:ule4_res}
\end{align}
where \eqref{eq:ule4_res} follows from weak ignorability.
\end{lemma}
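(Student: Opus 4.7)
\textbf{Proof proposal for \Cref{lemma:useful4}.} The plan is to mirror the structure of Lemmas~\ref{lemma:useful1}–\ref{lemma:useful3}: apply the law of total probability to marginalize over the binary unmeasured covariate $U$, rewrite the observed outcome $Y$ as the potential outcome $Y^{1}$ in the event $\{A=1\}$, and then invoke the \emph{weak ignorability} clause of \Cref{asm:weak_ex} to strip $S$ and $A$ from the conditioning set. No part of this argument needs transportability or exogeneity, which is why the statement is simpler than \Cref{lemma:useful1}.

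First, I would write
\begin{align*}
    &P(Y=1 \mid X, S=1, A=1, R=0) \\
    &\quad= \sum_{u\in\{0,1\}} P(Y=1, U=u \mid X, S=1, A=1, R=0) \\
    &\quad= \sum_{u\in\{0,1\}} P(Y=1 \mid X, U=u, S=1, A=1, R=0)\, P(U=u \mid X, S=1, A=1, R=0),
\end{align*}
using only basic probability; positivity from \Cref{asm:weak_ex} guarantees that all conditioning events have positive mass, so these conditional probabilities are well-defined.

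Next, since $Y=Y^{A}=Y^{1}$ on the event $\{A=1\}$, the first factor equals $P(Y^{1}=1 \mid X, U=u, S=1, A=1, R=0)$. Applying the weak ignorability condition $Y^{a}\indep S\indep A \mid X,U,R$ (with $a=1$), I can drop both $S$ and $A$ from the conditioning set to obtain $P(Y^{1}=1 \mid X, U=u, R=0)$, which by \eqref{eq:def3} is exactly $p^{Y}_{u}$. Substituting back and splitting the sum into its $u=0$ and $u=1$ terms yields \eqref{eq:ule4_res}.

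The argument is essentially mechanical; the only point worth flagging is the identification of $Y$ with $Y^{1}$ under $A=1$, which is implicitly the consistency assumption $Y=Y^{A}$ that is standard and used throughout the paper (see the opening paragraph of \Cref{sec:back_and_mot}). After that, the sole substantive step is the single invocation of weak ignorability, so there is no real obstacle beyond bookkeeping.
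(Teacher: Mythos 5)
Your proposal is correct and follows exactly the same route as the paper's own derivation: total probability over $U$, consistency to replace $Y$ with $Y^{1}$ on $\{A=1\}$, and a single application of weak ignorability to drop $S$ and $A$ from the conditioning set, yielding $p^{Y}_{u}$ as defined in \eqref{eq:def3}. Your additional remarks on positivity and on consistency being implicit are accurate but not needed beyond what the paper states.
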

\subsubsection{Proofs of the Results in \Cref{sec:IBMCCV}}
\biasexpressions*
\begin{proof}[Proof of the Transportability Bias]
    We consider \Cref{fig:transportability_bias} where $S \indep A \indep U \mid X, R=0$. That is, $X$ explains everything in the OS that we are interested in. However, we allow $U \not\!\perp\!\!\!\perp  R$, that is, the unmeasured covariate has different distributions in the RCT and OS. Following \eqref{eq:fax_expanded}, we have 
    \begin{align}
        f_a(X) &= \sum_{u=0,1}  P(U=u \mid R=0) P (Y^a=1 \mid X, U=u) \\ 
        &\qquad\qquad\qquad \times \frac{\cancel{P(S=1 \mid X, U=u, R=0) P(A=a \mid X, U=u, R=0)}}{\cancel{P(S=1 \mid X, R=0) P(A=a \mid X, R=0)} \tag{$S \indep A \indep U \mid X, R=0$}} \nonumber \\
        &= \sum_{u=0,1}  P (Y^a=1 \mid X, U=u) P(U=u \mid R=0). \label{eq:tr_gex}
    \end{align}
    Combining \eqref{eq:cb_gex2} and \eqref{eq:tr_gex} we have
    \begin{align}
        b_a(X) 
        &= g_a(X) - f_a(X) \nonumber \\
        &=P(Y^a = 1 \mid X, U=0) \big( P(U=0 \mid R=1) - P(U=0 \mid R=0) \big) \nonumber \\ 
        &\hspace{10pt} + P(Y^a = 1 \mid X, U=1) \big( P(U=1 \mid R=1) - P(U=1 \mid R=0) \big) \nonumber \\
        &=\big ( P(Y^a = 1 \mid X, U=1) - P(Y^a = 1 \mid X, U=0) \big) \nonumber\\
        &\qquad\qquad \big( P(U=1 \mid R=1) - P(U=1 \mid R=0) \big) \nonumber \\
        &=\big( p^{U}_{r=1} - p^{U}_{r=0} \big) \big( p^{Y}_{u=1} - p^{Y}_{u=0} \big). \label{eq:trs_bex}
    \end{align}
    The result then follows from \eqref{eq:sbt1_bex} and \eqref{eq:def3},\eqref{eq:def4} for $a=1$.
\end{proof}
\begin{proof}[Proof of the Confounding Bias]
    We consider \Cref{fig:confounding_bias} where $S \indep A, U \mid X, R=0$ and $P(U=0 \mid R=0) = P(U=1 \mid R=0) = 1/2$. Following \eqref{eq:fax_expanded}, we have 
    \begin{align}
        &f_a(X) \nonumber \\
        &= \frac{1}{2} \sum_{u=0,1} P (Y^a=1 \mid X, U=u) \frac{P(S=1 \mid X, U=u, R=0) P(A=a \mid X, U=u, R=0)}{P(S=1 \mid X, R=0) P(A=a \mid X, R=0)} \tag{$S \indep A \mid X, R=0$} \\
        &= \frac{1}{2} \sum_{u=0,1} P (Y^a=1 \mid X, U=u) \frac{P(A=a \mid X, U=u, R=0)}{P(A=a \mid X, R=0)} \nonumber \tag{$S \indep U \mid X, R=0$} \\
        &= \frac{1}{2} \sum_{u=0,1} P (Y^a=1 \mid X, U=u) \frac{P(A=a \mid X, U=u, R=0)}{\sum_{u=0,1} P(A=a, U=u \mid X, R=0)} \nonumber \\
        &= \frac{1}{2} \sum_{u=0,1} P (Y^a=1 \mid X, U=u) \frac{P(A=a \mid X, U=u, R=0)}{\sum_{u=0,1} P(A=a \mid X, U=u, R=0) P(U=u \mid X, R=0)} \nonumber \\
        &= \cancel{\frac{1}{2}} \sum_{u=0,1} P (Y^a=1 \mid X, U=u) \frac{P(A=a \mid X, U=u, R=0)}{\sum_{u=0,1} P(A=a \mid X, U=u, R=0) \underbrace{P(U=u \mid R=0)}_{\cancel{1/2}}} \nonumber \\
        &= \frac{\splitdfrac{\Big(P (Y^a=1 \mid X, U=0) P(A=a \mid X, U=0, R=0)}{\qquad\qquad + P (Y^a=1 \mid X, U=1) P(A=a \mid X, U=1, R=0) \Big)}}{P(A=a \mid X, U=0, R=0) + P(A=a \mid X, U=1, R=0)}. \label{eq:cb_fex}
    \end{align}
    Next, note that
    \begin{align}
        g_a(X) 
        &= P(Y^a = 1 \mid X, R=1) \nonumber \\
        &= \sum_{u=0,1} P(Y^a = 1, U=u \mid X, R=1) \nonumber \\ 
        &= \sum_{u=0,1} P(Y^a = 1 \mid X, U=u, R=1) P(U=u \mid X, R=1)   \nonumber \\
        &= \sum_{u=0,1} P(Y^a = 1 \mid X, U=u, R=1) P(U=u \mid R=1)   \label{eq:cb_gex3} \\
        &= \sum_{u=0,1} P(Y^a = 1 \mid X, U=u) P(U=u \mid R=1)   \label{eq:cb_gex2} \\
        &= \frac{P(Y^a = 1 \mid X, U=0) + P(Y^a = 1 \mid X, U=1)}{2}. \label{eq:cb_gex}
    \end{align}
    where \eqref{eq:cb_gex3} and \eqref{eq:cb_gex2} follow from the exogeneity of $X$ and $U$ and weak transportability, respectively, in \Cref{asm:weak_ex}. \eqref{eq:cb_gex} follows from the distribution of $U$ given above.
    
    Combining \eqref{eq:cb_fex} and \eqref{eq:cb_gex}, we have
    \begin{align} 
        b_a(X) &= g_a(X) - f_a(X) \nonumber \\
        &= \frac{\splitdfrac{\big( P (Y^a=1 \mid X, U=0) - P (Y^a=1 \mid X, U=1) \big)} {\qquad\qquad \times \big(P(A=a \mid X, U=1, R=0) - P(A=a \mid X, U=0, R=0) \big)} }{2 \big(P(A=a \mid X, U=1, R=0) + P(A=a \mid X, U=0, R=0) \big)} \label{eq:cb_bex}
    \end{align}
    The result then follows from \eqref{eq:cb_bex} and \eqref{eq:def3} for $a=1$.
\end{proof}
\begin{proof}[Proof of the Selection Bias]
    We consider \Cref{fig:confounding_bias} where $A \indep S, U \mid X, R=0$ and $P(U=0 \mid R=0) = P(U=1 \mid R=0) = 1/2$. Following \eqref{eq:fax_expanded}, we have 
    \begin{align}
        &f_a(X) \nonumber \\
        &= \frac{1}{2} \sum_{u=0,1} P (Y^a=1 \mid X, U=u) \frac{P(S=1 \mid X, U=u, R=0) P(A=a \mid X, U=u, R=0)}{P(S=1 \mid X, R=0) P(A=a \mid X, R=0)} \tag{$A \indep S \mid X, R=0$} \\
        &= \frac{1}{2} \sum_{u=0,1} P (Y^a=1 \mid X, U=u) \frac{P(S=1 \mid X, U=u, R=0)}{P(S=1 \mid X, R=0)} \nonumber \tag{$A \indep U \mid X, R=0$} \label{eq:qs1}.
    \end{align}
    Rest of the steps follow similarly to confounding bias case and we have
    \begin{align} 
        b_a(X) &= g_a(X) - f_a(X) \nonumber \\
        &= \frac{\splitdfrac{\big( P (Y^a=1 \mid X, U=0) - P (Y^a=1 \mid X, U=1) \big)}{\qquad\qquad \times \big(P(S=1 \mid X, U=1, R=0) - P(S=1 \mid X, U=0, R=0) \big)}  }{2 \big(P(S=1 \mid X, U=1, R=0) + P(S=1 \mid X, U=0, R=0) \big)}. \label{eq:sbt1_bex}
    \end{align}
    The result then follows from \eqref{eq:sbt1_bex} and \eqref{eq:def3} for $a=1$.
\end{proof}
\begin{figure}
    \centering
    \includegraphics[width=0.5\linewidth]{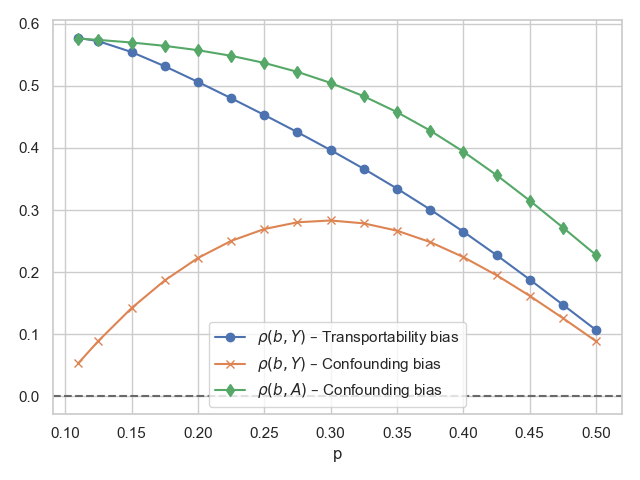}
    \caption{Different values of the correlation between the absolute bias function and the variance of downstream variables for different values of $p \in (0.1, 0.5)$ that parametrize ${\cal F}(p)$ in \eqref{eq:ludb}.}
    \label{fig:covs}
\end{figure}
\covarianceresults*
\begin{proof}[Proof of Covariance Results for the Transportability Bias Case]
    In this case, we have
    \begin{equation}
        b_1(X) = \big( p^{U}_{r=1} - p^{U}_{r=0} \big) \big( p^{Y}_{u=1} - p^{Y}_{u=0} \big), \label{eq:th_trs_1}
    \end{equation}
    by \Cref{eq:trbiaslem} in \Cref{lemma:biasexpressions}. Next,
    \begin{align}
        P(S=1 \mid X, R=0) 
        &= p^S_{u=0} + p^U_{r=0} (p^S_{u=1} - p^S_{u=0}) \tag{\Cref{lemma:useful2}, \eqref{eq:ule2_res}} \\
        &= p^S_{u=0}. \label{eq:th_trs_2}
    \end{align}
    where the last step is due to $p^S_{u=0} = p^S_{u=1}$ since $S \indep U \mid X, R=0$ (see \Cref{fig:transportability_bias}). It follows from \eqref{eq:th_trs_2} that
    \begin{equation}
        \V(S \mid X, R=0) = p^S_{u=0} (1 - p^S_{u=0}). \label{eq:th_trs_3}
    \end{equation}
    Since $p^{U}_{r=1}$, $p^{U}_{r=1}$, $p^{Y}_{u=1}$, $p^{Y}_{u=0}$, and $p^S_{u=0}$ are independent random variables sampled via Algorithm~1, it follows from \eqref{eq:th_trs_1} and \eqref{eq:th_trs_3} that 
    \begin{equation}
        \bar{\rho} (b_1, S) = \Co (\abs{b_1(X)},\V(S | X, R=0)) = 0. \label{eq:th_trs_res_S}
    \end{equation}
    Similarly,
    \begin{align}
        &P(A=1 \mid X,S=1, R=0) \nonumber \\
        &= p^A_{u=0} P(U=0 \mid X, S=1, R=0) + p^A_{u=1} P(U=1 \mid X, S=1, R=0) \tag{\Cref{lemma:useful3}, \eqref{eq:ule3_res}} \\
        &= p^A_{u=0}. \label{eq:th_trs_4}
    \end{align}
    where the last step is due to $p^A_{u=0} = p^A_{u=1}$ since $S \indep A \mid X, R=0$ (see \Cref{fig:transportability_bias}). It follows from \eqref{eq:th_trs_4} that
    \begin{equation}
        \V(A \mid X, S=1, R=0) = p^A_{u=0} (1 - p^A_{u=0}). \label{eq:th_trs_5}
    \end{equation}
    Since $p^{U}_{r=1}$, $p^{U}_{r=1}$, $p^{Y}_{u=1}$, $p^{Y}_{u=0}$, and $p^A_{u=0}$ are independent random variables, it follows from \eqref{eq:th_trs_1} and \eqref{eq:th_trs_5} that 
    \begin{equation}
        \bar{\rho} (b_1, A) = \Co (\abs{b_1(X)},\V(A | X, S=1, R=0)) = 0. \label{eq:th_trs_res_A}
    \end{equation}
    Next,
    \begin{align}
        &P(Y=1 \mid X,S=1,A=1,R=0) \nonumber \\
        &= p^Y_{u=0} P(U=0 \mid X, S=1, A=1, R=0) + p^Y_{u=1} P(U=1 \mid X, S=1, A=1, R=0) \tag{\Cref{lemma:useful4}, \eqref{eq:ule4_res}} \\
        &= p^Y_{u=0} P(U=0 \mid X, R=0) + p^Y_{u=1} P(U=1 \mid X, R=0) \label{eq:th_trs_6} \\
        &= p^Y_{u=0} P(U=0 \mid R=0) + p^Y_{u=1} P(U=1 \mid R=0) \label{eq:th_trs_7} \\
        &= p^Y_{u=0} + p^U_{r=0} (p^Y_{u=1} - p^Y_{u=0}). \label{eq:th_trs_8}
    \end{align}
    where \eqref{eq:th_trs_6} is due to $S \indep A \indep U \mid X, R=0$ and \eqref{eq:th_trs_7} is due to the exogeneity of $X$ and $U$. It follows from \eqref{eq:th_trs_8} that
    \begin{equation} \label{eq:th_trs_9}
        \V(Y \mid X,S=1,A=1,R=0) = (p^Y_{u=0} + p^U_{r=0} (p^Y_{u=1} - p^Y_{u=0}))(1 - p^Y_{u=0} - p^U_{r=0} (p^Y_{u=1} - p^Y_{u=0})).
    \end{equation}
    We can then write
    \begin{align}
        \bar{\rho} (b_1, Y) &= \Co (\abs{b_1(X)}, \V(Y=1 \mid X,S=1,A=1,R=0)) \nonumber \\
        &= \E[\abs{b_1(X)} \V(Y \mid X,S=1,A=1,R=0)] \nonumber\\
        &\qquad\qquad - \E[\abs{b_1(X)}] \E[\V(Y \mid X,S=1,A=1,R=0)] \nonumber \\
        &= \int_{{\cal F}(p)^4} \abs{x-y} \abs{z-w} \big(y + w (x-y)\big) \big(1-y-w(x-y)\big) \mathop{dPx} \mathop{dPy} \mathop{dPz} \mathop{dPw} \nonumber \\
        &\hspace{20pt}-\int_{{\cal F}(p)^4} \abs{x-y} \abs{z-w}  \mathop{dPx} \mathop{dPy} \mathop{dPz} \mathop{dPw} \nonumber\\
        &\hspace{20pt} \times \int_{{\cal F}(p)^3} \big(y + w (x-y)\big) \big(1-y-w(x-y)\big)  \mathop{dPx} \mathop{dPy} \mathop{dPw} \label{eq:th_trs_res_Y}
    \end{align}
    which is nonnegative for all ${\cal F}(p) \in {\cal F}$ (see \Cref{fig:covs} for normalized covariance, {\em i.e.} correlation, $\rho(b_1, Y)$). 
    
    Combining \eqref{eq:th_trs_res_S}, \eqref{eq:th_trs_res_A}, and \eqref{eq:th_trs_res_Y} concludes the proof.
\end{proof}
\begin{proof}[Proof of Covariance Results for the Confounding Bias Case]
    In this case, we have
    \begin{equation}
        b_1(X) = \frac{\big( p^{Y}_{u=1} - p^{Y}_{u=0} \big) \big( p^{A}_{u=1} - p^{A}_{u=0} \big)}{2 \big( p^{A}_{u=1} + p^{A}_{u=0} \big)}, \label{eq:th_con_1}
    \end{equation}
    by \Cref{eq:conbiaslem} in \Cref{lemma:biasexpressions}. Next,
    \begin{align}
        P(S=1 \mid X, R=0) 
        &= p^S_{u=0} + p^U_{r=0} (p^S_{u=1} - p^S_{u=0}) \tag{\Cref{lemma:useful2}, \eqref{eq:ule2_res}} \\
        &= p^S_{u=0}. \label{eq:th_con_2}
    \end{align}
    where the last step is due to $p^S_{u=0} = p^S_{u=1}$ since $S \indep U \mid X, R=0$ (see \Cref{fig:confounding_bias}). It follows from \eqref{eq:th_con_2} that
    \begin{equation}
        \V(S \mid X, R=0) = p^S_{u=0} (1 - p^S_{u=0}). \label{eq:th_con_3}
    \end{equation}
    Since $p^{A}_{u=1}$, $p^{A}_{u=0}$, $p^{Y}_{u=1}$, $p^{Y}_{u=0}$, and $p^S_{u=0}$ are independent random variables sampled via Algorithm~1, it follows from \eqref{eq:th_con_1} and \eqref{eq:th_con_3} that 
    \begin{equation}
        \bar{\rho} (b_1, S) = \Co (\abs{b_1(X)},\V(S | X, R=0)) = 0. \label{eq:th_con_res_S}
    \end{equation}
    Next,
    \begin{align}
        &P(A=1 \mid X,S=1, R=0) \nonumber \\
        &= p^A_{u=0} P(U=0 \mid X, S=1, R=0) + p^A_{u=1} P(U=1 \mid X, S=1, R=0) \tag{\Cref{lemma:useful3}, \eqref{eq:ule3_res}} \\
        &= p^A_{u=0} P(U=0 \mid R=0) + p^A_{u=1} P(U=1 \mid R=0) \label{eq:th_con_4} \\
        &= \frac{p^A_{u=0} + p^A_{u=1}}{2}  \label{eq:th_con_5}
    \end{align}
    where \eqref{eq:th_con_4} follows from $S  \indep U \mid X, R=0$ (see \Cref{fig:confounding_bias}) and the exogeneity of $X$ and $U$, and \eqref{eq:th_con_5} from $P(U=0 \mid R=0) = 1/2$. It follows from \eqref{eq:th_con_5} that
    \begin{equation}
        \V(A \mid X, S=1, R=0) = \frac{p^A_{u=0} + p^A_{u=1}}{2} \frac{2 - p^A_{u=0} - p^A_{u=1}}{2}. \label{eq:th_con_6}
    \end{equation}
    We start with the following key observation: both the bias in \eqref{eq:th_con_1} and the conditional variance of $A$ in \eqref{eq:th_con_6} are functions of $p^A_{u=1}$ and $p^A_{u=0}$. Hence, their covariance $\bar{\rho} (b_1, A)$ will be nonzero in general. We give an intuitive explanation before the formal result.

    Note that $\abs{b_1(X)} \propto \abs{p^A_{u=1} - p^A_{u=0}}$, and that $\V(A \mid X, S=1, R=0)$ is effectively the variance of a Bernoulli random variable with $p = \frac{p^A_{u=1} + p^A_{u=0}}{2}$, which increases toward $p=0.5$ and decreases toward $p=0$ and $p=1$. Since probabilities are confined in the $[0,1]$ range, $\abs{b_1(X)}$, is maximized when $p^A_{u=0} = 0$ and $p^A_{u=1} = 1$, or vice versa. In those cases, $\frac{p^A_{u=1} + p^A_{u=0}}{2} = 0.5$, that is,  $\V(A \mid X, S=1, R=0)$ is also maximized. 
    
    In short, $\abs{b_1(X)}$ and $\V(A \mid X, S=1, R=0)$ tend to align. This is intuitive: when $U$'s effect on $A$ is significant, the bias is expected to be larger. Also, not accounting for $U$ induces more uncertainty on the estimated probability of treatment assignment, which is reflected in its variance. 

    Formally, we are interested in
    \begin{align}
        \bar{\rho} (b_1, A) &= \Co (\abs{b_1(X)}, \V(A \mid X, S=1, R=0))) \nonumber \\
        &= \E[\abs{b_1(X)} \V(A \mid X, S=1, R=0)] - \E[\abs{b_1(X)}] \E[\V(A \mid X, S=1, R=0)] \nonumber \\
        &= \int_{{\cal F}(p)^4} \frac{\abs{x-y} \abs{z-w}}{2(z+w)} \frac{z+w}{2} \frac{2-z-w}{2}  \mathop{dPx} \mathop{dPy} \mathop{dPz} \mathop{dPw} \nonumber \\
        &\hspace{20pt} - \int_{{\cal F}(p)^4} \frac{\abs{x-y} \abs{z-w}}{2(z+w)}  \mathop{dPx} \mathop{dPy} \mathop{dPz} \mathop{dPw} \times \int_{{\cal F}(p)^2} \frac{z+w}{2} \frac{(2-z-w)}{2}  \mathop{dPz} \mathop{dPw} \label{eq:th_con_res_A}
    \end{align}
    which is nonnegative for all ${\cal F}(p) \in {\cal F}$ (see \Cref{fig:covs} for normalized covariance, {\em i.e.} correlation, $\rho(b_1, A)$). 
    
    Next,
    \begin{align}
        &P(Y=1 \mid X,S=1,A=1,R=0) \nonumber \\
        &= p^Y_{u=0} P(U=0 \mid X, S=1, A=1, R=0) + p^Y_{u=1} P(U=1 \mid X, S=1, A=1, R=0) \tag{\Cref{lemma:useful4}, \eqref{eq:ule4_res}} 
    \end{align}
    Here, observe that
    \begin{align}
        &P(U \mid X, S=1, A=1, R=0) \nonumber \\
        &= \frac{P(U, S=1, A=1 \mid X, R=0)}{P(S=1,A=1 \mid X,R=0)} \nonumber \\
        &= \frac{P(U, A=1 \mid X, R=0) P(S=1 \mid X, R=0)}{P(S=1 \mid X,R=0) P(A=1 \mid X,R=0)} \tag{$S \indep A,U \mid X,R=0$} \\
        &= P(U \mid X, A=1, R=0) \nonumber \\
        &= \frac{P(A=1 \mid X, U, R=0) P(U \mid X,R=0)}{P(A=1 \mid X,R=0)} \nonumber \\
        &= \frac{P(A=1 \mid X, U, R=0) P(U \mid R=0)}{P(A=1 \mid X, U=0, R=0) P(U=0 \mid R=0) + P(A=1 \mid X, U, R=0) P(U=1 \mid R=0)} \tag{$X \indep U \mid R=0$} \\
        &= \frac{p^A_U}{p^A_{u=0} + p^A_{u=1}} \label{eq:th_con_7}
    \end{align}
    where \eqref{eq:th_con_7} is due to $P(U=0 \mid R=0)=1/2$. Plugging \eqref{eq:th_con_7} back in we have
    \begin{align} \label{eq:th_con_8}
        P(Y=1 \mid X,S=1,A=1,R=0) = \frac{p^Y_{u=0} p^A_{u=0} + p^Y_{u=1} p^A_{u=1}}{p^A_{u=0} + p^A_{u=1}}
    \end{align}
    It follows from \eqref{eq:th_con_8} that
    \begin{equation} \label{eq:th_con_9}
        \V(Y \mid X,S=1,A=1,R=0) = \frac{p^Y_{u=0} p^A_{u=0} + p^Y_{u=1} p^A_{u=1}}{p^A_{u=0} + p^A_{u=1}} \Big( 1 - \frac{p^Y_{u=0} p^A_{u=0} + p^Y_{u=1} p^A_{u=1}}{p^A_{u=0} + p^A_{u=1}} \Big). 
    \end{equation}
    We can then write
    \begin{align}
        \bar{\rho} (b_1, Y) &= \Co (\abs{b_1(X)}, \V(Y=1 \mid X,S=1,A=1,R=0)) \nonumber \\
        &= \E[\abs{b_1(X)} \V(Y \mid X,S=1,A=1,R=0)] \nonumber\\ 
        &\qquad\qquad - \E[\abs{b_1(X)}] \E[\V(Y \mid X,S=1,A=1,R=0)] \nonumber \\
        &= \int_{{\cal F}(p)^4} \frac{\abs{x-y} \abs{z-w}}{2(z+w)} \frac{xz+yw}{z+w} \big( 1- \frac{xz+yw}{z+w} \big)  \mathop{dPx} \mathop{dPy} \mathop{dPz} \mathop{dPw} \nonumber \\
        &\hspace{20pt} - \int_{{\cal F}(p)^4} \frac{\abs{x-y} \abs{z-w}}{2(z+w)}  \mathop{dPx} \mathop{dPy} \mathop{dPz} \mathop{dPw} \\
        &\hspace{20pt} \times \int_{{\cal F}(p)^4} \frac{xz+yw}{z+w} \big( 1- \frac{xz+yw}{z+w} \big)  \mathop{dPx} \mathop{dPy} \mathop{dPz} \mathop{dPw} \label{eq:th_con_res_Y}
    \end{align}
    which is nonnegative for all ${\cal F}(p) \in {\cal F}$ (see \Cref{fig:covs} for normalized covariance, {\em i.e.} correlation, $\rho(b_1, Y)$). 
    
    Combining \eqref{eq:th_con_res_S}, \eqref{eq:th_con_res_A}, and \eqref{eq:th_con_res_Y} concludes the proof.
\end{proof}
\begin{proof}[Proof of Covariance Results for the Selection Bias Case]
    In this case, we have
    \begin{equation}
        b_1(X) = \frac{\big( p^{Y}_{u=1} - p^{Y}_{u=0} \big) \big( p^{S}_{u=1} - p^{S}_{u=0} \big)}{2 \big( p^{S}_{u=1} + p^{S}_{u=0} \big)}, \label{eq:th_sel_1}
    \end{equation}
    by \Cref{eq:selbiaslem} in \Cref{lemma:biasexpressions}. Next,
    \begin{align}
        P(S=1 \mid X, R=0) 
        &= p^S_{u=0} + p^U_{r=0} (p^S_{u=1} - p^S_{u=0}) \tag{\Cref{lemma:useful2}, \eqref{eq:ule2_res}} \\
        &= \frac{p^S_{u=0} + p^S_{u=1}}{2}. \label{eq:th_sel_2}
    \end{align}
    where the last step is due to from $p^U_{r=0} = 1/2$. It follows from \eqref{eq:th_sel_2} that
    \begin{equation}
        \V(S=1 \mid X, R=0) = \frac{p^S_{u=0} + p^S_{u=1}}{2} \frac{(2 - p^S_{u=0} - p^S_{u=1})}{2}. \label{eq:th_sel_3}
    \end{equation}
    Notice that \eqref{eq:th_sel_1} and \eqref{eq:th_sel_3} follow the same formats as \eqref{eq:th_con_1} and \eqref{eq:th_con_6}, respectively. Then, following the same steps in the derivation of \eqref{eq:th_con_res_A}, we have
    \begin{equation} \label{eq:th_sel_res_S}
        \bar{\rho} (b_1, S) > 0.
    \end{equation}
    for all ${\cal F}(p) \in {\cal F}$.

    Next,
    \begin{align}
        &P(A=1 \mid X,S=1, R=0) \nonumber \\
        &= p^A_{u=0} P(U=0 \mid X, S=1, R=0) + p^A_{u=1} P(U=1 \mid X, S=1, R=0) \tag{\Cref{lemma:useful3}, \eqref{eq:ule3_res}} \\
        &= p^A_{u=0} P(U=0 \mid X, S=1, R=0) + p^A_{u=0} P(U=1 \mid X, S=1, R=0) \label{eq:th_sel_5} \\
        &= p^A_{u=0}  \label{eq:th_sel_6}
    \end{align}
    where \eqref{eq:th_sel_5} follows from $A  \indep U \mid X, R=0$, hence $p^A_{u=0} = p^A_{u=1}$  (see \Cref{fig:selection_bias_1}). \eqref{eq:th_sel_6} follows simply because $P(U=0 \mid X, S=1, R=0) + P(U=1 \mid X, S=1, R=0) = 1$. It follows from \eqref{eq:th_sel_6} that
    \begin{equation}
        \V(A \mid X, S=1, R=0) = p^A_{u=0} (1 - p^A_{u=0}). \label{eq:th_sel_7}
    \end{equation}
    Since $p^{S}_{u=1}$, $p^{S}_{u=0}$, $p^{Y}_{u=1}$, $p^{Y}_{u=0}$, and $p^A_{u=0}$ are independent random variables sampled via Algorithm~1, it follows from \eqref{eq:th_sel_1} and \eqref{eq:th_sel_7} that 
    \begin{equation}
        \bar{\rho} (b_1, A) = \Co (\abs{b_1(X)},\V(S | X, R=0)) = 0. \label{eq:th_sel_res_A}
    \end{equation}
    Next,
    \begin{align}
        &P(Y=1 \mid X,S=1,A=1,R=0) \nonumber \\
        &= p^Y_{u=0} P(U=0 \mid X, S=1, A=1, R=0) + p^Y_{u=1} P(U=1 \mid X, S=1, A=1, R=0) \tag{\Cref{lemma:useful4}, \eqref{eq:ule4_res}} 
    \end{align}
    Here, observe that
    \begin{align}
        &P(U \mid X, S=1, A=1, R=0) \nonumber \\
        &= \frac{P(U, S=1, A=1 \mid X, R=0)}{P(S=1,A=1 \mid X,R=0)} \nonumber \\
        &= \frac{P(U, S=1 \mid X, R=0) P(A=1 \mid X, R=0)}{P(S=1 \mid X,R=0) P(A=1 \mid X,R=0)} \tag{$A \indep S,U \mid X,R=0$} \\
        &= P(U \mid X, S=1, R=0) \nonumber \\
        &= \frac{P(S=1 \mid X, U, R=0) P(U \mid X,R=0)}{P(S=1 \mid X,R=0)} \nonumber \\
        &= \frac{P(S=1 \mid X, U, R=0) P(U \mid R=0)}{P(S=1 \mid X, U=0, R=0) P(U=0 \mid R=0) + P(S=1 \mid X, U, R=0) P(U=1 \mid R=0)} \tag{$X \indep U \mid R=0$} \\
        &= \frac{p^S_U}{p^S_{u=0} + p^S_{u=1}} \label{eq:th_sel_8}
    \end{align}
    where \eqref{eq:th_sel_8} is due to $P(U=0 \mid R=0)=1/2$. Plugging \eqref{eq:th_sel_8} back in we have
    \begin{align} \label{eq:th_sel_9}
        P(Y=1 \mid X,S=1,A=1,R=0) = \frac{p^Y_{u=0} p^S_{u=0} + p^Y_{u=1} p^S_{u=1}}{p^S_{u=0} + p^S_{u=1}}
    \end{align}
    It follows from \eqref{eq:th_sel_9} that
    \begin{equation} \label{eq:th_sel_10}
        \V(Y \mid X,S=1,A=1,R=0) = \frac{p^Y_{u=0} p^S_{u=0} + p^Y_{u=1} p^S_{u=1}}{p^S_{u=0} + p^S_{u=1}} \Big( 1 - \frac{p^Y_{u=0} p^S_{u=0} + p^Y_{u=1} p^S_{u=1}}{p^S_{u=0} + p^S_{u=1}} \Big). 
    \end{equation}
    Notice that \eqref{eq:th_sel_1} and \eqref{eq:th_sel_10} follow the same formats as \eqref{eq:th_con_1} and \eqref{eq:th_con_9}, respectively. Then, following the same steps in the derivation of \eqref{eq:th_con_res_Y}, we have
    \begin{equation} \label{eq:th_sel_res_Y}
        \bar{\rho} (b_1, Y) > 0.
    \end{equation}
    for all ${\cal F}(p) \in {\cal F}$.

    Combining \eqref{eq:th_sel_res_S}, \eqref{eq:th_sel_res_A}, and \eqref{eq:th_sel_res_Y} concludes the proof.
\end{proof}

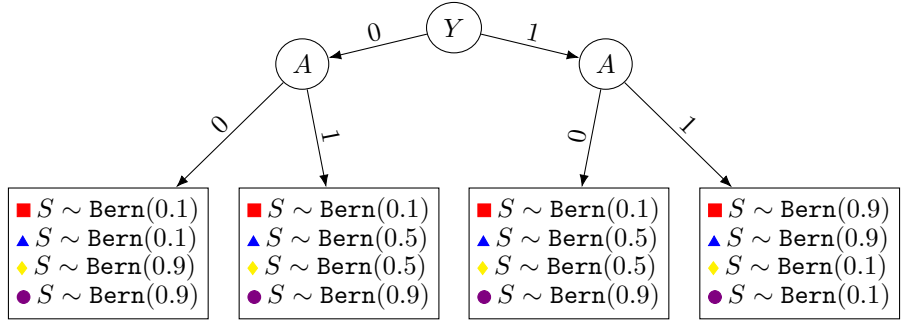
\begin{figure}[htbp]
    \centering
    \begin{tikzpicture}
        \node[state] (y) [] {$Y$};
        \node[state] (a1) [below left =of y, xshift=-0.5cm, yshift=1cm] {$A$}; 
        \node[state] (a2) [below right =of y, xshift=0.5cm, yshift=1cm] {$A$}; 
        \node[state, rectangle, align=left] (l1) [below     left =of a1,xshift=0.01cm, yshift=-0.4cm] {\marksymbol{square*}{red} $S \sim \texttt{Bern}(0.1)$ \\ \marksymbol{triangle*}{blue} $S \sim \texttt{Bern}(0.1)$ \\ \marksymbol{diamond*}{yellow} $S \sim \texttt{Bern}(0.9)$ \\ \marksymbol{oplus*}{blue!50!red} $S \sim \texttt{Bern}(0.9)$};
        \node[state, rectangle, align=left] (l2) [below right =of a1,xshift=-2.1cm, yshift=-0.4cm] {\marksymbol{square*}{red} $S \sim \texttt{Bern}(0.1)$ \\\marksymbol{triangle*}{blue} $S \sim \texttt{Bern}(0.5)$ \\ \marksymbol{diamond*}{yellow} $S \sim \texttt{Bern}(0.5)$ \\ \marksymbol{oplus*}{blue!50!red} $S \sim \texttt{Bern}(0.9)$};
        \node[state, rectangle, align=left] (r1) [below left =of a2,xshift=2.1cm, yshift=-0.4cm] {\marksymbol{square*}{red} $S \sim \texttt{Bern}(0.1)$ \\ \marksymbol{triangle*}{blue} $S \sim \texttt{Bern}(0.5)$ \\ \marksymbol{diamond*}{yellow} $S \sim \texttt{Bern}(0.5)$ \\ \marksymbol{oplus*}{blue!50!red} $S \sim \texttt{Bern}(0.9)$};
        \node[state, rectangle, align=left] (r2) [below right =of a2,xshift=-0.01cm, yshift=-0.4cm] {\marksymbol{square*}{red} $S \sim \texttt{Bern}(0.9)$ \\ \marksymbol{triangle*}{blue} $S \sim \texttt{Bern}(0.9)$ \\ \marksymbol{diamond*}{yellow} $S \sim \texttt{Bern}(0.1)$ \\ \marksymbol{oplus*}{blue!50!red} $S \sim \texttt{Bern}(0.1)$};        
        \path (y) edge node[el] {$0$} (a1);
        \path (y) edge node[el] {$1$} (a2);
        \path (a1) edge node[el] {$0$} (l1);
        \path (a1) edge node[el] {$1$} (l2);
        \path (a2) edge node[el] {$0$} (r1);
        \path (a2) edge node[el] {$1$} (r2);
    \end{tikzpicture}
    \caption{Four different specifications of the selection model. See \eqref{eq:t2sb_c1} - (\ref{eq:t2sb_c4}) for the values of the correlation signals in each case.}
    \label{fig:sel_type2_DGP}
\end{figure}

\subsubsection{Proof of the Results in \Cref{sec:T2SB}} \label{sec:type2_appendix}
As alluded to in the main paper, type 2 selection bias requires modified assumptions given its distinct graph structure, i.e. absence of $U$. We repeat \Cref{asmp:weak_ex_v2} here for clarity.
\weakexvtwo*

The covariance signals in this case depend on the specific selection mechanism characterized by $P(S=1|Y=y,A=a)$. We consider the specifications shown in \Cref{fig:sel_type2_DGP} of the selection model in our synthetic experiments as well as in our theoretical analysis. 
Conceptually, in our analysis, we show that the sign of the covariance signals differ based on the selection mechanism, implying that the covariances are nonzero, in general. First, we give adjusted definitions that we will use in our derivations. 
\begin{definition}(Abbreviations for treated group in OS, selection bias type 2)
\label{def:abbs-type2}
\begin{align}
    p^A &:= P(A = 1|X,R=0)\\
    p^{S}_{{Y=y,A=a}} &:= P(S = 1|X,A=a,Y = y,R=0) \label{eq:psy}\\
    p^{Y}_{a} &:= P(Y^a = 1 | X) \label{eq:py}
\end{align}
\end{definition}
We need not condition on $A$ and $R$ in (\ref{eq:py}) due to our assumptions of ignorability and transportability, respectively. Furthermore, we are unable to eliminate the conditioning on $A$ and $Y$ in (\ref{eq:psy}) due to the lack of conditional ignorability in the type 2 selection bias setting.

We can express the bias mechanism, $b_1(X)$, under type 2 selection bias as follows, 
\begin{restatable}[Selection Bias Type 2]{lemma}{seltypetwoexpression}
\label{lemma:biasexpression-seltype2}
Consider \Cref{fig:selection_bias_2} where $Y^a \not\!\perp\!\!\!\perp A \mid X, S, R=0$. Then, we have, 
\begin{align} \label{eq:selbias2lem}
b_1(X) = p^{Y}_1\left(1 - \frac{p^{S}_{Y=1,A=1}}{p^{S}_{Y=1,A=1}p^{Y}_1 + p^{S}_{Y=0,A=1}(1-p^{Y}_1)}\right)
\end{align}
\end{restatable}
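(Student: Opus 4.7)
}

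The plan is to compute $g_1(X)$ and $f_1(X)$ separately and take their difference. For $g_1(X)$, I would use that under \Cref{asmp:weak_ex_v2} (transportability $Y^a \indep R \mid X$ and ignorability $Y^a \indep A \mid X, R$), together with the RCT internal validity in \Cref{asm:rct_iv}, the conditioning on $S=1$, $A=1$, $R=1$ in the RCT can be stripped, giving $g_1(X) = \E[Y^1 \mid X] = p^Y_1$ as per \Cref{def:abbs-type2}.

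For $f_1(X) = P(Y=1 \mid X, S=1, A=1, R=0)$, I would apply Bayes' rule in the OS population,
\begin{equation*}
f_1(X) = \frac{P(Y=1, S=1 \mid X, A=1, R=0)}{P(S=1 \mid X, A=1, R=0)}.
\end{equation*}
The numerator factors as $P(S=1 \mid X, A=1, Y=1, R=0)\, P(Y=1 \mid X, A=1, R=0) = p^S_{Y=1,A=1}\, p^Y_1$, where the second equality uses ignorability $Y^a \indep A \mid X, R$ and transportability in the definition $p^Y_1 = P(Y^1=1 \mid X)$. For the denominator, I would sum over $Y \in \{0,1\}$ and apply the same independencies to obtain $p^S_{Y=1,A=1}\, p^Y_1 + p^S_{Y=0,A=1}\, (1-p^Y_1)$. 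This yields
\begin{equation*}
f_1(X) = \frac{p^S_{Y=1,A=1}\, p^Y_1}{p^S_{Y=1,A=1}\, p^Y_1 + p^S_{Y=0,A=1}(1-p^Y_1)}.
\end{equation*}

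Subtracting and factoring $p^Y_1$ from $b_1(X) = g_1(X) - f_1(X) = p^Y_1 - f_1(X)$ gives the claimed form. The main subtlety, rather than any hard calculation, is bookkeeping the conditional independencies: one must not invoke the ``ignorability of treatment'' after conditioning on $S=1$, since in \Cref{fig:selection_bias_2} $S$ is a collider on the $A \to S \leftarrow Y$ path, so $Y^a \not\!\perp\!\!\!\perp A \mid X, R, S$. Accordingly, ignorability may only be used \emph{before} conditioning on $S$, which is why $p^S_{Y=y,A=a}$ in \eqref{eq:psy} retains conditioning on both $A$ and $Y$ whereas $p^Y_a$ in \eqref{eq:py} does not require conditioning on $A$ or $R$.
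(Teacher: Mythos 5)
Your proposal is correct and follows essentially the same route as the paper's proof: identify $g_1(X) = p^Y_1$ by stripping the conditioning using internal validity plus \Cref{asmp:weak_ex_v2}, expand $f_1(X)$ via Bayes' rule with the denominator summed over $Y$, and apply ignorability and transportability only before conditioning on the collider $S$. Your remark on where ignorability may and may not be invoked matches the paper's handling of $p^S_{Y=y,A=a}$ versus $p^Y_a$ exactly.
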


\begin{proof}
As a shorthand, we will define $p^{S}_{Y=y,A=a} = p^{S}_{ya}$. We begin by showing expressions for $g_a(X)$ and $f_a(X)$ individually. We have, 
\begin{align*}
    g_a(X) &= P(Y^a =1 \mid X, R = 1) \\
    &= \sum_{s\in\{0,1\}} P(Y^a = 1, S=s \mid X, R = 1) \\
    &= \sum_{s\in\{0,1\}} P(Y^a=1 \mid X, S=s, R = 1) P(S=s \mid X, R=1)  \\
    &= \sum_{s\in\{0,1\}} P(Y^a=1 \mid X, R = 1) P(S=s \mid X, R=1), \quad\quad \text{given } Y^a\indep S \mid  X,R\\
    &= \sum_{s\in\{0,1\}} P(Y^a=1 \mid X) P(S=s \mid X, R=1), \quad\quad \text{given } Y^a \indep R \mid X \\
    &= P(Y^a=1 \mid X) \sum_{s\in\{0,1\}} P(S=s \mid X, R=1) \\
    &= P(Y^a=1 \mid X)
\end{align*}
Next, we have, 
\begin{align*}
    f_a(X) &= P(Y=1 \mid X, S=1,A=a,R=0) \\ 
    &= \frac{P(Y=1, S=1 \mid X, A=a,R=0)}{P(S=1 \mid X, A=a,R=0)} \\
    &= \frac{P(S=1 \mid X, Y = 1, A=a,R=0)P(Y =1 \mid X, A=a, R=0)}{P(S=1 \mid X, A=a,R=0)} \\
    &= \frac{P(S=1 \mid X, Y = 1, A=a,R=0)P(Y^a =1 \mid X, A=a, R=0)}{P(S=1 \mid X, A=a,R=0)} \\
    &= \frac{P(S=1 \mid X, Y = 1, A=a,R=0)P(Y^a =1 \mid X, R=0)}{P(S=1 \mid X, A=a,R=0)},\quad\quad \text{given } Y^a \indep A \mid X, R\\
    &= \frac{P(S=1 \mid X, Y = 1, A=a,R=0)P(Y^a =1 \mid X)}{P(S=1 \mid X, A=a,R=0)},\quad\quad \text{given } Y^a \indep R \mid X\\
\end{align*}
Combining the above two expressions, we have,
\begin{align*}
    b_a(X) &= g_a(X) - f_a(X)\\
    &= P(Y^a =1 \mid X) - \frac{P(S=1 \mid X, Y = 1, A=a,R=0)P(Y^a =1 \mid X)}{P(S=1 \mid X, A=a,R=0)} \\ 
    &= P(Y^a =1 \mid X) \left(1 - \frac{P(S=1 \mid X, Y = 1, A=a,R=0)}{P(S=1 \mid X, A=a,R=0)}\right) \\
    &= P(Y^a =1 \mid X)\nonumber\\
    &\qquad \left(1 - \frac{P(S=1 \mid X, Y = 1, A=a,R=0)}{\sum_{y\in\{0,1\}}P(S=1 \mid X, Y = y, A=a,R=0)P(Y = y \mid X, A=a,R=0) }\right) \\
    &= P(Y^a =1 \mid X) \nonumber\\
    &\qquad \left(1 - \frac{P(S=1 \mid X, Y = 1, A=a,R=0)}{\sum_{y\in\{0,1\}}P(S=1 \mid X, Y = y, A=a,R=0)P(Y^a = y \mid X, A=a,R=0) }\right) \\
    &= P(Y^a =1 \mid X) \left(1 - \frac{P(S=1 \mid X, Y = 1, A=a,R=0)}{\sum_{y\in\{0,1\}}P(S=1 \mid X, Y = y, A=a,R=0)P(Y^a = y \mid X) }\right) \\
    &= p^{Y}_a\left(1 - \frac{p^{S}_{1a}}{p^{S}_{1a}p^{Y}_a + p^{S}_{0a}(1-p^{Y}_a)} \right)
\end{align*}
completing the proof.
\end{proof}

\covarianceresultsseltypetwo* 

\begin{proof}
From \Cref{lemma:biasexpression-seltype2}, we have, 
\begin{align*}
    b_1(X) = p^{Y}_1\left(1 - \frac{p^{S}_{11}}{p^{S}_{11}p^{Y}_1 + p^{S}_{01}(1-p^{Y}_1)} \right)
\end{align*}

We will compute the variance of each of the target variables, $Y,A,$ and $S$, followed by the covariance of the bias and the conditional variances. Starting with $Y$, we have,
\begin{align}
    P(Y =1 &\mid X, S=1, A=1, R=0) \nonumber\\
    &= \frac{P(Y=1, S=1 \mid X, A=1, R=0)}{P(S=1 \mid X, A=1, R=0)}\nonumber\\ 
    &= \frac{P(S=1 \mid X, Y=1, A=1, R=0)P(Y = 1 \mid X, A=1, R=0)}{P(S=1 \mid X, A=1, R=0)} \nonumber\\
    &= \frac{P(S=1 \mid X, Y=1, A=1, R=0)P(Y^1 = 1 \mid X, A=1, R=0)}{P(S=1 \mid X, A=1, R=0)} \nonumber\\
    &= \frac{P(S=1 \mid X, Y=1, A=1, R=0)P(Y^1 = 1 \mid X)}{P(S=1 \mid X, A=1, R=0)},\quad\text{by Assmp.}~\ref{asmp:weak_ex_v2} \nonumber\\
    &= \frac{P(S=1 \mid X, Y=1, A=1, R=0)P(Y^1 = 1 \mid X)}{\sum_y P(S=1 \mid X,Y = y,A=1, R=0)P(Y = y \mid X,A=1,R=0)} \nonumber\\
    &= \frac{P(S=1 \mid X, Y=1, A=1, R=0)P(Y^1 = 1 \mid X)}{\sum_y P(S=1 \mid X,Y = y,A=1, R=0)P(Y^1 = y \mid X,A=1,R=0)} \nonumber\\
    &= \frac{P(S=1 \mid X, Y=1, A=1, R=0)P(Y^1 = 1 \mid X)}{\sum_y P(S=1 \mid X,Y = y,A=1, R=0)P(Y^1 = y \mid X)},\quad\text{by Assmp.}~\ref{asmp:weak_ex_v2} \nonumber\\
    &= \frac{p^{S}_{11}p^{Y}_1}{p^{S}_{11}p^{Y}_1 + p^{S}_{01}(1-p^{Y}_1)} \label{eq:py_sel2_hello}
\end{align}

It follows from~\eqref{eq:py_sel2_hello} that,
\begin{align}
    \V(Y \mid X, S=1,A=1,R=0) = \left( \frac{p^{S}_{11}p^{Y}_1}{p^{S}_{11}p^{Y}_1 + p^{S}_{01}(1-p^{Y}_1)} \right) \left( 1 - \frac{p^{S}_{11}p^{Y}_1}{p^{S}_{11}p^{Y}_1 + p^{S}_{01}(1-p^{Y}_1)} \right)
\end{align}
Next, we consider $S$. We have,
\begin{align}
    P(S &= 1 | X, R = 0)\nonumber\\ 
    &= \sum_{y,a}P(S = 1 \mid X, Y = y, A = a, R = 0)P(Y = y, A = a | X, R=0) \nonumber\\
    &= \sum_{y,a}P(S = 1 \mid X, Y = y, A = a, R=0) \nonumber\\
    &\qquad\qquad P(Y = y \mid X, A = a, R= 0)P(A = a | X, R = 0) \nonumber\\
    &= \sum_{y,a}P(S = 1 \mid X, Y = y, A = a, R=0) \nonumber\\
    &\qquad\qquad P(Y^a = y \mid X, A = a, R= 0)P(A = a | X, R = 0) \nonumber\\
    &= \sum_{y,a}P(S = 1 \mid X, Y^a = y, A = a, R=0) \nonumber\\ &\qquad\qquad P(Y^a = y \mid X)P(A = a | X, R = 0) \quad\text{by Assmp.}~\ref{asmp:weak_ex_v2} \nonumber\\
    &= \underbrace{p^{S}_{11}p^{Y}_1 p^A + p^{S}_{10}p^{Y}_0(1-p^A) + p^{S}_{01}(1-p^{Y}_1) p^A + p^{S}_{00}(1-p^{Y}_0)(1-p^A)}_{p^{S\mid X}} \label{eq:sel2biaspsx}
\end{align}
It follows from \eqref{eq:sel2biaspsx} that,
\begin{align}
    \V(S \mid X, R=0) = p^{S \mid X} (1 - p^{S \mid X})
\end{align}
Next, we consider $A$. We have,
\begin{align}
    &P(S=1,A=1 \mid X, R = 0) \nonumber\\
    &= P(S=1 \mid X, A=1, R = 0)P(A = 1 | X, R=0) \nonumber\\
    &= \sum_y P(S=1 \mid X, A=1, Y = y, R = 0)P(Y^1=y \mid X, A=1, R =0 )P(A = 1 | X, R=0) \nonumber\\
    &= \sum_y P(S=1 \mid X, A=1, Y = y, R = 0)P(Y^1=y \mid X)P(A = 1 | X, R=0)\quad\text{by Assmp.}~\ref{asmp:weak_ex_v2} \nonumber\\
    &= p^{S}_{11}p^{Y}_1 p^A + p^{S}_{01}(1-p^{Y}_1) p^A \label{eq:psaxr}
\end{align}
Hence, we have from \eqref{eq:sel2biaspsx} and \eqref{eq:psaxr}, 
\begin{align}
    P(A = 1 &\mid X,S=1,R=0) \nonumber\\
    &= \frac{P(S=1,A=1 \mid X, R = 0)}{P(S = 1 \mid X, R=0)} \nonumber \\
    &= \underbrace{\frac{p^{S_{11}}p^{Y_1} p^A + p^{S_{01}}(1-p^{Y_1}) p^A}{p^{S_{11}}p^{Y_1} p^A + p^{S_{10}}p^{Y_0}(1-p^A) + p^{S_{01}}(1-p^{Y_1}) p^A + p^{S_{00}}(1-p^{Y_0})(1-p^A)}}_{p^{A\mid X,S}} \label{eq:sel2biaspba}
\end{align}
It follows from \eqref{eq:sel2biaspba} that,
\begin{align}
    \V(A \mid X, S=1, R=0) = p^{A \mid X, S} (1 - p^{A \mid X, S})
\end{align}
We compute each covariance signal as in the prior proofs for $T \in \{A,Y,S\}$ and for each of the selection mechanisms specified in \Cref{fig:sel_type2_DGP}, 
\begin{align}
    \bar{\rho} (b_1, T) &= \Co (\abs{b_1(X)}, \V(T \mid \cdot)) \nonumber \\
    &= \E[\abs{b_1(X)} \V(T \mid \cdot))] - \E[\abs{b_1(X)}] \E[\V(A \mid \cdot))] \nonumber 
\end{align}
followed by normalization by the standard deviations of the bias and variance. We get the following results for each selection mechanism,
\begin{align}
    &\{\marksymbol{square*}{red} P^S_{00} = 0.1, P^S_{01} = 0.1, P^S_{10} = 0.1, P^S_{11} = 0.9\}\nonumber\\
    &\hspace{80pt}~\text{---}~\rho(b_1,S) = -0.66, \rho(b_1, A) = 0.013, \rho(b_1,Y) = 0.98 \label{eq:t2sb_c1} \\
    &\{\marksymbol{triangle*}{blue} P^S_{00} = 0.1, P^S_{01} = 0.5, P^S_{10} = 0.5, P^S_{11} = 0.9\}\nonumber\\
    &\hspace{80pt}~\text{---}~\rho(b_1,S) = 0.33, \rho(b_1, A) = -0.010, \rho(b_1,Y) = 0.95 \label{eq:t2sb_c2} \\
    &\{\marksymbol{diamond*}{yellow} P^S_{00} = 0.9, P^S_{01} = 0.5, P^S_{10} = 0.5, P^S_{11} = 0.1\}\nonumber\\
    &\hspace{80pt}~\text{---}~\rho(b_1,S) = -0.37, \rho(b_1, A) = 0.058, \rho(b_1,Y) = 0.98 \label{eq:t2sb_c3} \\
    &\{\marksymbol{oplus*}{blue!50!red} P^S_{00} = 0.9, P^S_{01} = 0.9, P^S_{10} = 0.9, P^S_{11} = 0.1\}\nonumber\\
    &\hspace{80pt}~\text{---}~\rho(b_1,S) = 0.63, \rho(b_1, A) = 0.052, \rho(b_1,Y) = 0.97 \label{eq:t2sb_c4}
\end{align}

The above implies that the covariance signals are nonzero, in general, completing the proof.

\end{proof}

\subsubsection{Proofs of the Results in \Cref{sec:AUEC}}
We repeat our shorthand notation of the covariance signals as well as the definitions of the nuisance function estimators for clarity. 
\CovarianceSignals*
\ContinuousEstimator*

\begin{lemma}
\label{lem:squared-error}
Assume that \eqref{eq:etas}-\eqref{eq:etab} are consistent estimators. Then, for any $T \in \{S,A,Y\}$, we have in the limit as $n \rightarrow \infty$, 
\begin{equation}
    \E_{X,T}\brc{\lvert\widehat{b}_1 (X)\rvert  (T_j - \widehat{\eta}_{T}(X))^2} = \E_{X}\brc{\lvert b_1 (X)\rvert\V_{T|X}(T|X,\cdot)},
\end{equation}
where $T \mid X,\cdot \in \{Y \mid X, S = 1, A = 1, R =0; A \mid X, S = 1, R =0; S \mid X, R = 0 \}$, for $Y$, $A$, and $S$, respectively.
\end{lemma}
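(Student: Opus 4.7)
The plan is to exploit consistency of the nuisance estimators together with the tower property. Fix $T \in \{S,A,Y\}$ and let $\eta_T(X) \coloneqq \E[T \mid X,\cdot]$ denote the population-level conditional mean whose estimator is $\widehat{\eta}_T(X)$; similarly let $b_1(X)$ denote the true bias function. By assumption $\widehat{\eta}_T(X) \xrightarrow{p} \eta_T(X)$ and $\widehat{b}_1(X) \xrightarrow{p} b_1(X)$ as $n \to \infty$. Since all quantities involved are bounded (probabilities, and Bernoulli outcomes, all lie in $[0,1]$, so $|\widehat{b}_1(X)| \le 1$ and $(T - \widehat{\eta}_T(X))^2 \le 1$), the continuous mapping theorem together with dominated convergence gives
\begin{equation*}
    \E_{X,T}\!\left[\lvert \widehat{b}_1(X)\rvert (T - \widehat{\eta}_T(X))^2\right] \;\longrightarrow\; \E_{X,T}\!\left[\lvert b_1(X)\rvert (T - \eta_T(X))^2\right].
\end{equation*}

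Next I would apply the law of iterated expectations, conditioning on $X$ (and on the appropriate selection/treatment indicators implicit in $\cdot$):
\begin{equation*}
    \E_{X,T}\!\left[\lvert b_1(X)\rvert (T - \eta_T(X))^2\right] \;=\; \E_X\!\left[\lvert b_1(X)\rvert \cdot \E_{T\mid X}\!\left[(T - \eta_T(X))^2 \,\middle|\, X,\cdot\right]\right],
\end{equation*}
since $|b_1(X)|$ is $\sigma(X)$-measurable and can be pulled outside the inner expectation. Finally, because $\eta_T(X)$ is by definition $\E[T \mid X,\cdot]$, the inner conditional expectation is exactly the conditional variance $\V_{T\mid X}(T \mid X,\cdot)$. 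Combining the two displays yields the claimed identity.

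The only nontrivial step is justifying the interchange of limit and expectation in the first display. Since all random variables are uniformly bounded by $1$, the dominated convergence theorem applies immediately to the product $\lvert \widehat{b}_1(X)\rvert (T - \widehat{\eta}_T(X))^2$, which converges in probability to $\lvert b_1(X)\rvert (T - \eta_T(X))^2$ by consistency and the continuous mapping theorem. Bounded convergence then upgrades convergence in probability to convergence in expectation, closing the argument. The remaining reasoning is purely algebraic and relies only on the defining property of the conditional mean as the minimizer of the $L^2$ risk.
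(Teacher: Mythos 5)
Your proposal is correct and follows essentially the same route as the paper's proof: condition on $X$ via the tower property, invoke consistency of $\widehat{b}_1$ and $\widehat{\eta}_T$ to pass to the true functions, and identify the inner expected squared error with the conditional variance since $\eta_T(X)$ is the conditional mean. Your explicit justification of the limit--expectation interchange via boundedness and dominated convergence is in fact more careful than the paper, which simply asserts the replacement ``by consistency assumption.''
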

\begin{proof}
    For simplicity, we set $T = Y$, since the analysis for $T = \{A,S\}$ is the same.
    Observe, 
    \begin{align*}
        \E_{X,Y}\brc{\lvert\widehat{b}_1 (X)\rvert  (Y - \widehat{\eta}_{Y}(X))^2} &=  \E_{X}[\lvert\widehat{b}_1 (X)\rvert E_{Y|X}[(Y-\widehat{\eta}_Y(X))^2]]\\ 
        &= \E_{X}[\lvert b_1 (X)\rvert  E_{Y|X}[(Y-\eta_Y(X))^2]],\quad\text{by consistency assumption}\\ 
        &= \E_{X}[\lvert b_1 (X)\rvert \V_{Y|X}(Y|X,S=1,A=1,R=0)]. 
    \end{align*}
The last step follows from the definition of variance and the fact that the true function $\eta_Y(X)$ is the conditional mean of the outcomes, $Y$. By the same argument, it follows that $\E_{X,Y}\brc{(Y - \widehat{\eta}_{Y}(X))^2} = \E_{X}[\V_{Y|X}(Y|X,S=1,A=1,R=0)]$. 
\end{proof}

Consider the estimator of the covariance between the magnitude of the bias and the conditional variance of the target variables, 
\begin{align*}
    &\widehat{\bar{\rho}(b_1,T)} = \frac{n}{n-1} \bigg( \frac{1}{n} \sum_{i=1}^n \lvert\widehat{b}_1 (X_i)\rvert (T_i- \widehat{\eta}_{T}(X_i))^2 - \frac{1}{n^2} \sum_{i=1}^n \sum_{j=1}^n \lvert \widehat{b}_1 (X_i)\rvert (T_j - \widehat{\eta}_{T}(X_i))^2 \bigg). 
\end{align*}

\ThmContinuous*
\begin{proof}
    For simplicity, we will proceed for $T = Y$ and $\widehat{\eta}_Y(X)$, i.e. the estimator of the outcome function. The analysis is the same for $S$ and $\widehat{\eta}_S(X)$ as well as $A$ and $\widehat{\eta}_A(X)$. We consider the terms of the expectation below separately. We will first show the {\em asymptotic unbiasedness} of our estimator. Note, 
    \begin{align*}
        &\E_{X,Y}\brc{\widehat{\bar{\rho}(b_1,Y)}}\\
        &= \frac{n}{n-1} \Biggl( \underbrace{\E_{X,Y} \brc{\frac{1}{n} \sum_{i=1}^{n} \lvert\widehat{b}_1 (X_i)\rvert \cdot (Y_i - \widehat{\eta}_Y(X_i))^2}}_{(1)} - \\
        &\qquad\qquad\qquad \underbrace{\E_{X,Y} \brc{\frac{1}{n} \sum_{i=1}^{n} \lvert\widehat{b}_1 (X_i)\rvert \cdot \frac{1}{n} \sum_{i=1}^n (Y_i - \widehat{\eta}_Y(X_i))^2}}_{(2)} \Biggr)
    \end{align*}
    Consider term (1), where we have, 
    \begin{align*}
        \E_{X,Y} \brc{\frac{1}{n} \sum_{i=1}^{n} \lvert\widehat{b}_1 (X_i)\rvert \cdot (Y_i - \widehat{\eta}_Y(X_i))^2} &= \frac{1}{n} \sum_{i=1}^n \E_{X,Y} \brc{\lvert\widehat{b}_1 (X_i)\rvert \cdot (Y_i - \widehat{\eta}_Y(X_i))^2} \\ 
        &= \frac{1}{n} \sum_{i=1}^n \lvert\widehat{b}_1 (X_i)\rvert \cdot (Y_i - \widehat{\eta}_Y(X_i))^2 \\
        &= \E_{X,Y} \brc{ \lvert\widehat{b}_1 (X)\rvert \cdot (Y - \widehat{\eta}_Y(X))^2 }\\
        &= \E_{X} \brc{ \lvert b_1 (X)\rvert \cdot \V(Y|X,S=1,A=1,R=0)}, \\ 
        &\qquad\qquad \text{by Lemma~\ref{lem:squared-error}}
    \end{align*}

    Subsequently, we will write $\V(Y|X,S=1,A=1,R=0)$ as $\V(Y|X,\cdot)$. For term (2), we have, 
    \begin{align}
        &\E_{X,Y} \brc{\frac{1}{n} \sum_{i=1}^{n} \lvert\widehat{b}_1 (X_i)\rvert \cdot \frac{1}{n} \sum_{i=1}^n (Y_i - \widehat{\eta}_Y(X_i))^2}\nonumber\\
        &= \E_{X}\brc{\frac{1}{n} \sum_{i=1}^{n} \lvert\widehat{b}_1 (X_i)\rvert} \E_{X,Y} \brc{\frac{1}{n} \sum_{i=1}^n (Y_i - \widehat{\eta}_Y(X_i))^2} + \Co\para{\frac{1}{n} \sum_{i=1}^{n} \lvert\widehat{b}_1 (X_i)\rvert, \frac{1}{n} \sum_{i=1}^n (Y_i - \widehat{\eta}_Y(X_i))^2} \nonumber\\ 
        &= \E_X\brc{\lvert\widehat{b}_1(X)\rvert} \E_X\brc{\V(Y|X,\cdot)} + \Co\para{\frac{1}{n} \sum_{i=1}^{n} \lvert\widehat{b}_1(X_i)\rvert, \frac{1}{n} \sum_{i=1}^n (Y_i - \widehat{\eta}_Y(X_i))^2}, \nonumber\\
        &\hspace{300pt}\text{ by Lemma~\ref{lem:squared-error}} \nonumber\\ 
        &= \E_{X}[\lvert\widehat{b}_1(X)\rvert]\E_{X}[\V(Y|X,\cdot)] - \E\brc{\frac{1}{n} \sum_{i=1}^{n} \lvert\widehat{b}_1(X_i)\rvert} \E\brc{\frac{1}{n} \sum_{i=1}^n (Y_i - \widehat{\eta}_Y(X_i))^2} \nonumber\\
        &\qquad\qquad + \E\brc{\frac{1}{n} \sum_{i=1}^{n} \lvert\widehat{b}_1(X_i)\rvert \cdot \frac{1}{n} \sum_{i=1}^n (Y_i - \widehat{\eta}_Y(X_i))^2}\nonumber\\ 
        &= \E_{X}[\lvert\widehat{b}_1(X)\rvert]\E_{X}[\V(Y|X,\cdot)] - \E_X\brc{\lvert\widehat{b}_1(X)\rvert} \E_{X,Y}\brc{(Y - \widehat{\eta}_Y(X))^2}\nonumber\\
        &\qquad\qquad + \E\brc{\frac{1}{n} \sum_{i=1}^{n} \lvert\widehat{b}_1(X_i)\rvert \cdot \frac{1}{n} \sum_{i=1}^n (Y_i - \widehat{\eta}_Y(X_i))^2}\nonumber\\ 
        &= \E_{X}[\lvert\widehat{b}_1(X)\rvert]\E_{X}[\V(Y|X,\cdot)] - \E_{X}[\lvert\widehat{b}_1(X)\rvert]\E_{X}[\V(Y|X,\cdot)]\nonumber\\
        &\qquad\qquad + \E\brc{\frac{1}{n} \sum_{i=1}^{n} \lvert\widehat{b}_1(X_i)\rvert \cdot \frac{1}{n} \sum_{i=1}^n (Y_i - \widehat{\eta}_Y(X_i))^2}, \text{ by Lemma~\ref{lem:squared-error}}\nonumber\\ 
        &= \E\brc{\frac{1}{n^2} \sum_{i=1}^{n} \sum_{j=1}^{n} \lvert\widehat{b}_1(X_i)\rvert (Y_j - \widehat{\eta}_Y(X_j))^2 }\nonumber\\ 
        &= \frac{1}{n^2} \sum_{i=1}^{n} \sum_{j=1}^{n} \lvert\widehat{b}_1(X_i)\rvert  (Y_j - \widehat{\eta}_Y(X_j))^2 \nonumber\\ 
        &= \frac{1}{n^2} \para{n \E[\lvert\widehat{b}_1(X)\rvert (Y - \widehat{\eta}_Y(X))^2] + (n^2 - n) \E[\lvert\widehat{b}_1(X)\rvert] \E[(Y - \widehat{\eta}_Y(X))^2]} \label{eq:secondtolast}\\
        &= \frac{1}{n^2} \para{n \E[\lvert b_1(X)\rvert \V(Y|X,\cdot)] + (n^2 - n) \E[\lvert b_1(X)\rvert] \E[\V(Y|X,\cdot)]},\nonumber\\
        &\qquad\qquad \text{ by Lemma~\ref{lem:squared-error} and consistency of $\widehat{b}_1$}\nonumber
    \end{align}

    \eqref{eq:secondtolast} is derived from the fact that when $i \ne j$, $\lvert b_1(X_i)\rvert$ and $Y_j - \widehat{\eta}_Y(X_j)$ are independent, which occurs in $n^2 - n$ of the $n^2$ terms. The remaining $n$ terms are dependent, since $i=j$ and they rely on the same $X_i$. We combine the terms to get the desired result, 
    \begin{align*}
        &\E_{X,Y}\brc{\widehat{\bar{\rho}(b_1,Y)}} \\
        &= \frac{n}{n-1} \para{\E\brc{\lvert b_1 (X)\rvert \V(Y|X,\cdot)} - \frac{1}{n} \E[\lvert b_1 (X)\rvert \V(Y|X,\cdot)] -  \frac{n-1}{n}\E[\lvert b_1 (X)\rvert]\E[\V(Y|X,\cdot)]}\\ 
        &= \E\brc{\lvert b_1 (X)\rvert \V(Y|X,\cdot)} - \E[\lvert b_1 (X)\rvert]\E[\V(Y|X,\cdot)] \\ 
        &= \Co(\lvert b_1 (X)\rvert, \V(Y|X,\cdot))\\
        &= \bar{\rho}(b_1,Y).
    \end{align*}

    Next, note that our estimator’s variance converges to zero since it is a linear combination of sample means of i.i.d. bounded random variables. Asymptotic unbiasedness, which we show above, together with the vanishing variance imply consistency. The result follows in a few lines using Chebyshev's inequality and the triangle rule as follows.
    
    Let $\hat{\theta}_n$ be a sequence of estimators for $\theta$ such that:

\[
\lim_{n \to \infty} \mathbb{E}[\hat{\theta}_n] = \theta \quad \text{(asymptotically unbiased)}
\]

\[
\lim_{n \to \infty} \text{Var}(\hat{\theta}_n) = 0 \quad \text{(vanishing variance)}
\]

For any $\varepsilon > 0$, by Chebyshev's inequality:

\[
P(|\hat{\theta}_n - \mathbb{E}[\hat{\theta}_n]| \geq \varepsilon) \leq \frac{\text{Var}(\hat{\theta}_n)}{\varepsilon^2}
\]

By triangle inequality:

\[
|\hat{\theta}_n - \theta| \leq |\hat{\theta}_n - \mathbb{E}[\hat{\theta}_n]| + |\mathbb{E}[\hat{\theta}_n] - \theta|
\]

For sufficiently large $n$, by asymptotic unbiasedness, $|\mathbb{E}[\hat{\theta}_n] - \theta| < \frac{\varepsilon}{2}$. Therefore:

\[
P(|\hat{\theta}_n - \theta| \geq \varepsilon) \leq P\left(|\hat{\theta}_n - \mathbb{E}[\hat{\theta}_n]| \geq \frac{\varepsilon}{2}\right) \leq \frac{4 \cdot \text{Var}(\hat{\theta}_n)}{\varepsilon^2}
\]

Taking the limit as $n \to \infty$:

\[
\lim_{n \to \infty} P(|\hat{\theta}_n - \theta| \geq \varepsilon) \leq \lim_{n \to \infty} \frac{4 \cdot \text{Var}(\hat{\theta}_n)}{\varepsilon^2} = 0
\]

which proves $\hat{\theta}_n$ is consistent and we are done.
\end{proof}
\subsection{Additional DAGs for Selection Bias Type 2}
\label{sec:add-dags-t2sb}
See \Cref{fig:bias_types_sb2} for additional DAGs reflecting type 2 selection bias. 

\begin{figure}[t]
    \centering
    \begin{subfigure}{0.3\textwidth}
        \centering
        \begin{tikzpicture}
            \node[state] (x) [] {$X$};
            \node[state] (a) [below left =of x,xshift=0.6cm, yshift=0.3cm] {$A$};
            \node[state] (y) [right =of a] {$Y$};
            \node[state] (s) [right =of y, xshift=0cm] {$S$};
            
            \path (x) edge  (a);
            \path (x) edge  (y);
            \path (x) edge  (s);
            \path (a) edge  (y);

             \path (y) edge[red]  (s);
             \path[red] (a) edge[bend right=55] (s);
        \end{tikzpicture}
        \label{fig:selection_bias_2_base}
        \caption{Selection Bias – Type 2 \\ (Base)}
    \end{subfigure}%
    \hfill
    \begin{subfigure}{0.3\textwidth}
        \centering
        \begin{tikzpicture}
            \node[state] (x) [] {$X$};
            \node[state] (a) [below left =of x,xshift=0.6cm, yshift=0.3cm] {$A$};
            \node[state] (y) [right =of a] {$Y$};
            \node[state] (s) [right =of y, xshift=0cm] {$S$};
            \node[state,red] (e) [below right =of a, yshift=.5cm] {$E$}; 
            
            \path (x) edge  (a);
            \path (x) edge  (y);
            \path (x) edge  (s);
            \path (a) edge  (y);

             \path (y) edge[red]  (s);
             \path[red] (a) edge[bend right=25] (e);
             \path[red] (e) edge[bend right=25] (s);
             
        \end{tikzpicture}
        \label{fig:selection_bias_2_mediator}
        \caption{Selection Bias – Type 2 \\ (Mediator)}
    \end{subfigure}
    \hfill
    \begin{subfigure}{0.3\textwidth}
        \centering
        \begin{tikzpicture}
            \node[state] (x) [] {$X$};
            \node[state] (a) [below left =of x,xshift=0.6cm, yshift=0.3cm] {$A$};
            \node[state] (y) [right =of a] {$Y$};
            \node[state] (s1) [right =of y, xshift=-0.5cm] {$S_1$};
            \node[state,red] (s2) [below =of s1, yshift=0.5cm] {$S_2$};
            
            \path (x) edge  (a);
            \path (x) edge  (y);
            \path (x) edge  (s1);
            \path (a) edge  (y);

             \path (y) edge[red]  (s1);
             \path[red] (a) edge[bend right=25] (s1);
             \path[red] (s1) edge[red] (s2);
        \end{tikzpicture}
        \label{fig:selection_bias_3_downstream}
        \caption{Selection Bias – Type 2 (Downstream Effect)}
    \end{subfigure}%
    \caption{Common equivalent variants of Selection Bias Type 2.}
    \label{fig:bias_types_sb2}
\end{figure}
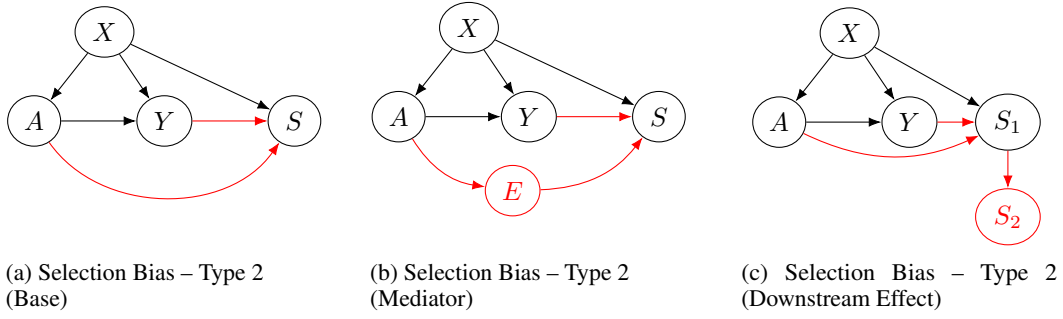

\section{Additional Experimental Results} \label{app:asr}
We ran all experiments on a standard 12-core CPU machine. 

\subsection{Synthetic Results} \label{app:moresyn}
\begin{figure}[ht]
    \centering
    \begin{subfigure}[b]{0.44\linewidth}
        \centering
        \includegraphics[width=\linewidth]{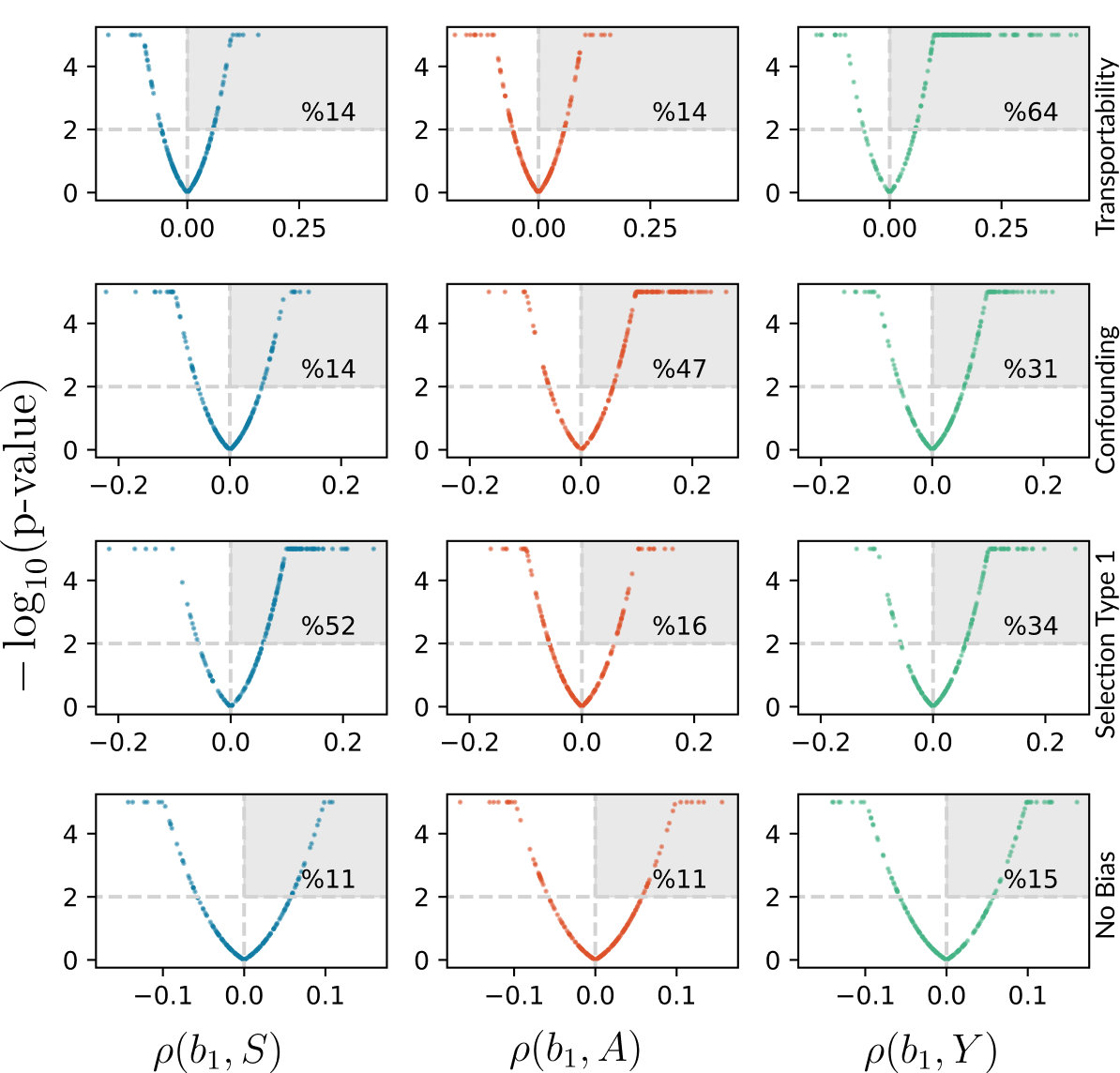}
        \caption{$n_{\text{rct}} = 2000, d=5$}
        \label{fig:2kappd5}
    \end{subfigure} \hspace{20pt}
    \begin{subfigure}[b]{0.44\linewidth}
        \centering
        \includegraphics[width=\linewidth]{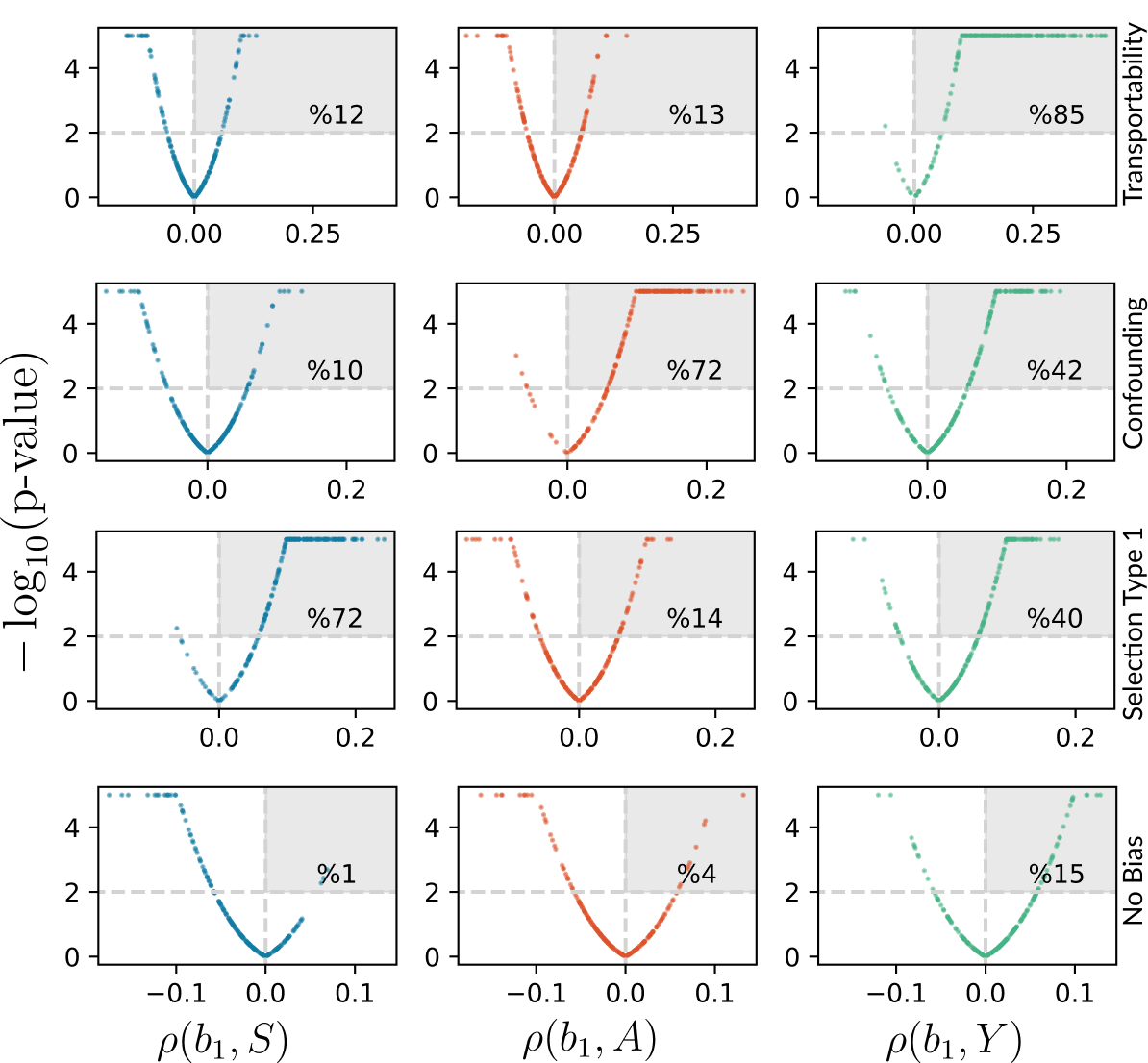}
        \caption{$n_{\text{rct}} = 50000, d=5$}
        \label{fig:50kappd5}
    \end{subfigure}
    \begin{subfigure}[b]{0.44\linewidth}
        \centering
        \includegraphics[width=\linewidth]{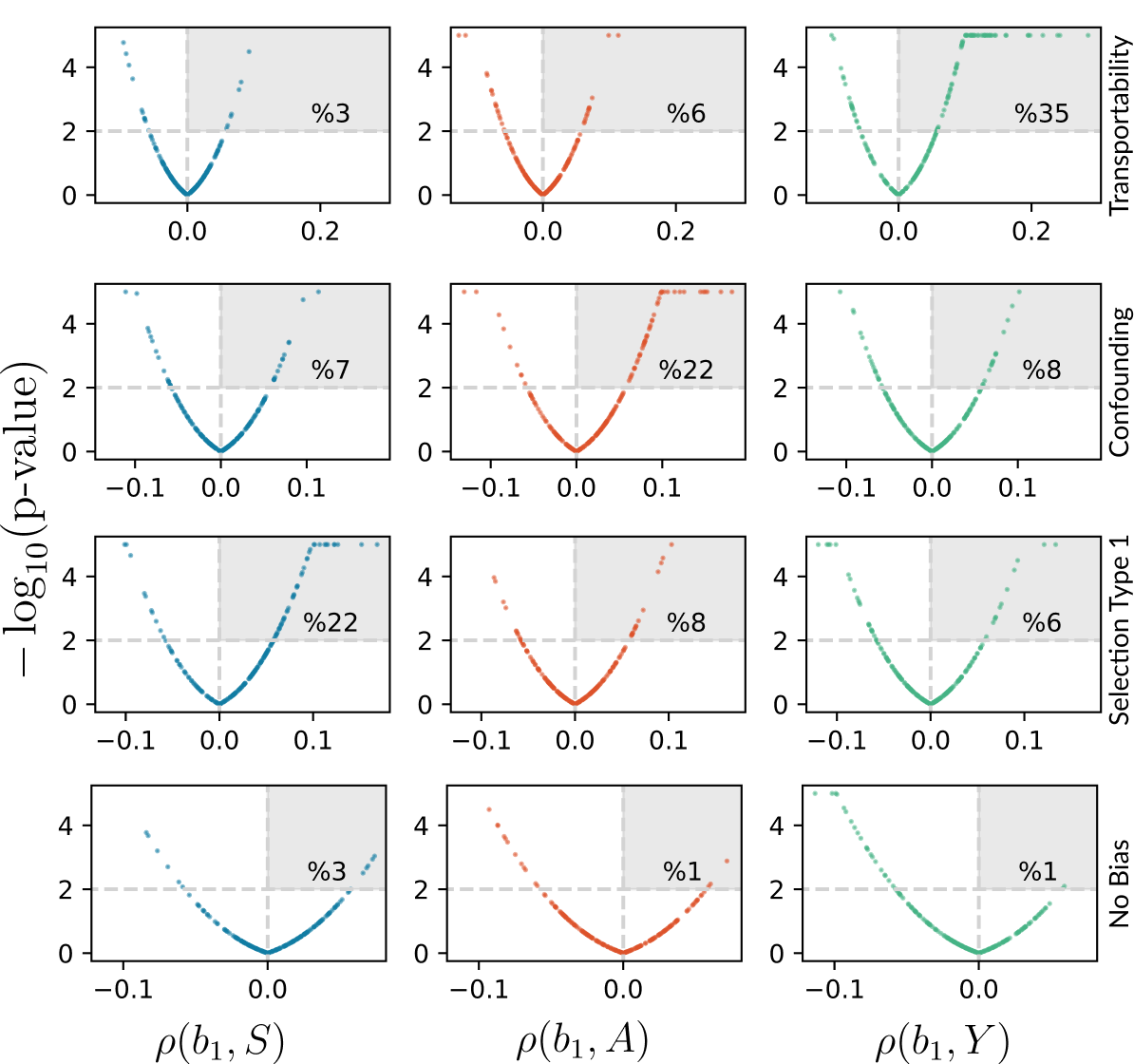}
        \caption{$n_{\text{rct}} = 2000, d=7$}
        \label{fig:2kappd7}
    \end{subfigure}\hspace{20pt}
    \begin{subfigure}[b]{0.44\linewidth}
        \centering
        \includegraphics[width=\linewidth]{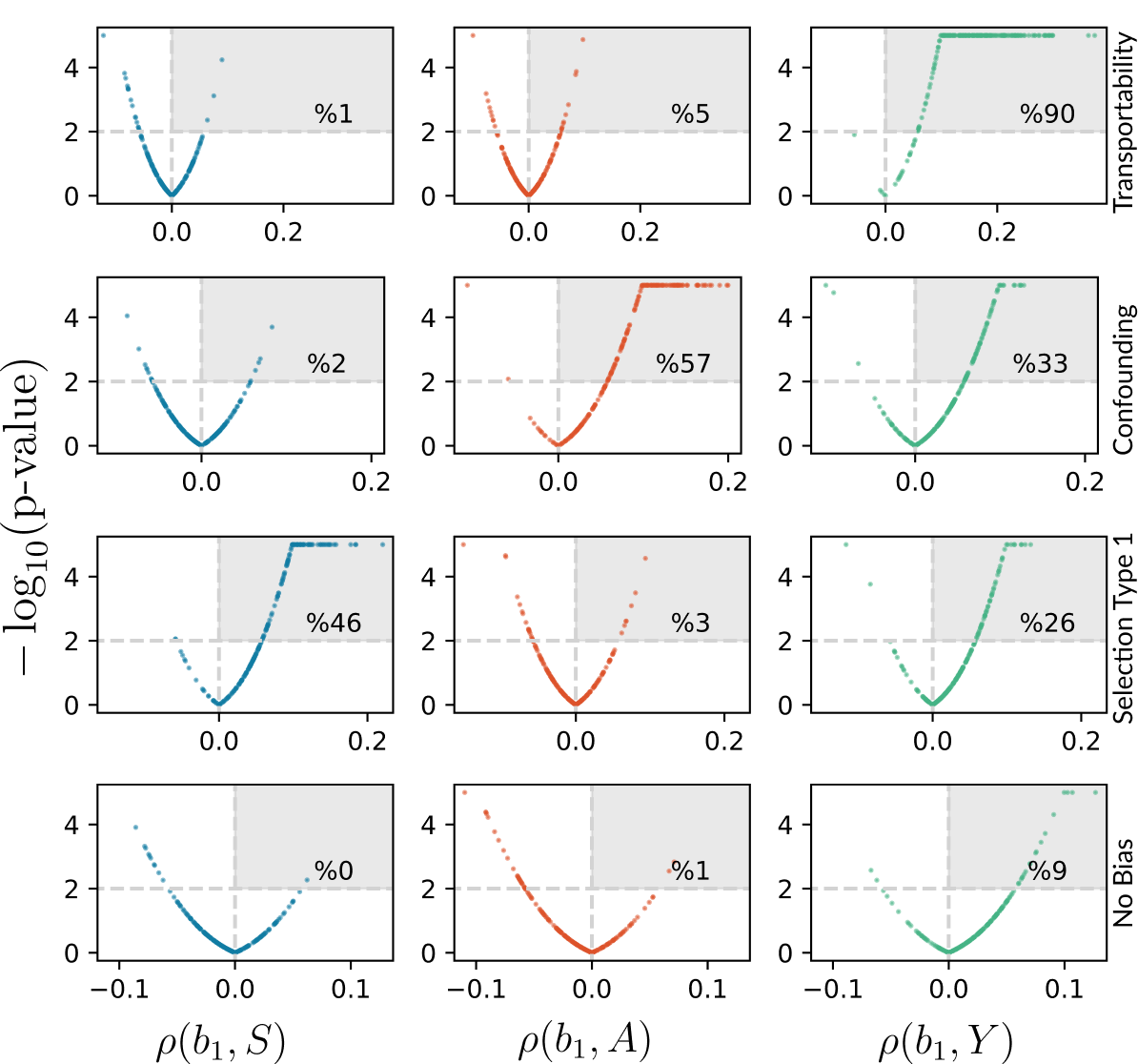}
        \caption{$n_{\text{rct}} = 50000, d=7$}
        \label{fig:50kappd7}
    \end{subfigure}
    \caption{Covariance signals for synthetic experiments with different covariate dimensionalities $d$ and RCT sample size $n_{\text{rct}}$.}
    \label{fig:appsynth}
\end{figure}

\paragraph{Synthetic settings with varying dimensionalities and selection mechanisms:} We conduct the experiments in \Cref{fig:synthetic} with different dimensionalities of the covariates $d \in \{ 5, 7 \}$ ($d=6$ in the main paper) and RCT sample size $n_{\text{rct}} \in \{2000, 50000\}$. The results are presented in \Cref{fig:appsynth}. As dimensionality $d$ increases, the power decreases (and type-1 errors increase), {\em i.e,} it becomes harder to spot the correlation signal statistically significantly. As expected, we observe a similar trend when $n_{\text{rct}}$ decreases, as that leads to larger uncertainty in estimating $\widehat{g}_1(X)$.

We additionally give results for different versions of type 2 selection mechanisms in \Cref{fig:appsynth-sel} (see \Cref{sec:type2_appendix} for analytical results). In general, we find that the sign of the correlations changes with the type of selection mechanism, though is always non-zero.

\paragraph{Synthetic settings with combinations of biases:} In real-world settings, there may be multiple biases present. We experiment with four different combinations of biases: 1) selection bias type 1 and confounding bias, 2) transportability bias and confounding bias, 3) selection bias type 2 and confounding bias, and 4) selection bias type 1 and selection bias type 2. We separately consider the setting with both selection bias type 2 and transportability bias in the next section, \Cref{sec:whi-appendix}. Simulation of each bias in a combination is done as described in the main paper in \Cref{sec:synthetic}. 

\begin{wrapfigure}{r}{0.5\textwidth}
\vspace{-1em} 
\begin{minipage}{0.48\textwidth}
  \hrule
  \hrule
  \vspace{.5em}
  \textbf{Algorithm 2:} Generative Model in the OS \\
  \vspace{-.5em}
  \hrule
  \begin{algorithmic}
        \STATE \textbf{Input:} $\orange{T} \in \{ S,A,Y^0, Y^1 \}$; Boolean \orange{$U$-bias}; \\ Bernoulli Parameter Distribution \orange{${\cal F}$}.
        \hrulefill
        \STATE Let $\orange{p^{T}_{x, u}} \coloneqq P(T\!=\!1 | X\!=\!x, U\!=\!u, R\!=\!0)$
        \FOR{$x \in {\cal X}$}
            \STATE $u \sim \texttt{Unif}(0,1)$ 
            
                \IF{$U$-bias}
                    \STATE $p_1 \sim {\cal F}$, $p_0 \sim {\cal F}$
                    \STATE $p^{T}_{x,u} = u\cdot p_1 + (1-u)\cdot p_0$
                \ELSE
                    \STATE $p^{T}_{x,u} \sim {\cal F}$
                \ENDIF
        \ENDFOR
  \end{algorithmic}
  \hrule
  \hrule
\end{minipage}
\vspace{-1em}
\end{wrapfigure}

Our experiments, shown in~\Cref{fig:syntheticapp}, yield several noteworthy insights. Firstly, when transportability bias is combined with confounding bias, the resulting signals can be a blend of the individual effects. Recall that in our analysis, we observed that with only transportability bias, $\rho(b_1,Y)>0$ and $\rho(b_1,A)=0$, and with only confounding bias, $\rho(b_1,Y)>0$ and $\rho(b_1,A)>0$. Consequently, when both biases are present, we expect $\rho(b_1,A)$ to be somewhat reduced compared to the confounding-only case. This effect is reflected in our results, where   
the percentage of runs showing statistically significant correlations in the biased region drops from $96\%$ (single bias) to $48\%$ (combined biases). 

Secondly, we find that the effect of having only selection bias type 2 on the correlation signals is relatively indistinguishable from the effect when selection bias type 2 is combined with another bias (such as selection bias type 1 or confounding bias).
Finally, when selection bias type 1 is combined with confounding bias, all examined correlations turn positive, i.e. $\rho(b_1,S)>0,\rho(b_1,A)>0,\rho(b_1,Y) >0$. This is a natural result given that there is an unobserved covariate, $U$, influencing $A,Y$ \textit{and} $S$. 

\begin{figure*}[t]
    \centering
    \begin{subfigure}[b]{0.44\linewidth}
        \centering
        \includegraphics[width=\linewidth]{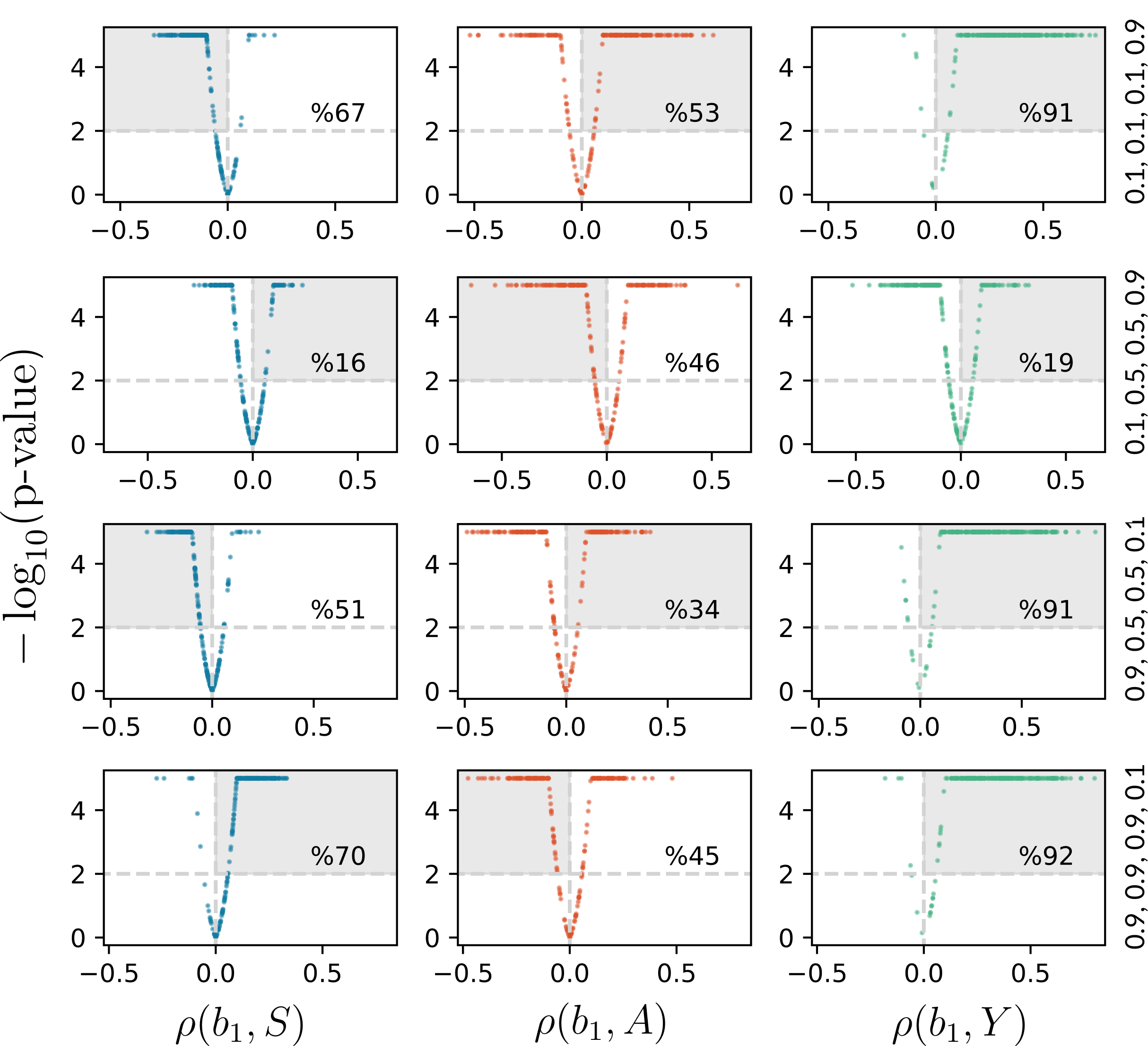}
        \caption{$d=2$}
        \label{fig:d2-select-app}
    \end{subfigure} \hspace{20pt}
    \begin{subfigure}[b]{0.44\linewidth}
        \centering
        \includegraphics[width=\linewidth]{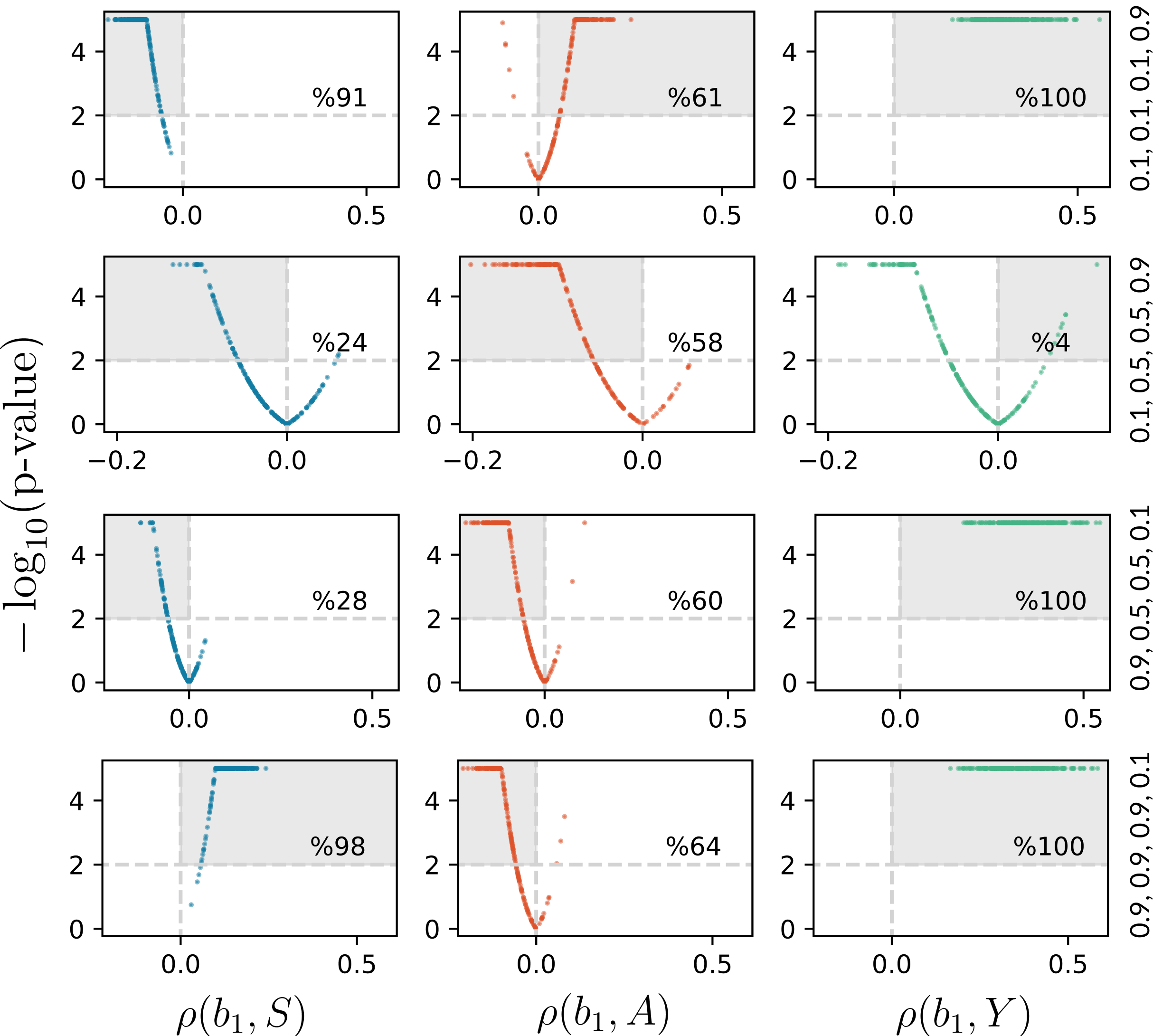}
        \caption{$d=6$}
        \label{fig:d6-select-app}
    \end{subfigure}
    \caption{Covariance signals for synthetic experiments with different selection mechanisms and covariate dimensionalities $d$.}
    \label{fig:appsynth-sel}
\end{figure*}

\paragraph{Synthetic settings with continuous $U$:} In the main paper, we assume a binary, unobserved confounder $U$ that captures residual effects of multiple unmeasured confounders, primarily to simplify the theoretical analysis. Here, we relax this assumption and extend our synthetic experiments to allow for a continuous $U$. The revised generative model is shown in Algorithm 2. Intuitively, when unobserved confounding is present in a patient subgroup (indicated by the $U$-bias flag), the probability of the target variable becomes a convex combination of two samples from $\mathcal{F}$, weighted by $U \sim \texttt{Unif(0,1)}$. 

\begin{figure}[htbp]
    \centering
    \includegraphics[width=.8\linewidth]{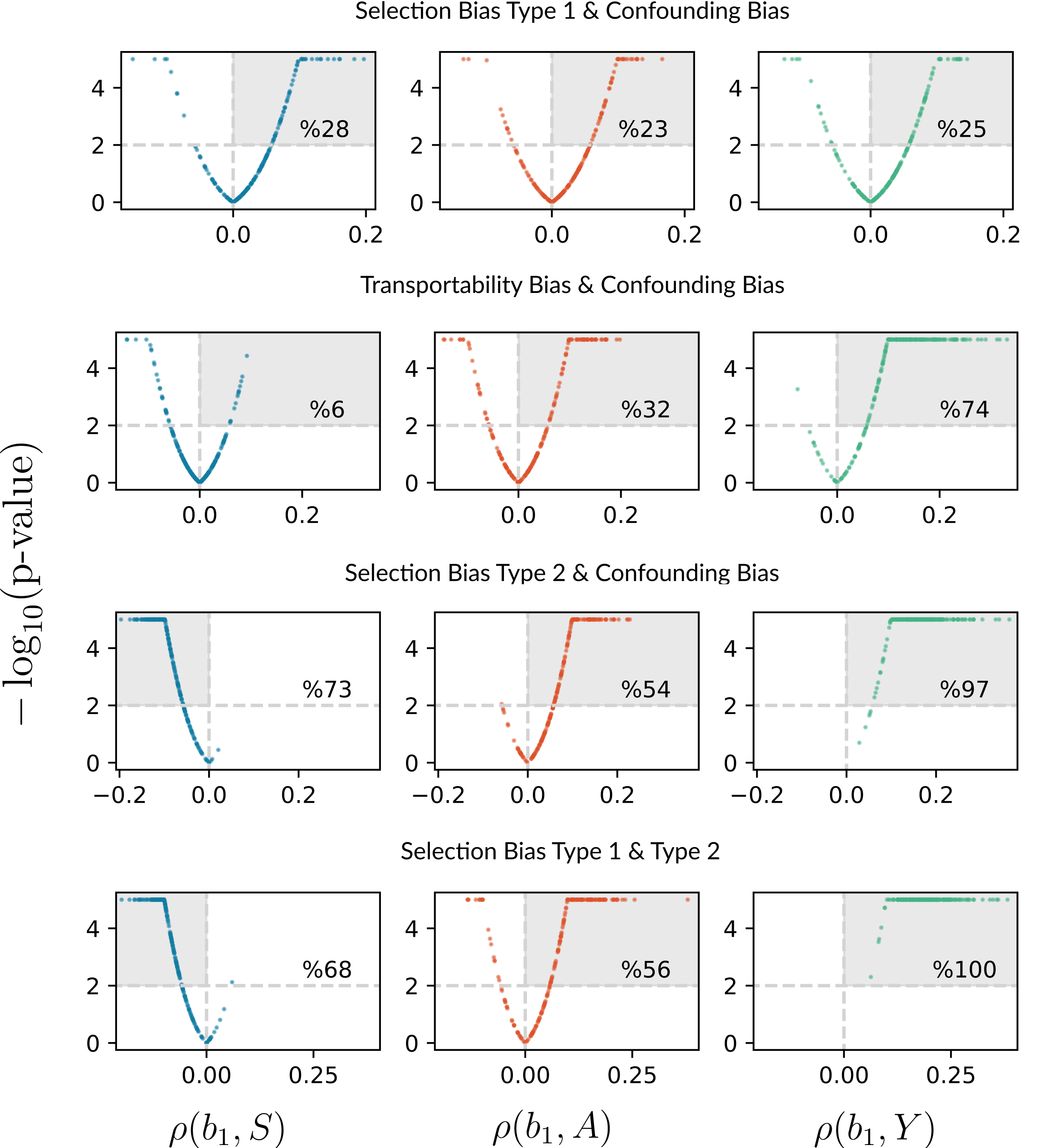}
    \caption{Covariance signals for synthetic settings involving different combinations of biases.}
    \label{fig:syntheticapp}
    \vspace{-10pt}
\end{figure}

\begin{figure}[tbp]
    \centering
    \includegraphics[width=.8\linewidth]{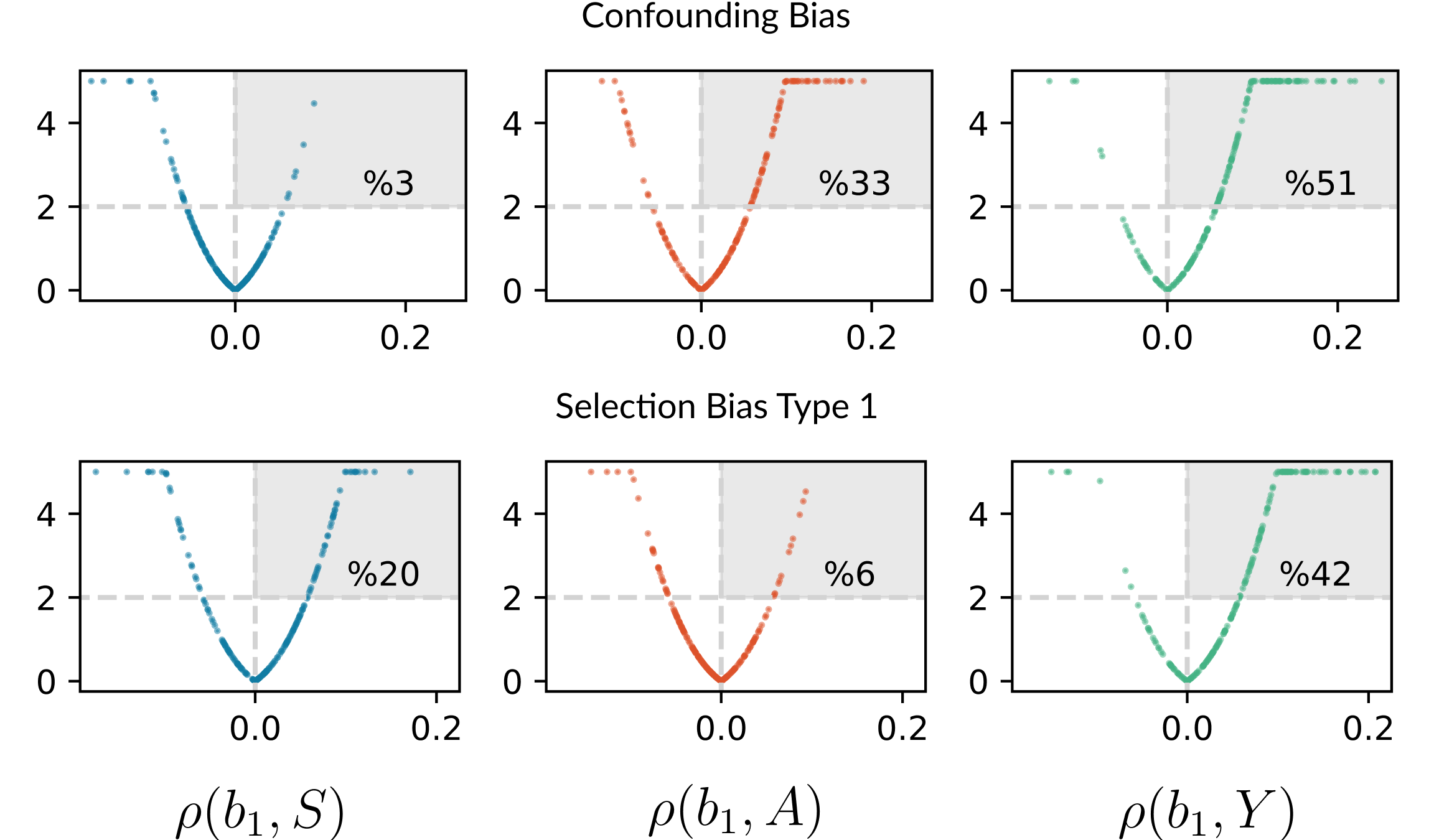}
    \caption{Covariance signals for synthetic setting with continuous $U$.}
    \label{fig:u-continuous}
\end{figure}

We retain the same experimental parameters as in the main paper: $X\in \{0,1\}^d$ with $d=6$, $n_{\text{rct}} = 50000$, $n_{\text{os}} = 50000$, and $n_{\text{val}} = 2000$. The resulting covariance signals, shown in~\Cref{fig:u-continuous}, align with our theoretical predictions, offering further empirical evidence that our framework extends to settings with continuous unobserved confounding.

\subsection{WHI Results}
\label{sec:whi-appendix}

\paragraph{Type 2 selection bias and transportability bias:} To test the hypothesis that residual transportability bias in the WHI experiments yields an amplified $\rho(b_1,Y)$ signal after correcting for the type 2 selection bias, we conduct a synthetic experiment that incorporates both type 2 selection bias and transportability bias to replicate the WHI setting. 

For the type 2 selection bias\footnote{for ease of notation, we denote $P(S=1|Y=y,A=a)$ as $p^S_{ya}$}, we experiment with two different configurations: 1) $p^S_{00} = 0.1, p^S_{01} = 0.1,p^S_{10} = 0.1,p^S_{11} = 0.9$ and 2) $p^S_{00} = 0.9, p^S_{01} = 0.9, p^S_{10} = 0.3, p^S_{11} = 0.1$. One example of the former configuration is Berkson's bias, where people who get exposed to the treatment and have the outcome, e.g. hospitalization, are more likely to be selected. The latter configuration is a rough approximation of the selection mechanism in the WHI OS. Specifically, it is designed to mimic a scenario where patients who are treated and subsequently experience an adverse event are more likely to be filtered out, while patients who either were treated without adverse outcomes or did not receive treatment are preferentially selected.

Following the approach used in our previous experiments, we generate plots that show pairs of Pearson correlation coefficients and their corresponding $p$-values for each run. These plots indicate the percentage of runs where the observed correlation coefficients match the predictions outlined in \Cref{table:bias_table} and \Cref{eq:t2sb_c1} (detailed in \Cref{sec:type2_appendix}), using a significance threshold of $p < 0.01$. We repeat this experiment under the alternative selection probabilities (second row) as well as under conditions designed to minimize selection bias by increasing the selection probabilities to $0.99$ (third row). In all scenarios, we calculate the average correlation for runs falling within the positive or negative region depending on signal, e.g. negative region for $\rho(b_1,A))$, with error bars representing standard deviation across these runs. The complete results are shown in \Cref{fig:seltype2-trans}. 
Importantly, we find that $\rho(b_1,Y)$ tends to increase when minimizing the type 2 selection bias regardless of the selection probabilities, supporting our hypothesis.

\begin{figure}[t]
    \centering
    \includegraphics[width=.9\linewidth]{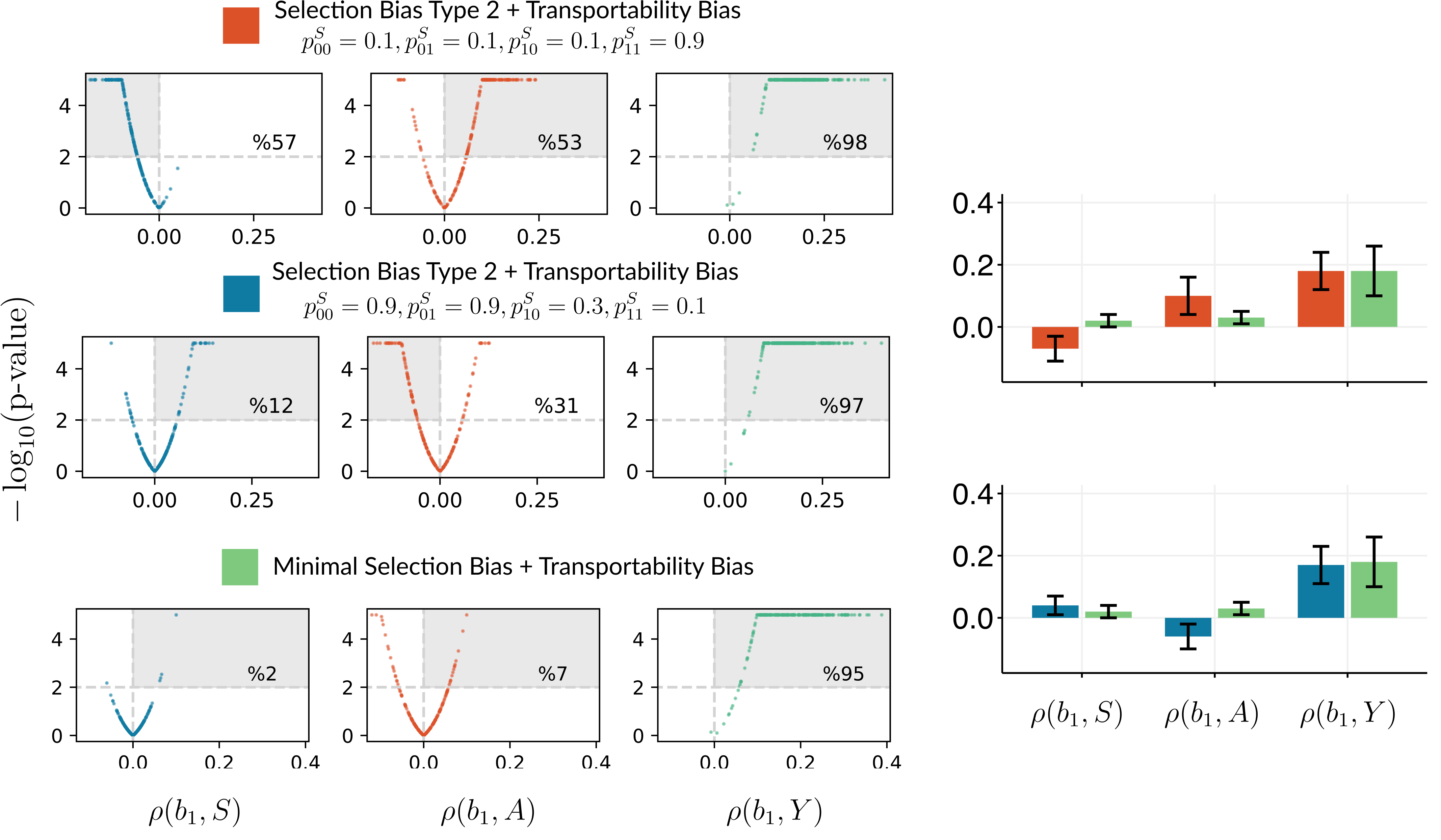}
    \caption{Left: Covariance signals under different synthetic settings: top two have type 2 selection bias (with differing selection probabilities) and transportability bias, bottom has minimal selection bias. Right: Average correlation over runs in positive or negative region depending on signal.}
    \label{fig:seltype2-trans}
    \vspace{-10pt}
\end{figure}

\paragraph{Testing distributional assumption in the WHI data:}
\label{sec:whi-dist}
One important question is how well our theoretical results apply to a real-world dataset like WHI, given that our analysis relies on a specific generative model that may not perfectly capture reality. 
To further justify our generative model, we perform a simple analysis to test our assumption about the underlying sampling distribution in our real-world experiment–specifically the feasibility of modeling the conditional distributions of downstream variables, such as $p(A\mid X,U)$, with a low-uncertainty distribution, $\mathcal{F} (p)\sim \mathcal{F}$ (see \Cref{eq:ddef}).

As shown in~\Cref{fig:test-whi-A}, we plot $P(A \mid X)$ conditioning on different ``sequences'' of subgroups after starting initially with the marginal distribution in the overall population. These subgroups are LLM-generated with a prompt focused towards subgroups that confer an increased or decreased risk for CHD. We use Claude-3.7-Sonnet to generate subgroup sequences. By way of example, the probability of treatment assignment is $33\%$ for the overall WHI observational cohort, relatively close to $50\%$. In the age-based subgroup sequence, the uncertainty in treatment assignment decreases, and the empirical probability of treatment assignment drops. For instance, when first conditioning on $\text{age} > 65$, $P(A=1\mid X)$ decreases from $33\%$ to $22\%$, and then decreases further when conditioning on both $\text{age} > 65$ and elevated BMI. In the case where the subgroups are protective against CHD (bottom row of~\Cref{fig:test-whi-A}, we see that the probability increases towards the ``higher'' region of certainty (i.e. $P(A\mid X) > 0.5$). These patterns show that adding covariates reflecting relevant clinical subgroups reduces uncertainty in treatment assignment, albeit in different directions depending on the baseline risk (for the outcome) in the subgroup.

\begin{figure}[t]
    \centering
    \includegraphics[width=0.85\linewidth]{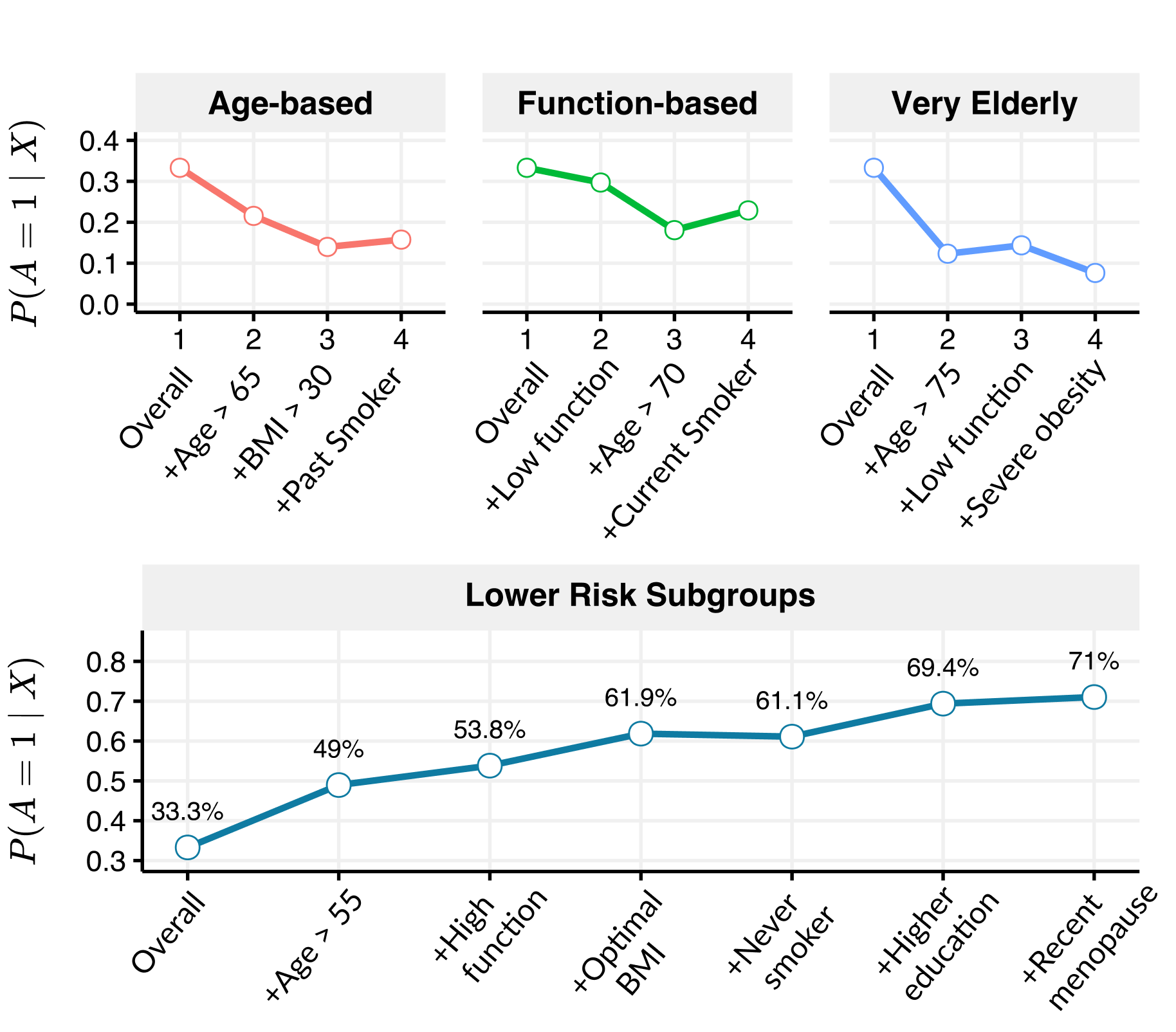}
    \caption{Probability of giving HRT in different sequences of subgroups. On the $x$-axis, we stratify into a specific patient subgroup by adding a constraint on a covariate. Top: Patient subgroups are higher-risk groups for CHD and stroke. Bottom: Patient subgroups are lower-risk groups for CHD and stroke.}
    \label{fig:test-whi-A}
    \vspace{-10pt}
\end{figure}

\paragraph{Additional details for WHI experiments:} In \Cref{tab:whi-replica†ion-results}, we report the average hazard ratios measuring the effect of combined HRT on coronary heart disease (CHD) and stroke outcomes in post-menopausal women. 

\begin{table}[htbp]
  \centering
 \caption{Hazard ratios averaged over population in RCT and OS. We consider ``baseline'' and ``corrected'' settings in OS.}
  \resizebox{\columnwidth}{!}{
  \begin{tabular}{lSSSSSSSS}
    \toprule
    \multirow{2}{*}{Estrogen plus progestin} &
      \multicolumn{2}{c}{\textbf{CHD (Baseline)}} &
      \multicolumn{2}{c}{\textbf{CHD (Corrected)}} &
      \multicolumn{2}{c}{\textbf{Stroke (Baseline)}} & 
      \multicolumn{2}{c}{\textbf{Stroke (Corrected)}}\\
      & HR & \text{95\% CI} & HR & \text{95\% CI} & HR & \text{95\% CI} & HR & \text{95\% CI} \\
      \midrule
    RCT & 1.28 & \text{(1.01,1.61)} & \text{---} & \text{---} & 1.37 & \text{(1.04, 1.8)}  & \text{---} & \text{---} \\
    OS & 0.87 & \text{(0.73,1.03)} & 1.09 & \text{(0.81, 1.45)} & 0.86 & \text{(0.71,1.04)} & 1.29 & \text{(0.95,1.75)} \\
    \bottomrule
  \end{tabular}
  }

  \label{tab:whi-replica†ion-results}
\end{table}

For the nuisance function estimators, we use logistic regression for $\widehat{\eta}_S$ and $\widehat{\eta}_A$, and random forest models for $\widehat{\eta}_Y$ (see \eqref{eq:etas}-\eqref{eq:etay}). We use the default hyperparameters in the scikit-learn implementation of the logistic regression model with maximum iterations of 1000~\citep{sklearn_api}. For the scikit-learn implementation of the random forest, we tune the following hyperparameters on a subset of the training set: 
\begin{itemize}
    \item n-estimators: [100,500]
    \item max-depth: [None, 10, 20]
    \item min-samples-split: [2,10]
\end{itemize}

\section{Related Work} \label{eq:ext_rel_work}
\subsection{Latent variable models}
In the context of latent variable models and graphical models more generally, the main problems, at least relevant to our setting, of interest are parameter estimation. 
Using method of moments for parameter estimation in latent variable models is a common technique. The method of moments paradigm involves 1) computation of certain statistics of the observed data, i.e. means and correlations and 2) finding the model parameters that give rise (approximately) to the same corresponding population quantities. One example of how these operations are executed is through tensor decomposition techniques, as in \citet{anandkumar2014tensor}. At a high level, these methods store low-order moments of the observable data in multidimensional tensors; then, the parameters of the desired model are recovered using tensor decomposition.  \citet{anandkumar2011spectral} assume a causal tree structure, where hidden variables serve as internal nodes and observed variables as leaves. They analyze correlations between observed variables (often through singular values of second moment matrices) to infer relationships through latent variables. See also \citet{ruffini2018new}, who give a nice overview of these kind of methods. 
\subsection{Causal Structure Learning} 
Our work is closely related to the line of work whose goal is to learn a causal structure from observational data and (potentially) experimental data. Our method differs from these approaches in a crucial way. Rather than identifying exact causal DAGs, we focus on categorizing types of bias, where multiple DAGs may be equivalent. There are two broad classes of approaches for structure learning: constraint-based methods and  score-based methods. Below we provide an overview.
 
\paragraph{Score-based methods} The general idea in this line of works is to assign a score to each potential graph, which reflects how well it explains the observed data~\citep{ngscore2022,chobtham2020bayesian}. For instance, \citet{ngscore2022} utilize a scoring function based on the degrees of freedom of a graph (i.e. number of edges in the graph) measuring how well a particular graph explains the empirical covariance matrix  derived from the observed variables.

\paragraph{Constraint-based methods} This line of work uses conditional independence tests of the observed data to infer the underlying causal structure. Some approaches assume causal sufficiency, i.e., assume no selection variables and no unmeasured common causes (e.g., PC, CCD algorithms), while others simply assume the graph is acyclic (e.g., FCI and RFCI algorithms). See \citet{colombo2014order, akbari2021recursive, sadeghi2022conditions} for examples of this approach. The main drawback with these methods in our setting is that the CI-based tests will result in potentially multiple graphs that are Markov equivalent (and there is no way to tell if implied relationships between observed variables are due to e.g. a hidden confounder or selection bias). In contrast, our method can allow practitioners to make this distinction.

\subsection{Distribution Shifts} Our work is similar in spirit to some papers in the literature on distribution shifts but differs in its goals. For instance, \citet{cai2023diagnosing} investigate how much of the decline in prediction performance can be attributed to covariate shift. Similarly, \citet{jin2023diagnosing} attempt to diagnose discrepancies between different studies by breaking them down into components such as sampling variability, observable distribution shifts, and residual factors. However, while their work focuses on quantifying how much of the discrepancy between studies is due to distribution shifts in observed covariates—a specific form of selection bias—we take a broader approach by examining several potential biases, particularly those relevant to the medical domain. 
\subsection{Integrating Evidence from Across Experimental and Observational Studies}
Combining information from RCTs and OSes to develop causally reliable and statistically powerful methods has been of great interest recently. Target trial emulation (TTE), for instance, is a commonly adopted framework to analyze observational data in a principled way \cite{hernan2016using, franklin2021emulating, hernan2022target, wang2023emulation}. 

There is a growing body of work focused on integrating observational and experimental data to enable more reliable, generalizable, and statistically efficient causal inference \cite{bareinboim2016causal, nice2022uk, colnet2024causal, boughdiri2025unified}. Several studies investigate how potentially biased outcome models—learned from observational data—can enhance the statistical power of randomized controlled trials (RCTs) and support the generalization of findings to {\em target} populations or novel experimental settings \cite{kallus2018removing, schuler2022increasing, hatt2022combining, demirel24prediction, cadei2025causal}. Along similar lines, \citet{pmlr-v259-kaul25a} leverage observational data to construct an {\em untrusted prior}, which is then combined with experimental data via {\em conformal prediction} to yield valid and tighter confidence intervals for treatment effects. \citet{guo2022multisource} propose using observational data to construct control variates, reducing variance in ATE estimation within RCTs. Similarly, \citet{wang2025constructing} study the derivation of tighter confidence intervals for the ATE when multiple OSes are available. Lastly, \citet{oberst2023understanding, rosenman2025methods} offer a comprehensive overview of methods that adaptively combine ATE estimates from both experimental and observational data to produce improved hybrid estimators.
\subsection{Benchmarking Observational Studies}
Another line of work focuses on {\em benchmarking}, which involves comparing treatment effect estimates from observational studies (OSes) to those from randomized controlled trials (RCTs) \cite{hartman2015sample, dahabreh2020benchmarking}. Benchmarking methods play a crucial role in assessing the reliability of OSes before they are used in downstream decision-making. Moreover, systematic reviews of benchmarking efforts can offer broader insights into the practical strengths and limitations of OS-based analyses \cite{forbes2020benchmarking, wang2023emulation}.

\citet{de2024detecting,de2024hidden} quantify the bias in an OS by directly comparing its estimates to those from an RCT. Rather than focusing on population-level comparisons, \citet{hussain2022falsification} assess ATE estimates across various subgroups in both studies. Expanding on this approach, \citet{hussain2023falsification, demirel2024benchmarking} compare {\em conditional} ATEs, while enabling {\em automatic} identification of patient subgroups where the OS and RCT estimates diverge. Other work sidesteps the need for individual-level RCT data entirely: \citet{karlsson2024detecting, karlsson2025falsification, xiao2024addressing} show that hidden confounding can be detected—and in some cases mitigated—using multiple OSes under alternative assumptions. Finally, \citet{fawkes2025hardness} establish fundamental limits on the extent to which RCTs can be used to falsify or validate observational findings.
%


\end{document}